\begin{document}

\title{A Survey on Delay-Aware Resource Control for Wireless
Systems --- Large Deviation Theory, Stochastic Lyapunov Drift and Distributed
Stochastic Learning}

\newtheorem{Thm}{Theorem}
\newtheorem{Lem}{Lemma}
\newtheorem{Cor}{Corollary}
\newtheorem{Def}{Definition}
\newtheorem{Alg}{Algorithm}
\newtheorem{Prob}{Problem}
\newtheorem{Rem}{Remark}
\newtheorem{Proof}{Proof}
\newtheorem{Ass}{Assumption}
\newtheorem{Exam}{Example}

\author{Ying Cui\thanks{Ying Cui and Vincent K. N. Lau are with  Department of Electronic and Computer Engineering,
Hong Kong University of Science and Technology, Hong Kong. Rui Wang, Huang Huang
 and Shunqing Zhang
are with Huawei Technologies Co., Ltd., China. This work was supported by GREAT Project, Huawei Technologies Co., Ltd.} \, Vincent K. N. Lau  \, Rui Wang   \, Huang Huang
\, Shunqing Zhang}

\maketitle

\begin{abstract}
In this tutorial paper, a comprehensive survey is given on several
major systematic approaches in dealing with
delay-aware control problems, namely the {\em
equivalent rate constraint} approach, the {\em Lyapunov stability
drift} approach and the {\em approximate Markov
Decision Process (MDP)} approach using {\em
stochastic learning}. These approaches essentially embrace
most of the existing literature regarding delay-aware resource
control in wireless systems. They have their relative pros and cons
in terms of performance, complexity and implementation issues.
For each of the approaches, the problem setup, the
general solution and the design methodology are discussed.  Applications of these approaches to delay-aware resource allocation are
illustrated with examples in single-hop wireless networks.
Furthermore, recent results regarding delay-aware multi-hop routing
designs in general multi-hop networks  are elaborated. Finally, the
delay performance of the various approaches are compared through
simulations using an example of the uplink OFDMA
systems.
\end{abstract}

\begin{keywords}
Delay-aware resource control,  large deviation theory, Lyapunov stability, Markov decision process, stochastic learning.
\end{keywords}

\section{Introduction}\label{sec_intro}
There is plenty of literature on
cross-layer resource optimization in wireless systems. For example,
there are papers on joint power and subcarrier
allocations to maximize the sum throughput for OFDMA
systems\cite{Svedman07,Pischella08}. There are also
papers on  joint power and precoder optimization to
boost the sum rate, weighted sum MMSE or SINR for MIMO wireless
systems\cite{Palomartxrsbeamforming:2003,PalomarlinearprocessingQos:2004}.
All these papers illustrate that significant throughput gain can be
obtained by joint optimization of radio resource across the Physical (PHY) and the Media Access
Control (MAC) layers. However, a typical assumption
in these papers is that the transmitter  has an infinite backlog and
the information flow is delay insensitive. As a result, these papers focus
only on optimizing the PHY layer performance metrics such as sum throughput, MMSE, SINR or proportional fairness, and the resulting control policy is adaptive to the channel state information (CSI) only.

\begin{figure}
\centering
\includegraphics[width = 14cm]{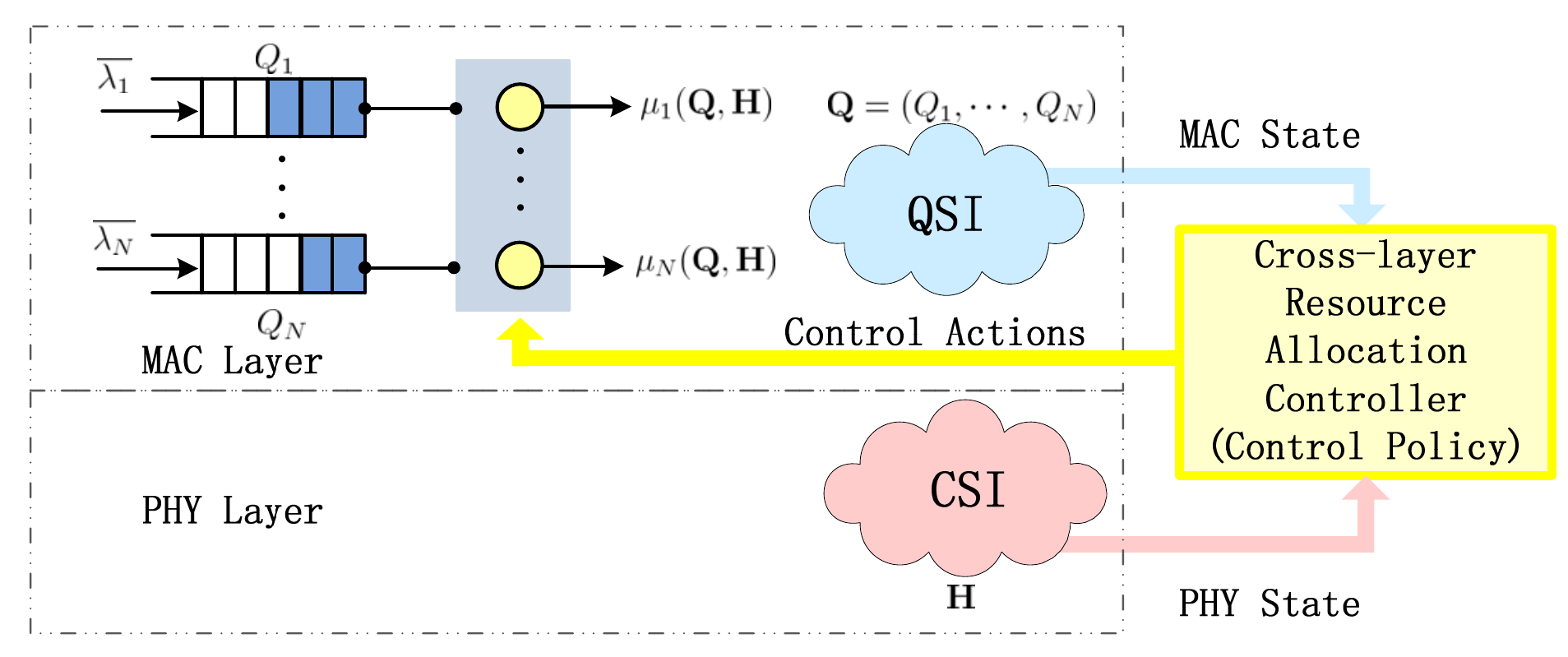}
\caption{Illustration of cross-layer resource allocation with
respect to both the MAC layer state (QSI) and PHY layer state
(CSI).} \label{Fig:Illustrate-Delay-Cross-Layer}
\end{figure}

In practice, it is very important to consider random bursty
arrivals and delay performance metrics in addition
to the conventional PHY layer performance
metrics in cross-layer optimization,
which may embrace the PHY, MAC and network layers.
A combined framework taking into account both queueing delay and
PHY layer performance is not trivial as it involves
both queueing theory (to model the queue dynamics) and   information
theory (to model the  PHY layer dynamics). The
system state involves both the CSI and the queue state information
(QSI) and the delay-optimal control policy should be adaptive to
both the CSI and the QSI of wireless systems as illustrated in
Fig. \ref{Fig:Illustrate-Delay-Cross-Layer}. This design approach
is fundamentally challenging for the following reasons.
First, there may not be closed-form
expressions relating the optimization objective
(such as the average delay) and the optimization variables (power,
precoder, etc). Second, it is not clear if the
optimization  problems are convex (in most cases,
they are not convex). Third, there is the {\em
curse of dimensionality} due to the exponential growth of the
cardinality of the system state space as well as the
large dimension of the control
action space involved (i.e., set of actions). For example,
consider a queueing network with $N$ queues, each with finite buffer
size $N_Q$. The size of the system state space is $\mathcal O
(N_Q^{N})$, which is unmanageable even for small number of users $N$
and buffer length $N_Q$.

There are various approaches to deal with delay-aware resource
control in wireless
networks\cite{MengChiang08:NUMtutorial,LinShroffSrikant06:crosslayertutorial}.
One approach converts average delay
constraints into equivalent average rate
constraints using the large deviation theory and
solves the optimization problem using a purely
information theoretical formulation based on the rate
constraints
\cite{Wu03,Hui07,Tang0705,Tang0708,Tang0806,Melissa08}. While this
approach allows potentially simple solutions, the
resulting control  policies are only
functions of the CSI and such controls are good only for
the large delay regime where the probability of
empty buffers is small. In general,  optimal
control policies should be functions of both the CSI
and QSI. In addition, due to the complex coupling
among  queues in multi-hop wireless networks, it is difficult to
express the average delay in terms of all the control actions.
Therefore, it is  not easy to generalize this approach to joint
resource allocation and routing in multi-hop wireless networks.

A second approach to deal with
delay-aware resource control utilizes the notion of {\em Lyapunov
stability} and establishes {\em throughput-optimal} control
policies (in the stability sense).
The throughput-optimal policies ensure the stability of the queueing network if stability can be  indeed achieved
under any policy. Three classes of policies that
are known to be throughput-optimal include the Max
Weight rule \cite{Georgiadis-Neely-Tassiulas:2006}, the Exponential
(EXP) rule \cite{EXPruleShakkottai-Stolyar:2002} and the Log
rule\cite{logrule:2009}. Among the three classes,
the throughput-optimal property of the Max Weight type algorithms
\cite{StolyarMLWDF:2004} and the Log rule \cite{logrule:2009} are both proved by the theory of {\em Lyapunov
drift}, whereas the EXP rule is
proved to be throughput-optimal by the {\em fluid limit} technique
along with a {\em separation of time scales} argument in
\cite{EXPruleShakkottai-Stolyar:2002}. Specifically, the general Max
Weight type algorithms are proved
to minimize the Lyapunov drift, and hence, are
throughput-optimal. Many dynamic control algorithms
belong to this type, which include
optimizing the allocation of computer resources
\cite{BhattacharyaAdaptiveLexicographicOpt:1993}, and stabilizing
packet switch systems \cite{Keslassy-McKeown:2001,Kumar-Meyn:1995,
Leonardi:INFOCOM:2001, McKeown-M-A-W:1999} and  satellite and
wireless systems
\cite{Tassiulas-Ephremides:1993-2,Andrew-Kumaran-Stolyar:2001,Neely-Modiano-Rohrs:2003}.
The {\em Lyapunov drift} theory (which only focuses
on controlling a queueing network to achieve  stability)
is extended to the {\em Lyapunov optimization}
theory (which enables stability and performance optimization to be
treated
simultaneously)\cite{Neely:PhD:2003,Neely-Modiano-Rohrs:2005,Neely:2006,Georgiadis-Neely-Tassiulas:2006}.
For example, utilizing the Lyapunov optimization theory, the
{\em Energy-Efficient Control Algorithm} (EECA)  proposed in
\cite{Neely:2006} stabilizes the system and consumes an average power
that is arbitrarily close to the minimum power solution
with a corresponding tradeoff in network delay. In
transport layer flow control and network fairness optimization,
the Cross Layer Control (CLC) algorithm was
designed in \cite{Neely:PhD:2003} to achieve a fair
throughput point which is arbitrarily close to optimal
with a corresponding tradeoff in network delay,
when the exogenous arrival rates are outside of the network
stability region. In \cite{Berry-Gallager:2002} and
\cite{Neely:2007}, the authors consider the
asymptotic single-user and multi-user power-delay tradeoff
in the large delay regime and
obtain insights into the structure of the optimal
control policy in the large delay regime. Although the derived policy (e.g.,
dynamic backpresssure algorithm) by the Lyapunov drift theory and the Lyapunov
optimization theory may not have good delay performance in moderate and
light traffic loading regimes, it allows potentially simple
solutions with throughput optimality in multi-hop
wireless networks. However, throughput optimality is a weak form of
delay performance and it is also of great interest to study
scheduling policies that minimize
average delay of queueing networks.

A more systematic approach in dealing with delay-optimal resource
control in  general delay regime is  the {\em Markov
Decision Process} (MDP) approach. In some special cases, it may be
possible to obtain simple delay-optimal solutions. For example, in
\cite{Yeh:PhD:2001,Yeh-Cohen:2003}, the authors utilize {\em
Stochastic Majorization} to show that the longest queue highest
possible rate (LQHPR) policy is delay-optimal for multiaccess
systems with homogeneous users. However, in general,  the
delay-optimal control belongs to the infinite horizon average cost
MDP, and it is well known that there is no simple solution
associated with such MDP. Brute force value iterations or policy
iterations \cite{Bertsekas:2007,Cao:2008} could not lead to any
viable solutions due to the curse of dimensionality. In addition to
the above challenges, the problem is further
complicated under distributed implementation requirements.  For
instance, the delay-optimal control actions should be adaptive to
both the global system CSI and QSI. However, these CSI and QSI
observations are usually measured locally at some nodes of the
network and hence, centralized solutions require huge signaling
overhead to deliver all these local CSI and QSI to the centralized controller.
It is very desirable to have distributed solutions
where the control actions are computed locally based on the local CSI
and  QSI measurements.

A systematic understanding of  delay-aware control in wireless
communications is the key to truly embracing both
the PHY layer and the MAC layer in
cross-layer designs.  In this paper, we   give a
comprehensive survey on the major systematic approaches in dealing
with delay-aware control problems, namely the {\em equivalent
rate constraint} approach, the {\em Lynapnov stability drift}
approach and the {\em approximate MDP} approach using {\em
stochastic learning}. These approaches essentially embrace most of
the existing literature regarding delay-aware resource control in
wireless systems. They have their relative pros and cons in terms of
performance, complexity and implementation issues. For each of the approaches, we discuss the problem
setup, the general solution, the design methodology and the
limitations of delay-aware resource allocations  with simple examples in single-hop
wireless networks. We  also discuss recent
advances in  delay-aware routing designs in multi-hop wireless
networks.

The paper is organized as follows. In Section~\ref{sec_sys_mod}, we
  elaborate on the basic concepts of cross-layer resource
allocation, which consists of the system model, the
source model, the control policies, the queue
dynamics and the general resource control problem formulation. In
Section~\ref{sec_Rate_Constraint_formulation}, we elaborate
on the theory and the framework of the first
approach (equivalent rate constraint). In
Section~\ref{sec_Lynapnov}, we elaborate on the
theory and the framework of the second approach (Lynapnov stability
drift). In Section~\ref{sec_MDP}, we   elaborate on
the theory and the framework of the third approach (MDP) and
illustrate how the approximate MDP and stochastic learning could
help to obtain low complexity and distributed delay-aware control
solutions. In Section \ref{sec_routing}, we
discuss the delay-aware routing designs in multi-hop wireless
networks. In Section~\ref{sec_comparison}, we compare the
performance of the aforementioned approaches in a common application
topology, namely the {\em uplink OFDMA systems with multiple users}.
Finally, we conclude with a brief summary of the
results in Section~\ref{sec_summary}.

\section{System Model and General Cross-Layer Optimization Framework}
\label{sec_sys_mod}

In this section, we  elaborate on the system model,
the queue model, the framework of resource control for general
wireless networks. We also use the uplink OFDMA
systems as an example in the elaboration to make the description easy
to understand.

\subsection{System Model}
\label{sec_sys_mod_1} In this paper, we study delay-aware resource
control in a general multi-hop wireless network with a set of $N$
nodes ${\cal N} = \{1,2,...,N\}$ and a set of $L$ transmission links
${\cal L}=\{1,2,...,L\}$ as illustrated in Fig.
\ref{Fig:Illustrate-System-Graph}. Each link in set $\cal L$
denotes a communication channel for direct transmission from  node
$s\in \mathcal{N}$ to node $d \in \mathcal{N}$, and
is labeled by the ordered
pair\footnote{Note that $(s,d)$ and $(d,s)$
denote two different transmission links: the former is the link from
the $s$-th node to the $d$-th node, whereas the latter is the link
from the $d$-th node to the $s$-th node.} $(s,d)$. We denote $s(l)$
and $d(l)$ as the transmit node and the receive node of the $l$-th
link,
respectively.
\begin{figure}
\centering
\includegraphics[height=7cm, width=14cm]{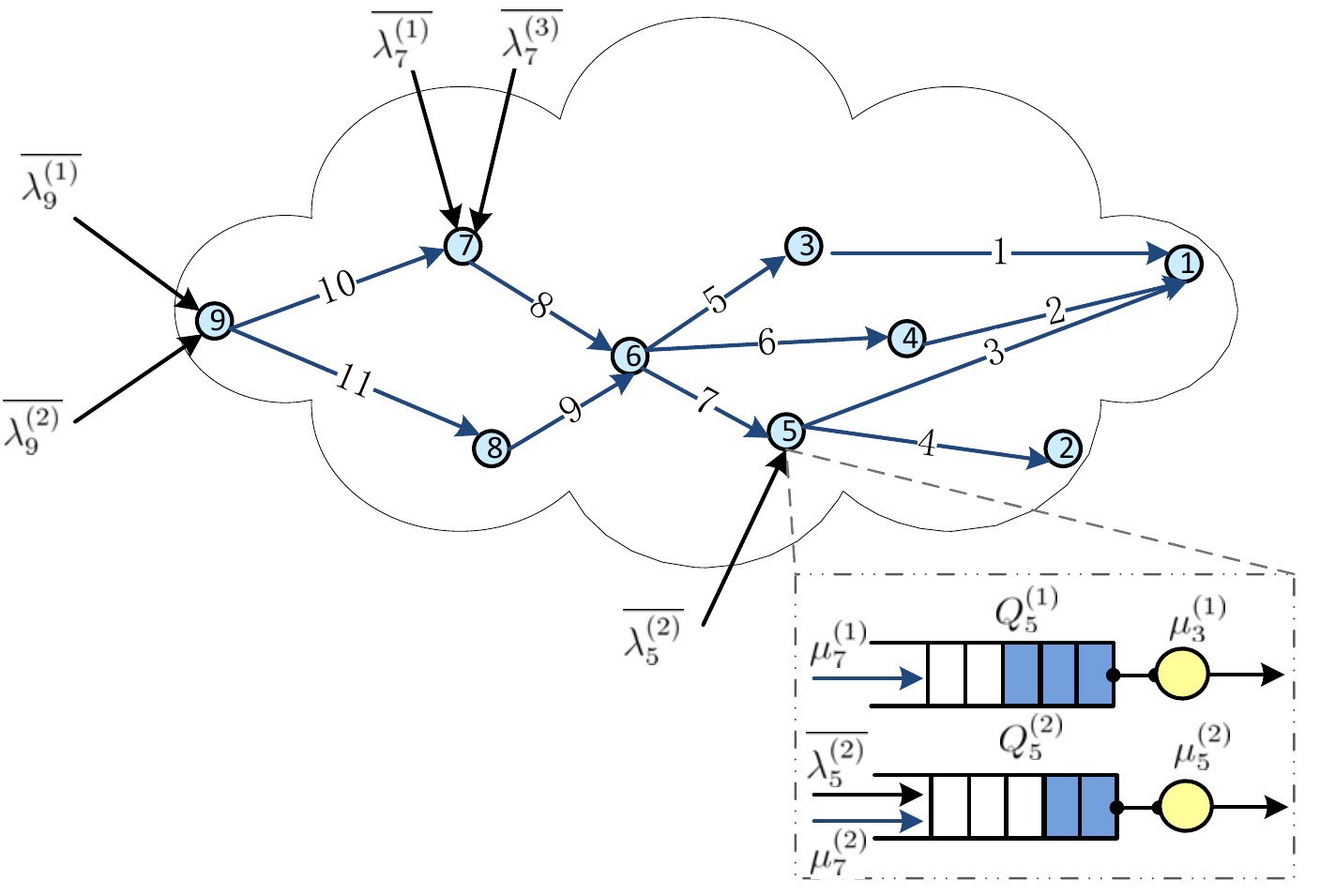}
\caption {Illustrative diagram of a multi-hop wireless network with
${\cal N} = \{1,2,...,9\}$, ${\cal L}=\{1,2,...,1\}$ and ${\cal
C}=\{1,2,3\}$.} \label{Fig:Illustrate-System-Graph}
\end{figure}

The network is assumed to work in slotted time with slot boundaries
that occur at time instances
$t \in \{1,2,...\}$. We use slot $t$ to denote the time interval
$[t,t+1)$. Denote
$\mathbf{H}(t)=[\mathbf{H}_1(t),\mathbf{H}_2(t),...,\mathbf{H}_L(t)]\in
\mathcal H$ as the CSI of all $L$  links in set $\mathcal{L}$ in  slot
$t$, where $\mathcal H$ denotes the system CSI state space. We have
the following assumption on the channel fading.

\begin{Ass}[Assumption on the Channel Fading]
Each element in $\mathbf{H}$ takes value from the
discrete state space $\mathcal H$ and the system CSI $\mathbf{H}(t)$
is a Markov process, i.e., $$\Pr
\big[\mathbf{H}(t)|\mathbf{H}(t-1),\mathbf{H}(t-2),...,\mathbf{H}(0)
\big]= \Pr \big[\mathbf{H}(t)|\mathbf{H}(t-1)\big].$$
~\hfill\QEDclosed\label{ass:channel}
\end{Ass}
The general network model described above
encompasses  a wide range of
practical network topologies.

\subsection{Source Model}
All data that enters the network is associated with
a particular commodity\footnote{The commodity index $c$ can be
interpreted as the {\em data flow index} in the network.} $c\in
\mathcal C$, which minimally defines the destination of the data,
but might also specify other information, such as the source node of
the data  or its priority service
class\cite{Georgiadis-Neely-Tassiulas:2006}. $\mathcal
C=\{1,2,\cdots, C\}$ represents the set of $C$ commodities in the
network.  Let $\lambda_n^{(c)}(t)$ denote the amount of new
commodity $c$ data (in number of bits) that
exogenously arrives to
 node $n$ at the end of slot $t$. We make the following assumption on
the arrival process.
\begin{Ass}[Assumption on Arrival Process]
The packet arrival process $\lambda_n^{(c)}(t)\in [0,\lambda_{n,\max}^{(c)}]$ is i.i.d. over
scheduling slots following general distribution with average arrival
rate $\mathbf
E[\lambda_n^{(c)}(t)]=\overline{\lambda_n^{(c)}}$. ~
\hfill\QEDclosed \label{ass:arrival}
\end{Ass}

Each node $n$ maintains a set of queues for storing
data according to its commodity. Let $Q_n^{(c)}(t)$ denote the queue
length (in number of bits) of commodity $c$ stored at node $n$. Note
that we let $Q_n^{(c)}(t)=0$ for all $t$ if node $n$ is the
destination of commodity $c$. Let $\mu_l^{(c)}(t)$ denote the rate
offered to commodity $c$ over  link $l$ during  slot $t$. Therefore,
the system queue dynamics is given by
\cite{Georgiadis-Neely-Tassiulas:2006}
\begin{equation}
Q_n^{(c)}(t+1) \leq   \max\left\{Q_n^{(c)}(t) -
\sum_{l\in \{l:s(l)=n\}}\mu_l^{(c)}(t),0\right\} +
\lambda_n^{(c)}(t)+\sum_{l\in \{l:d(l)=n\}}\mu_l^{(c)}(t), \quad \ n
\in \mathcal N.\label{eqn:sec2:queue}
\end{equation}
The above expression is an inequality rather than an
equality because the actual amount of commodity $c$ data arriving to
node $n$ during slot $t$ may be less than $\sum_{l\in
\{l:d(l)=n\}}\mu_l^{(c)}(t)$ if the neighboring nodes have little or
no commodity $c$ data to transmit. For notational
convenience, we define the QSI as $\mathbf{Q}(t) =
\big[Q_n^{(c)}(t)\big]\in \mathcal Q$, where $\mathcal Q$
denotes the system QSI state space.

\subsection{Control Policy and Resource Control Framework}

Let $\chi(t) = \{\mathbf{H}(t), \mathbf{Q}(t)\} \in \mathcal X$ be
the system state which can be estimated by the resource controller
at the $t$-th slot, where $\mathcal X= \mathcal H \times \mathcal Q$
is the full system state space. In practice, different control
policies may be adaptive to partial or full system
states. For example, a {\em CSI-only} control policy has control
actions that are adaptive to the partial system
state CSI only. A {\em QSI-only} control policy has control actions that are
adaptive to the partial system state QSI only. A {\em cross-layer}
control policy has control actions that are adaptive to the full
system state, i.e., the CSI and the QSI. We define $\Omega: \mathcal X \to
\mathcal A $ to be the control policy, which is a mapping from the
full system state space $\mathcal X$ to the action space $\mathcal
A$. The control policy may include the resource allocation policy
(e.g., power allocation policy, subcarrier allocation policy, precoder design policy,
etc) and the routing policy.

Under control policy $\Omega$, the average queue
length of commodity $c$ stored at node $n$ is given by
$$\overline{Q_n^{(c)}}=\limsup\limits_{T\rightarrow +\infty}\frac{1}{T}\sum\limits_{t=1}^{T}
\mathbf{E}^{\Omega} [Q_n^{(c)}(t)], \quad \forall n \in \mathcal{N},
c\in \mathcal C,$$ where $\mathbf{E}^{\Omega} [ \cdot]$ means the
expectation operation taken  w.r.t. the  measure induced
by the given policy $\Omega$. We also introduce the
average drop rate as a performance metric in our general system
model to incorporate delay-aware resource control
in queueing networks with finite buffer
size (c.f.,
\cite{Bertsekas:2007}), where data dropping is
necessary when a buffer overflows. For  queueing networks with finite buffer size $N_Q$, the average
drop rate of commodity $c$ stored at node $n$ is defined as
\begin{align}
\overline{d_n^{(c)}}=\limsup\limits_{T\rightarrow
+\infty}\frac{1}{T}\sum\limits_{t=1}^{T} \mathbf{E}^{\Omega}
\big[\mathbf{I}[Q_n^{(c)}(t)=N_Q]  \big], \quad \forall n \in
\mathcal{N}, c\in \mathcal C. \nonumber
\end{align}
Taking the effect of data dropping into consideration, we refer
to the average delay as the
average time that a piece of data stays in the
network before reaching the destination (averaged over the data that
are not dropped\footnote{For example, suppose
100 packets enter a single-hop network, among
which, 10 packets are dropped and the other 90 packets are
successfully delivered to their destinations . Furthermore, the
total time taken by the 90 packets to reach their destinations is
90. The average delay is given by $90 / 90 = 1$ and the average drop
rate is given by $10/100=0.1$.}). This is because the penalty of
data dropping is accounted for separately
in the average drop rate. The following lemma
extends Little's Law to the case with data dropping.
\begin{Lem} [Little's Law with Data Dropping]
The average delay of all the commodities and commodity $c$ in the
network are given by
\begin{align}\overline{D}\leq& \frac{\sum_{n \in \mathcal{N}} \sum_{c \in
\mathcal{C}} \overline{Q_n^{(c)}}}{\sum_{n \in \mathcal{N}} \sum_{c
\in \mathcal{C}} (1-\overline{d_n^{(c)}})\overline{\lambda_n^{(c)}}},
\label{eqn:average-delay-un-drop}\\
\overline{D^{(c)}}\leq&\frac{\sum_{n \in \mathcal{N}}
\overline{Q_n^{(c)}}}{\sum_{n \in \mathcal{N}}
(1-\overline{d_n^{(c)}})\overline{\lambda_n^{(c)}}},
\label{eqn:average-delay-un-drop-c}
\end{align}
where the above two inequalities are asymptotically tight for
general multi-hop networks as $\overline{d_n^{(c)}}\to 0$ for all $
n\in \mathcal N$ and  $c\in \mathcal C$. In addition, in single-hop queueing networks, the
inequalities in \eqref{eqn:average-delay-un-drop} and
\eqref{eqn:average-delay-un-drop-c} are tight for any
$\overline{d_n^{(c)}}$ and $\lambda_{n,\max}^{(c)}=1$, and  the average delay of commodity $c$ at
node $n$ is given by
\begin{align}\overline{D_n^{(c)}}=\frac{
\overline{Q_n^{(c)}}}{(1-\overline{d_n^{(c)}})\overline{\lambda_n^{(c)}}}.
\nonumber
\end{align}
\label{Lem:littleLaw-drop} ~\hfill\QEDclosed
\end{Lem}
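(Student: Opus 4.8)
\emph{Proof sketch.} The plan is to extend the classical ``area under the backlog curve'' derivation of Little's Law, keeping track of the extra bookkeeping introduced by dropped data. Fix a commodity $c$ and a horizon $T$. The basic identity is that, up to boundary terms, $\sum_{t=1}^{T}\sum_{n}Q_n^{(c)}(t)$ counts one unit for every (bit, slot) pair in which that commodity-$c$ bit resides in some queue during that slot; hence it equals the sum, over all commodity-$c$ bits, of the total time each spends in queues. A bit that eventually reaches its destination contributes exactly its end-to-end delay $D_i$; a bit dropped on arrival contributes $0$; a bit admitted but later dropped at a relay contributes a nonnegative partial sojourn. The boundary terms (bits already present at $t=1$ or still in transit at $t=T$) are at most $N\cdot N_Q$ since buffers are finite, hence $o(T)$. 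Therefore $\sum_{t=1}^{T}\sum_{n}Q_n^{(c)}(t)\ge\sum_{i\text{ delivered in }[1,T]}D_i$; dividing by the number $A^{(c)}(T)$ of delivered commodity-$c$ bits and letting $T\to\infty$ gives $\sum_n\overline{Q_n^{(c)}}\ge R^{(c)}_{\mathrm{del}}\,\overline{D^{(c)}}$, where $R^{(c)}_{\mathrm{del}}=\lim_T A^{(c)}(T)/T$ is the long-run commodity-$c$ delivery rate; the identical argument with all commodities jointly gives $\sum_{n,c}\overline{Q_n^{(c)}}\ge R_{\mathrm{del}}\,\overline{D}$.

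The second ingredient is to identify $R^{(c)}_{\mathrm{del}}$. Here I would use that exogenous data arrives at the \emph{end} of each slot, so the event $\{Q_n^{(c)}(t)=N_Q\}$ is measurable with respect to the history up to slot $t-1$ and is therefore independent of the i.i.d.\ arrival $\lambda_n^{(c)}(t)$; consequently, when $\lambda_{n,\max}^{(c)}=1$ (so ``buffer full'' $\Leftrightarrow$ ``the arriving bit, if any, is dropped''), the rate of exogenous commodity-$c$ bits \emph{admitted} at node $n$ is exactly $(1-\overline{d_n^{(c)}})\,\overline{\lambda_n^{(c)}}$. In a single-hop network there are no relays: every admitted bit is served and delivered, there are no intermediate-drop or in-transit terms, so the area identity is \emph{exact} and, per node, reduces to $\overline{Q_n^{(c)}}=(1-\overline{d_n^{(c)}})\,\overline{\lambda_n^{(c)}}\,\overline{D_n^{(c)}}$, the claimed formula for $\overline{D_n^{(c)}}$. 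Since $\overline{D^{(c)}}$ (resp.\ $\overline D$) is, by definition, the average of the $\overline{D_n^{(c)}}$ weighted by the admitted-bit rates $(1-\overline{d_n^{(c)}})\,\overline{\lambda_n^{(c)}}$, summing the per-node identities over $n$ (and over $c$) yields the single-hop equalities in \eqref{eqn:average-delay-un-drop} and \eqref{eqn:average-delay-un-drop-c}.

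For a general multi-hop network the first step already gives $\overline{D^{(c)}}\le \sum_n\overline{Q_n^{(c)}}/R^{(c)}_{\mathrm{del}}$ and $\overline D\le\sum_{n,c}\overline{Q_n^{(c)}}/R_{\mathrm{del}}$, so what remains is to compare $R^{(c)}_{\mathrm{del}}$ with $\sum_n(1-\overline{d_n^{(c)}})\,\overline{\lambda_n^{(c)}}$. When $\overline{d_n^{(c)}}\to0$ for all $n$ and $c$, both exogenous and relay drops vanish, so $R^{(c)}_{\mathrm{del}}\to\sum_n\overline{\lambda_n^{(c)}}=\sum_n(1-\overline{d_n^{(c)}})\,\overline{\lambda_n^{(c)}}$ and, moreover, the dropped-bit and in-transit corrections to the area identity vanish; the inequality in \eqref{eqn:average-delay-un-drop-c} (and likewise \eqref{eqn:average-delay-un-drop}) thus collapses to the classical Little's Law equality, which is exactly the claimed asymptotic tightness.

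I expect the delicate point to be the multi-hop drop accounting: unlike the exogenous stream, the relayed flow $\sum_{l:d(l)=n}\mu_l^{(c)}(t)$ into node $n$ is correlated with the buffer occupancy, so one must argue that the per-queue ``buffer-full'' fraction $\overline{d_n^{(c)}}$ still governs, to leading order, the true rate at which commodity-$c$ bits are lost, and that the transient/in-transit terms are $o(T)$ uniformly. Both difficulties are absent in the single-hop, unit-arrival case --- which is why the result is exact there --- and become negligible as the drop rates tend to zero, which is why the multi-hop bounds are asymptotically tight.
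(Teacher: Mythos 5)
Your proposal is correct and follows essentially the same route as the paper, whose entire proof consists of the remark that the standard Little's Law argument extends by restricting attention to data that are not dropped, with the inequality arising because dropped data still contribute to the queue-length process before being discarded. Your area-under-the-backlog-curve derivation, the observation that single-hop unit arrivals are dropped on arrival (contributing zero sojourn, hence equality), and your honest flagging of the relay-drop accounting as the delicate multi-hop step are all consistent with---and considerably more explicit than---the sketch the paper provides.
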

\begin{proof}
The proof can be easily extended from the  standard Little's law
\cite{Kleinrock75} by considering the data that are not
dropped\footnote{Since all the data (including  the
data that are ultimately
 dropped) contributes to the queue length
process (before being dropped), we have inequalities in
\eqref{eqn:average-delay-un-drop} and
\eqref{eqn:average-delay-un-drop-c}.}. We omit the details due to
page limit.
\end{proof}
\begin{Rem} [Interpretation of Lemma \ref{Lem:littleLaw-drop}]
Lemma \ref{Lem:littleLaw-drop} establishes the
relationship among the average
delay, the average  queue length and the average drop
rate in
general networks.
Given the average drop rate, the average delay bound for general
multi-hop networks or the average delay for single-hop networks is
proportional to the average queue length. Thus, the average queue
length (and the average drop rate if data dropping happens) is
commonly used in the existing literature
\cite{Georgiadis-Neely-Tassiulas:2006} as the delay
performance measure. ~\hfill\QEDclosed
\end{Rem}

Moreover, under control policy $\Omega$, the average throughput of
link $l \in \mathcal{L}$ and the average power consumption of  node
$n\in \cal N$ are given by
\begin{align}
\overline{T_l} =& \limsup\limits_{T\rightarrow
+\infty}\frac{1}{T}\sum\limits_{t=1}^{T} \mathbf{E}^{\Omega} \left[
\sum_{c\in \mathcal C} \mu_l^{(c)}(t) \right], \quad \forall l \in
\mathcal{L} \nonumber\\
 \overline{P_n}=&\limsup\limits_{T\rightarrow
+\infty}\frac{1}{T}\sum\limits_{t=1}^{T} \mathbf{E}^{\Omega} \left[
\sum_{l\in \{l:s(l)=n\}}\sum_{c\in \mathcal C} p_l^{(c)}(t) \right],
\quad \forall n \in \mathcal{N} \nonumber
\end{align}
respectively, where $ p_l^{(c)}(t)$ denotes the power allocated to
commodity $c$ over link $l$ at slot $t$.

Therefore,  delay-aware resource control problems for wireless
networks can be divided into the following three
categories:
\begin{itemize}
\item {\bf Category I:} Maximize the average weighted sum system throughput (or average
arrival rate) subject to average delay
constraints, average power
constraints and average drop rate
constraints. Thus, the delay-aware
resource control problem can be expressed as
\begin{eqnarray}
&\max & \sum_{l\in \mathcal{L}} w_{l} \overline{T_{l}} \label{eqn:catI}\\
&s.t. & \overline{Q_n^{(c)}} \leq Q_n^{(c)}, \quad  \forall n \in
\mathcal{N}, c\in \mathcal C \nonumber\\
&& \overline{P_n} \leq P_n, \quad \forall n \in \mathcal{N}
\nonumber\\
&& \overline{d_n^{(c)}} \leq d_n^{(c)}, \quad  \forall n \in
\mathcal{N}, c\in \mathcal C, \nonumber
\end{eqnarray}
where $w_{l}$ is the weight for the $l$-th link, $Q_n^{(c)}$, $P_n$
and $d_n^{(c)}$ are the average delay constraint,
the average power constraint and the average drop
rate constraint for commodity $c$ at node $n$, respectively.

\item {\bf Category II:} Minimize the average weighted sum delay subject to average power constraints and average  drop
rate constraints for given arrival rates at all
sources. Thus, the delay-aware resource control problem can
be expressed as
\begin{eqnarray}
&\min & \sum_{n \in \mathcal{N}} \sum_{c \in \mathcal{C}}  w_n^{(c)} \overline{Q_n^{(c)}} \label{eqn:catII}\\
&s.t.& \overline{P_n} \leq P_n, \quad \forall n \in \mathcal{N}
\nonumber\\
&&\overline{d_n^{(c)}} \leq d_n^{(c)}, \quad  \forall n \in
\mathcal{N}, c\in \mathcal C, \nonumber
\end{eqnarray}
where $w_n^{(c)}$ is the weight for commodity $c$ at node $n$.

\item {\bf Category III:} Minimize the average weighed sum power consumption subject
to average delay constraints and average  drop rate
constraints for given arrival rates at all sources.
Thus, the delay-aware resource control problem can be
expressed as
\begin{eqnarray}
&\min & \sum_{n \in \mathcal{N}} w_{n} \overline{P_n} \label{eqn:catIII}\\
&s.t. & \overline{Q_n^{(c)}} \leq Q_n^{(c)}, \quad  \forall n \in
\mathcal{N}, c\in \mathcal C \nonumber\\
&& \overline{d_n^{(c)}} \leq d_n^{(c)}, \quad  \forall n \in
\mathcal{N}, c\in \mathcal C, \nonumber
\end{eqnarray}
where $w_n$ is the weight for node $n$.
\end{itemize}

\begin{Rem}[Unified Optimization Framework]
Note that the Lagrangian function of all the above optimization
problems can be written in a unified form:
\begin{eqnarray} L &=& \sum_{l \in \mathcal{L}}
\xi_{l} \overline {T_l}+ \sum_{n \in \mathcal{N}} \sum_{c \in
\mathcal{C}}  \left( \nu_n^{(c)} \overline{Q_n^{(c)}}+ \eta_n^{(c)}
\overline{d_n^{(c)}}\right) + \sum_{n \in \mathcal{N}}\gamma_{n}
\overline{P_n}, \nonumber
\end{eqnarray}
where $\xi_{l}$, $ \nu_n^{(c)}$, $\eta_n^{(c)} $ and $\gamma_{n}$
can be Lagrange Multipliers associated with the constraints or
weights in the objective function. Hence, these
problems can be solved by a
common optimization framework.~\hfill\QEDclosed
\end{Rem}

\subsection{Uplink OFDMA Systems}

\begin{figure}
\centering
\includegraphics[width = 12cm]{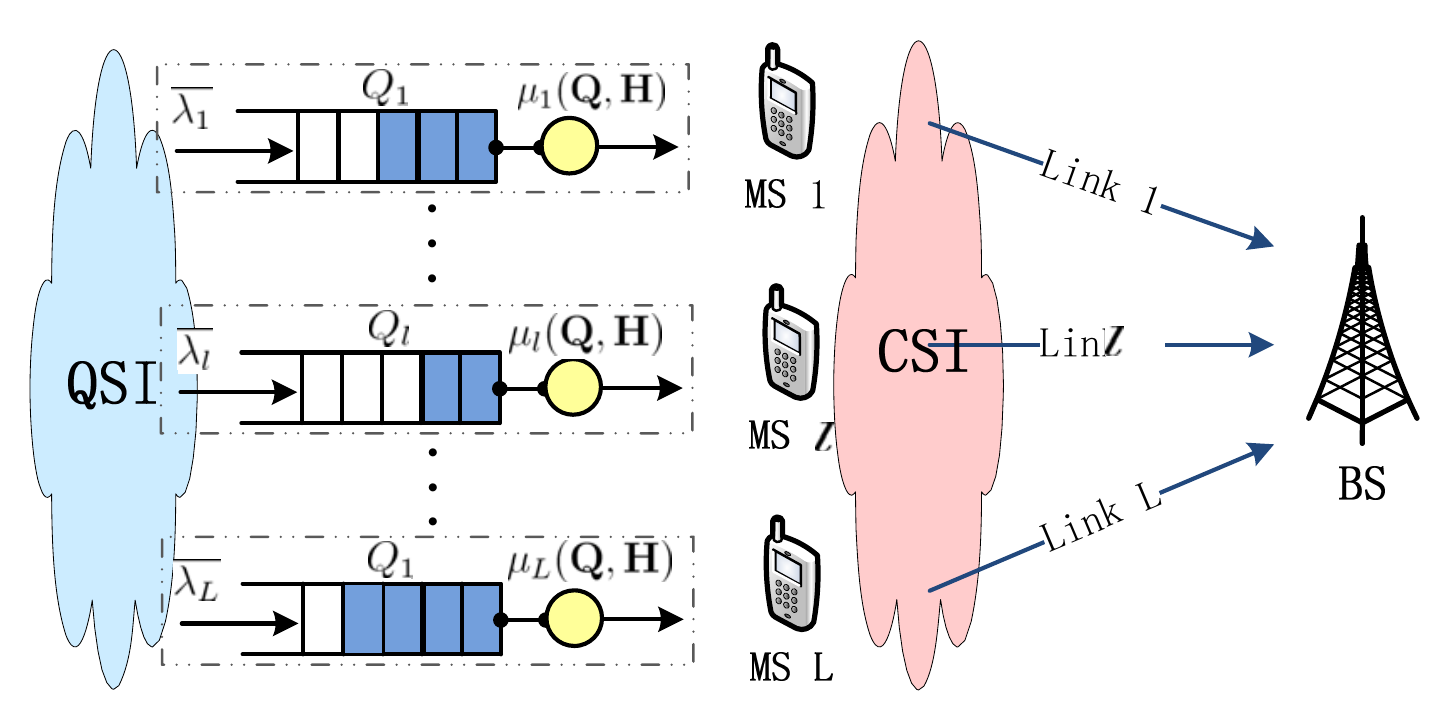}
\caption {Block diagram of uplink OFDMA systems.} \label{Fig:OFDMA}
\end{figure}

In this part, we  illustrate the general network
model in Section \ref{sec_sys_mod_1} with a
simple example of  one-hop uplink OFDMA systems.
This example topology will also be used as illustration to the
delay-aware resource control in later sections.

In the uplink OFDMA system example illustrated in
Fig. \ref{Fig:OFDMA}, we assume the set of $N$ nodes $\mathcal N$
are mobile stations (MSs) that communicate
with one base station (BS). Each MS
and the BS is equipped with a single antenna. Therefore, the set of
links $\cal L$ corresponds to the set of all uplink channels from
the $N$ MSs to the BS (with $L=N$). Furthermore, there are $N$ data
flows (with $C=N$), and for notation simplicity, we
use $l\in \{1,2,\cdots,N(=L)\}$ to denote the link index,
the node index as well as the commodity index. We
consider communications over a wideband frequency selective fading
channel, and the whole spectrum is divided into $N_F$ orthogonal
flat fading frequency bands (subcarriers). Let $H_{l,m}(t)$ denote the CSI of the
$l$-th uplink on the $m$-th ($m
\in\{1,2,\cdots,N_F\}$) subcarrier and let
$\mathbf{H}_l(t)=[H_{l,1}(t),H_{l,2}(t),...,H_{l,N_F}(t)]$ denote
the aggregate CSI over all subcarriers of the
$l$-th link. The system CSI
$\mathbf{H}(t)=[\mathbf{H}_1(t),\mathbf{H}_2(t),...,\mathbf{H}_N(t)]$
is a Markov process satisfying  Assumption \ref{ass:channel}. In
addition, we assume $\{H_{l,m}(t)\}$ are i.i.d. w.r.t. $l
\in \{1,2,\cdots,L\}$ and $m \in\{1,2,\cdots,N_F\}$.
Let $Q_l(t)$ and $s_{l,m}(t)\in
\{0,1\}$ denote the queue length and the subcarrier allocation for
the $l$-th link on the $m$-th subcarrier at slot $t$, respectively.
The received signal from the $l$-th user on the $m$-th subcarrier of
the BS at slot $t$ is given by
\begin{equation}
Y_{l,m}(t) =
s_{l,m}(t)\left(H_{l,m}(t)
X_{l,m}(t) + Z_{l,m}(t)\right), \ l
\in \mathcal L, \ m \in\{1,2,\cdots,N_F\}, \nonumber
\end{equation}
where $X_{l,m}(t)$ is the transmit symbol and
$Z_{l,m}(t)\sim \mathcal{CN} (0,1)$ is the channel
noise of the $l$-th link on the $m$-th subcarrier at
slot $t$. Hence, the data rate of the $l$-th link on the $m$-th
subcarrier at slot $t$ is given by
\begin{equation}
R_{l,m} (t)=s_{l,m}(t)
\log_2\big(1 + p_{l,m}(t)
|H_{l,m}(t)|^2\big), \ l \in \mathcal L, \ m
\in\{1,2,\cdots,N_F\},
 \nonumber
\end{equation}
where $p_{l,m}(t)$ is the transmit power over the
$l$-th link on the $m$-th subcarrier at slot $t$.
The sum rate of the $l$-th link at slot $t$ is
given by $\mu_l(t)=\sum_{m=1}^{N_F}
R_{l,m}(t)$.

In this uplink OFDMA system example, the control policy for
the $l$-th link is given by $\Omega_l=(\Omega_{l,p}
, \Omega_{l,s})$, where the power allocation policy $\Omega_{l,p}$
and the subcarrier allocation
policy\footnote{Please note that when $N_F=1$,
i.e., there is only one carrier, the subcarrier
allocation is reduced to link selection. Therefore,
the subcarrier allocation policy considered in the following problem
formulation covers most of the cases in resource allocations for
single-hop wireless networks.} $\Omega_{l,s}$  are defined as
follows.
\begin{Def}[Power Allocation Policy]
The power allocation policy of the $l$-th link is a mapping
$\mathcal X \to \mathcal P_l$ from the system state to the power
allocation action, which is given by
\begin{equation}
\Omega_{l,p} (\chi)= \Big \{p_{l,m}\geq0: m \in
\{1,2,\cdots,N_F\}\Big\}\in \mathcal P_l, \ \forall l \in \mathcal L,
\end{equation}
where $p_{l,m}$ is the transmit power on the $m$-th subcarrier
of the $l$-th link.~\hfill\QEDclosed
\end{Def}

\begin{Def}[Subcarrier Allocation Policy]
The subcarrier allocation policy of the $l$-th link is a mapping
$\mathcal X \to \mathcal S_l$ from the system state to the
subcarrier allocation action, which is given by
\begin{equation}
\Omega_{l,s}(\chi) = \Big \{s_{l,m}\in \{0,1\}: m \in
\{1,2,\cdots,N_F\}\Big\}\in \mathcal S_l, \ \forall l \in \mathcal L,
\end{equation}
where $s_{l,m}=1$ means that the $m$-th subcarrier
is used by the $l$-th link for data transmission,
and $s_{l,m}=0$ otherwise.~\hfill\QEDclosed
\end{Def}

\section{Equivalent Rate Constraint Approach}
\label{sec_Rate_Constraint_formulation}

The first attempt in the literature to deal with
the complicated delay control problem is to consider an equivalent
problem in the PHY layer domain only,
i.e., converting average delay constraints into
average rate constraints using the large deviation
theory \cite{Wu03,Hui07,Tang0708,Tang0806,Melissa08}. This approach
can be traced back to the early 90's,
when the statistical quality of
service (QoS) requirements have been extensively
studied in the context of {\em effective bandwidth theory}
\cite{Chang94,Chang95,Kim00,Kesidis93}, which asymptotically models
the statistical behavior of a source traffic process in the wired
networks (e.g., asynchronous transfer mode (ATM)
and Internet protocol (IP) networks).

For notational simplicity, we consider a single
queue in the following introduction of the known results on
the large deviation theory. Let $A(t)$ represent
the amount of source data (in number of bits) over
the time interval $[0,t)$. Assume that the
Gartner-Ellis limit of $A(t)$, expressed as
$\Lambda_B(\theta)=\lim_{t\to\infty}\frac{\log \mathbf E[e^{\theta
A(t)}]}{t}$ exists for all $\theta\geq0$. Then, the {\em effective
bandwidth} function of $A(t)$ is defined as
\begin{align}
E_B(\theta)=\frac{\Lambda_B(\theta)}{\theta}=\lim_{t\to\infty}\frac{1}{\theta
t}\log \mathbf E\left[e^{\theta A(t)}\right].\label{eqn:E-B}
\end{align}
Consider a queue with infinite buffer size served by a channel with
constant service rate $R$. By the large deviation
theory\cite{Shwartz95}, it is shown in
\cite{Chang94} that the probability of the delay $D(t)$ at time $t$
exceeding a delay bound $D_{\max}$ satisfies:
\begin{eqnarray}
\sup_{t} \textrm{Pr}[ D(t) \geq D_{\max}] \approx \gamma(R)
e^{-\theta(R) D_{\max}},
\end{eqnarray}
where $\gamma(R)=\Pr\left[D(t) \geq 0\right]$ is the probability
that the buffer is nonempty and
$\theta(R)=RE_B^{-1}(R)$ is the QoS exponent (i.e.,
the solution of $E_B(\theta)=R$ multiplied by $R$).
Both $\gamma(R)$ and $\theta(R)$ are functions of
the constant channel capacity $R$. Thus, a source, which has a
common delay bound $D_{\max}$ and can tolerate a delay bound
violation probability of at most $\epsilon$, can be modeled by the
pair $\{\gamma(R), \theta(R)\}$, where the constant channel capacity
should be at least $R$ with $R$ being the solution of $\gamma(R)
e^{-\theta(R) D_{\max}}=\epsilon$.  The intuitive explanation is
that the tail probability that the delay $D(t)$ exceeds
$D_{\max}$ is proportional to the  probability that the
buffer is nonempty and decays exponentially fast as the threshold
$D_{\max}$ increases. The QoS exponent \cite{Wu03} $\theta(R)$ can
be interpreted as the indicator of the QoS requirement,
i.e., a smaller $\theta(R)$ corresponds to a {\em
looser} QoS requirement and vice versa. As a result, the {\em
effective bandwidth} is defined as the minimum service rate required
by a given arrival process for which the QoS exponent requirement is
satisfied.

Inspired by the effective bandwidth theory, where
the constant service rate $R$ is used in the source traffic modeling
in wired networks, the authors in \cite{Wu03} use the constant
source traffic rate $\lambda$ to model a wireless
communication channel. They propose the {\em
effective capacity}, which is the dual of the {\em effective
bandwidth}. Let $ S(t) \triangleq \sum_{\tau = 1}^{t} R(\tau)$
represent the amount of service (in number of bits)
over the time interval $[0,t)$. Assume that the
Gartner-Ellis limit of $S(t)$, expressed as
$\Lambda_C(\theta)=\lim_{t\to\infty}\frac{\log \mathbf E[e^{-\theta
S(t)}]}{t}$ exists for all $\theta\geq0$. Then, the {\em effective
capacity} function of $S(t)$ is defined as
\begin{align}
E_C(\theta)=-\frac{\Lambda_C(\theta)}{\theta}=-\lim_{t\to\infty}\frac{1}{\theta
t}\log \mathbf E\left[e^{-\theta S(t)}\right].\label{eqn:E-C}
\end{align}
If we further assume the process $\{R(t) \}$ is
uncorrelated, then the effective
capacity reduces to
\begin{eqnarray}
E_C\big(\theta\big) = - \frac{1}{\theta} \log \left( \mathbf E
\{e^{-\theta R(t)} \} \right).\label{eqn:E-C-reduced}
\end{eqnarray}
Consider a queue of infinite buffer size served by a data source of
constant data rate $\lambda$ (in number of bits).
Similar to the {\em effective bandwidth} case, it
is shown in \cite{Wu03} that the probability of the
delay $D(t)$ at time $t$ exceeding a delay bound $D_{\max}$
satisfies:
\begin{eqnarray}
\sup_{t} \textrm{Pr}[ D(t) \geq D_{\max}] \approx \gamma(\lambda)
e^{-\theta(\lambda) D_{\max}},
\end{eqnarray}
where $\gamma(\lambda)=\Pr[D(t) \geq 0]$ is the probability
that the buffer is nonempty and the QoS exponent is
$\theta(\lambda)=\lambda E_C^{-1}(\lambda)$.  Both
$\gamma(\lambda)$ and $\theta(\lambda)$ are
functions of the constant source rate $\lambda$. Thus, a source,
which has a common delay bound $D_{\max}$ and can tolerate a
delay bound violation probability of at most $\epsilon$, can be
modeled by the pair $\{\gamma(\lambda), \theta(\lambda)\}$, where
the constant data rate should be at most $\lambda$ with $\lambda$
being the solution of $\gamma(\lambda) e^{-\theta(\lambda)
D_{\max}}=\epsilon$. Therefore, as the dual of the {\em effective
bandwidth}, the {\em effective capacity} is defined as the maximum
constant arrival rate that a given service process can support in
order to guarantee a QoS requirement specified by $\theta$.

With the above observation, we can incorporate the QoS requirement
into a pure PHY layer requirement. By
interpreting $\theta$ as the QoS constraint, the throughput
maximization problem subject to the delay QoS constraint (in terms
of the exponential tail probability of the queue length
distribution) can be directly transformed into an
effective capacity maximization problem for a given
QoS exponent $\theta$. This approach is widely used when the QoS
constraint is specified in terms of the QoS exponent
$\theta$. Interested readers can refer to
\cite{Tang0705,Tang0708,Tang0806} and references therein for more
detailed descriptions.

A more careful treatment of the average delay constraint is
developed in \cite{Hui07}, where packet flow model is considered.
The principle idea behind this approach is to establish the
relationship among the average delay
requirement $\overline{D}$, the average arrival rate $\overline \lambda$ and
the average service rate $\overline \mu$  using the queueing theory framework
\cite{Kleinrock75}. The following  procedures are performed
to obtain this relationship.
\begin{enumerate}
\item{Express the average system delay in terms of the average residue service time and the average queueing delay.}
\item{Express the average residue service time in terms of the moments of the service process.}
\item{Establish the relationship among the average delay
requirement, the average arrival rate and
the average service rate.}
\end{enumerate}

\begin{Exam}[Equivalent Rate Constraint Approach for Uplink OFDMA Systems]
In the uplink OFDMA system example, we assume the buffer size for each
link is infinite (as in \cite{Hui07}). Hence, the optimization
problem (\ref{eqn:catIII}) can be simplified as follows
\cite{Hui07}:
\begin{eqnarray}
\min_{\Omega = \{ \Omega_{l,s}, \Omega_{l,p} \}} &&
\mathbf{E}^{\Omega} \left[ \sum_{l=1}^L \sum_{m=1}^{N_F}
p_{l,m} \right] \label{eqn:ch3_prob1}\\
\textrm{s.t.} &&\quad  s_{l,m}\in\{0,1\}, \forall l\in \mathcal L, m\in \{1,2,\cdots, N_F\},\quad \sum_{l=1}^L s_{l,m} =1, \quad  \forall l \in \mathcal L \label{eqn:ch3_subc-constraint}\\
&& \mathbf{E}^{\Omega}\left[\sum_{m=1}^{N_F}p_{l,m}\right]\leq P_l, \quad
\forall l\in\mathcal L \label{eqn:ch3_pwr-constraint}\\
&& \overline{\lambda}_l \mathbf{E}^{\Omega}\Big [Q_l(t)\Big] \leq D_l, \quad
\forall l\in \mathcal L. \label{eqn:ch3_delay-constraint}
\end{eqnarray}
Following the standard procedures shown above, the average delay
constraint \eqref{eqn:ch3_delay-constraint} can be replaced with an
equivalent average rate constraint given by \cite[Lemma 1]{Hui07}
\begin{eqnarray}
\mathbf E^{\Omega} \left[ \sum_{m=1}^{N_F} s_{l,m} \log_2 (1 +
p_{l,m} |H_{l,m}|^2) \right] \geq \frac{(2
D_{l}\overline{\lambda}_{l}+ 2)+ \sqrt{(2
D_{l}\overline{\lambda}_{l}+ 2)^2 - 8
D_{l}\overline{\lambda}_{l}}}{4 D_{l}} \overline{N}_l,
\label{eqn:ch3_constraint4}
\end{eqnarray}
where $\overline{N}_l$ and $\overline{\lambda}_l$ are the average
packet size and the average arrival rate of the $l$-th link,
respectively. Applying the above results, the
original optimization problem \eqref{eqn:ch3_prob1} can be
reformulated as follows:
\begin{eqnarray}
\min_{\Omega = \{ \Omega_{l,s}, \Omega_{l,p} \}} &&
\mathbf{E}^{\Omega} \left[ \sum_{l=1}^L \sum_{m=1}^{N_F}
p_{l,m} \right] \label{eqn:ch3_prob2}\\
\textrm{s.t.} && \eqref{eqn:ch3_subc-constraint},
\eqref{eqn:ch3_pwr-constraint},
\eqref{eqn:ch3_constraint4}.\label{eqn:ch3_prob2-const}
\end{eqnarray}
Optimization problem
\eqref{eqn:ch3_prob2}-\eqref{eqn:ch3_prob2-const} is a mixed
combinatorial (w.r.t. integer variables $\{s_{l,m}\}$) and
convex optimization problem (w.r.t. $\{p_{l,m}\}$). If we
relax the integer constraint $s_{l,m} \in \{0,1\}$ into real values,
i.e., $s_{l,m} \in [0,1]$, the resultant problem
\eqref{eqn:ch3_prob2} would be a convex maximization problem. Using
standard Lagrange Multiplier techniques, we can derive the optimal
subcarrier and power allocation as follows:
\begin{align}
s_{l,m} &= \left\{
\begin{array}{ll} 1, &
\text{if} \quad X_{l,m}=\max_{j} \big\{X_{j,m}\big\} >0\\
0, & \textrm{otherwise}
\end{array} \right. \label{eqn:ch_3_joint-subc-allo}\\
p_{l,m}&=s_{l,m}\left(\frac{1+\nu_l}{\gamma_l}-\frac{1}{|H_{l,m}|^2}\right)^{+},\label{eqn:ch_3_joint-pwr-allo}
\end{align}
where $$X_{l,m}=(1+\nu_l)
\log_2\left(1+|H_{l,m}|^2\left(\frac{1+\nu_l}{\gamma_l}-\frac{1}{|H_{l,m}|^2}\right)^{+}\right)-\gamma_l
\left(\frac{1+\nu_l}{\gamma_l}-\frac{1}{|H_{l,m}|^2}\right)^{+},$$
$\gamma_l$ is the Lagrange multiplier corresponding to the average
power constraint in \eqref{eqn:ch3_pwr-constraint} and $\nu_l$ is
the Lagrange multiplier corresponding to the transformed average
rate constraint \eqref{eqn:ch3_constraint4} for the $l$-th
MS.~\hfill\QEDclosed \label{exam:rate}
\end{Exam}

This approach provides  potentially simple
solutions for  single-hop wireless networks in the
sense that the cross-layer optimization problem is transformed into
a purely information theoretical optimization problem. Then, the
traditional PHY layer optimization approach, such as
power allocation and subcarrier allocation, can be
readily applied to solve the transformed problem.
The optimal control policy is a function of the CSI with
some weighting shifts by the delay requirements and
hence it is simple to implement in practical communication systems.
However, the implicit assumption behind this approach is that the
user traffic loading is quite high, or
equivalently, the probability that a buffer is empty is quite low
(i.e., the large delay regime). For general
delay regime, the delay-optimal control policy should be adaptive to
both the CSI and the QSI and the performance of the ``equivalent rate
constraint" approach is not promising, as we shall
illustrate  in Section \ref{sec_comparison}. In addition, due to the
complex coupling among the queues in  multi-hop wireless networks,
it is difficult to express delay constraints in terms of all the
control actions, including routing. Therefore, this approach
cannot be easily extended to multi-hop wireless
networks.

\section{Stochastic Lynapnov Stability Drift Approach}
\label{sec_Lynapnov}

Another important method to deal with delay-aware resource control
in  wireless networks is to directly analyze the
characteristics of the control policies in the stochastic stability
sense using the {\em Lyapunov drift}  technique. The Lyapunov drift
theory has a long history in the field of discrete stochastic
processes and Markov chains
\cite{MCandStochasticStability:2009,AppliedProbQueue:2003}. The
authors of \cite{Tassiulas-Ephremides:1992-2} first applied the
Lyapunov drift theory to develop a general algorithm which
stabilizes a multi-hop packet radio network with configurable link
activation sets. The concepts of maximum weight matching (MWM) and
differential backlog scheduling, developed in
\cite{Tassiulas-Ephremides:1992-2}, play important roles in the
dynamic control strategies in queueing networks. The Lyapunov drift theory is then extended to the Lyapunov optimization theory. In this section, we
first introduce the preliminaries and
the main results on the Lynapnov
stability analysis; after that, we present two
examples, one for the {\em Lyapunov drift} theory and the other for the
{\em Lyapunov optimization} thoery.

\subsection{What is Queue Stability?}

First, we introduce the definition of queue
stability as follows.
\begin{Def}[Queue Stability] $\quad$
\begin{enumerate}
\item A single queue is strongly stable if $ \limsup\limits_{T\rightarrow +\infty}\frac{1}{T}\sum\limits_{t=1}^{T}
\mathbf{E}\left [Q(t)\right]<\infty$.
\item A network of queues is strongly stable if all the individual queues of the network are strongly stable.
\end{enumerate}~\hfill\QEDclosed
\end{Def}
A queue is strongly stable if it has a bounded time average backlog.
Throughout this paper, we use the term ``stability'' to refer to
strong stability. Based on the definition of stability, we define
the stability region as follows.

\begin{Def}[Stability Region] The stability region $\Lambda_{\Omega}$ of  a policy $\Omega$ is the set of average arrival rate
vectors $\left\{\overline{\lambda_n^{(c)}} |  n\in\mathcal{N},
c\in \mathcal C\right\}$ for which the system is stable under $\Omega$.
The stability region of the system $\Lambda$ is the {\em closure} of
the set of all average arrival rates $\left\{\overline{\lambda_n^{(c)}} |
n\in\mathcal{N}, c\in \mathcal C\right\}$ for which a stabilizing
control policy exists. Mathematically, we have
\begin{eqnarray}
\Lambda = \bigcup_{\Omega \in G} \Lambda_{\Omega},
\end{eqnarray}
where $G$ denotes the set of all stabilizing feasible control
policies. ~ \hfill\QEDclosed
\end{Def}

\begin{Def}[Throughput-Optimal Policy]
A {\em throughput-optimal policy} dominates\footnote{A policy
$\Omega_1$ {\em dominates} another policy $\Omega_2$ if
$\Lambda_{\Omega_2} \subset \Lambda_{\Omega_1}$} any other policy in
$G$, i.e. has a stability region that is a superset
of the stability region of any other policy in $G$. Therefore, it
should have a stability region equal to $\Lambda$.
~ \hfill\QEDclosed
\end{Def}

In other words, throughput-optimal policies ensure that the queueing
system is stable as long as the average  arrival rate vector is within the
system stability region. Three classes of policies
that are known to be throughput-optimal are the Max
Weight rule (also known as M-LWDF/M-LWWF \cite{StolyarMLWDF:2004} in
single-hop wireless queueing systems), the Exponential (EXP) rule
\cite{EXPruleShakkottai-Stolyar:2002} and the Log
rule\cite{logrule:2009}. The throughput-optimal property of Max
Weight type algorithms is proved by the theory of {\em Lyapunov
drift}\cite{StolyarMLWDF:2004}, which is introduced
in the next part. The Log rule is also proved to be
throughput-optimal by the theorem related to the
{\em Lyapunov drift} in \cite{logrule:2009}. On the other hand, the
EXP rule is proved to be throughput-optimal by the
{\em fluid limit} technique along with a {\em separation of time
scales} argument in \cite{EXPruleShakkottai-Stolyar:2002}.

\subsection{Main Results on Lyapunov Drift}

In order to show the stability property of the queueing
systems, we rely on the well-developed
stability theory in Markov Chains using
\emph{negative Lyapunov drift}
\cite{Asmussen87,Meyn93,Kumar96,McKeown-M-A-W:1999,Leonardi:INFOCOM:2001}.
We use the quadratic {\em Lyapunov function} $L(\mathbf Q) =
\sum_{n \in \mathcal N,c\in \mathcal C} \big(Q_n^{(c)}\big)^2$  for the system queue
state $\mathbf Q$
through the rest of the paper. Based on the {\em Lyapunov function},
we define the (one-slot) {\em Lyapunov drift} as the expected change
in the Lyapunov function from one slot to the next, which is given
by
\begin{align}
\Delta(\mathbf Q(t))\triangleq \mathbf E \left[L(\mathbf
Q(t+1))-L(\mathbf Q(t))| \mathbf Q(t)
\right].\label{eqn:ch4_onestep_LD}
\end{align}
Therefore, the theory of Lyaponov stability is summarized as
follows\cite{Neely03,Georgiadis-Neely-Tassiulas:2006}:
\begin{Thm}[Lyapunov Drift]
If there are positive values $B$, $\epsilon$ such that for all time
slot $t$ we have:
\begin{align}
\Delta(\mathbf Q(t))\leq B-\epsilon \sum_{n \in \mathcal N,c\in
\mathcal C} Q_n^{(c)}(t), \label{eqn:ch4_LD-condition}
\end{align}
then the network is stable, and the average queue length satisfies:
\begin{align}
\limsup_{T\to \infty}\frac{1}{T}\sum_{t=1}^T \sum_{n \in \mathcal
N,c\in \mathcal C} \mathbf E\left[Q_n^{(c)}(t)\right] \leq
\frac{B}{\epsilon}.\nonumber
\end{align} ~ \hfill\QEDclosed
\label{Thm:LD}
\end{Thm}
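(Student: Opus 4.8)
The plan is to derive the bound on the average queue backlog by summing the one-slot drift inequality \eqref{eqn:ch4_LD-condition} over a time horizon and exploiting a telescoping cancellation together with the non-negativity of the Lyapunov function. First I would take expectations on both sides of \eqref{eqn:ch4_LD-condition} with respect to the randomness of $\mathbf Q(t)$; by the law of iterated expectations and the definition \eqref{eqn:ch4_onestep_LD}, this yields $\mathbf E[L(\mathbf Q(t+1))] - \mathbf E[L(\mathbf Q(t))] \le B - \epsilon \sum_{n,c}\mathbf E[Q_n^{(c)}(t)]$ for every slot $t$. The key point here is that the conditional drift bound, which holds pathwise for every realization of $\mathbf Q(t)$, survives averaging without any change of constants.

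Next I would sum this inequality over $t = 1, \dots, T$. The left-hand side telescopes to $\mathbf E[L(\mathbf Q(T+1))] - \mathbf E[L(\mathbf Q(1))]$, while the right-hand side becomes $BT - \epsilon \sum_{t=1}^{T}\sum_{n,c}\mathbf E[Q_n^{(c)}(t)]$. Rearranging and using $L(\mathbf Q) = \sum_{n,c}(Q_n^{(c)})^2 \ge 0$ to drop the term $\mathbf E[L(\mathbf Q(T+1))] \ge 0$, I obtain
\begin{align}
\epsilon \sum_{t=1}^{T}\sum_{n \in \mathcal N, c\in \mathcal C}\mathbf E\left[Q_n^{(c)}(t)\right] \le BT + \mathbf E[L(\mathbf Q(1))]. \nonumber
\end{align}
Dividing through by $\epsilon T$ gives $\frac{1}{T}\sum_{t=1}^{T}\sum_{n,c}\mathbf E[Q_n^{(c)}(t)] \le \frac{B}{\epsilon} + \frac{\mathbf E[L(\mathbf Q(1))]}{\epsilon T}$. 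Taking $\limsup_{T\to\infty}$, the second term vanishes (assuming the initial backlog has finite second moment, e.g., $\mathbf Q(1)$ is finite or deterministic), which establishes both that each queue is strongly stable and that the stated bound $\limsup_{T\to\infty}\frac{1}{T}\sum_{t=1}^{T}\sum_{n,c}\mathbf E[Q_n^{(c)}(t)] \le \frac{B}{\epsilon}$ holds.

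The argument is essentially a discrete Gr\"onwall / supermartingale-type estimate, so I do not anticipate a deep obstacle; the only subtlety worth flagging is the treatment of the boundary term $\mathbf E[L(\mathbf Q(1))]$, which requires a mild integrability assumption on the initial condition (typically taken for granted, since queues start empty), and the implicit verification that all the expectations appearing are finite for each fixed $T$ — this follows inductively from \eqref{eqn:ch4_LD-condition} itself, since a finite $\mathbf E[L(\mathbf Q(t))]$ together with the drift bound forces $\mathbf E[L(\mathbf Q(t+1))]$ to be finite. Beyond that, the proof is a short and standard manipulation.
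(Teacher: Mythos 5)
Your proof is correct and is exactly the standard telescoping argument used for this result in the cited references (the paper itself states the theorem without proof, quoting it from the Lyapunov drift literature): iterate expectations, sum the per-slot drift bound, telescope, drop the nonnegative terminal term $\mathbf E[L(\mathbf Q(T+1))]$, divide by $\epsilon T$, and let $T\to\infty$. Your remarks on the finiteness of $\mathbf E[L(\mathbf Q(1))]$ and the inductive integrability check are the right (and only) technical caveats.
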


Note that if the condition in \eqref{eqn:ch4_LD-condition} holds,
then the Lyapunov drift $\Delta (\mathbf Q(t))\leq - \delta$
($\forall \delta>0$) whenever $\sum_{n\in \mathcal N, c\in \mathcal C} Q_n^{(c)}(t) \geq
\frac{B+\delta}{\epsilon}$. Intuitively, this property ensures
network stability because whenever the queue length vector leaves
the bounded region for the sum  queue length, i.e.\textcolor{red}{,}
$\left\{\mathbf Q\succeq0:\sum_{n\in \mathcal N, c\in \mathcal C} Q_n^{(c)}(t) \leq
\frac{B+\delta}{\epsilon}\right\}$, the negative drift $\Delta (\mathbf
Q(t))\leq - \delta$ eventually drives it back to this region.

Next, we illustrate how to use the Lyapunov drift
to prove the stability of queueing networks and develop
stabilizing control algorithms. Define the maximum
input rate and output rate of node $n$ as $\mu_{\max,n}^{in}=\sup_t
\sum_{c\in \mathcal C}\sum_{l\in \{l:d(l)=n\}}\mu_l^{(c)}(t)$ and
$\mu_{\max,n}^{out}=\sup_t\sum_{c\in \mathcal C} \sum_{l\in
\{l:s(l)=n\}}\mu_l^{(c)}(t)$, respectively. They are finite due to
the resource allocation constraints. Assume the
total exogenous arrival to node $n$ is bounded by a constant
$\lambda_n^{\max}=\sup_t \sum_{c\in \mathcal C}\lambda_n^{(c)}(t)$. From the queue
dynamics in \eqref{eqn:sec2:queue}, we have the following bound for
Lyapunov drift\cite{Georgiadis-Neely-Tassiulas:2006}:
\begin{align} \Delta (\mathbf Q(t))\leq&
B+2\sum_{n\in \mathcal N, c\in \mathcal C}
Q_n^{(c)}(t)\overline{\lambda_n^{(c)}}\nonumber\\
 &-2\mathbf
E\left[\sum_{n\in \mathcal N, c\in \mathcal C}Q_n^{(c)}(t)\left(\sum_{l\in
\{l:s(l)=n\}}\mu_l^{(c)}(t)-\sum_{l\in
\{l:d(l)=n\}}\mu_l^{(c)}(t)\right)\Big|\mathbf
Q(t)\right],\label{eqn:ch4_LD_ineq}
\end{align}
where
\begin{align}
B\triangleq \sum_{n\in \mathcal N}\left((\mu_{\max,n}^{out})^2 +
(\lambda_n^{\max}+\mu_{\max,n}^{in})^2\right).\label{eqn:ch4_def_B}
\end{align}
The {\em dynamic backpressure algorithm} (DBP) is  designed to
minimize the upper bound of the Lyapunov drift (the R.H.S. of
\eqref{eqn:ch4_LD_ineq}) over all policies at each time slot.
For  single-hop wireless networks, we use the link
index $l$ to specify each queue instead of the node index $n$ and
the commodity index $c$ for notational simplicity.  From
\eqref{eqn:ch4_LD_ineq}, the single-hop {\em dynamic
backpressure algorithm} (M-LWDF/M-LWWF)  maximizes $\sum_{l\in
\mathcal L} Q_l(t) \mu_l (t)$ under resource allocation constraints.
Based on   Theorem \ref{Thm:LD}, it is shown that the DBP algorithm
is throughput-optimal
\cite{Neely:PhD:2003,Neely-Modiano-Rohrs:2005}.

After the introduction of dynamic control algorithm
designs in \cite{Tassiulas-Ephremides:1992-2}, the
Lyapunov drift approach is successfully used to
optimize the allocation of computer resources
\cite{BhattacharyaAdaptiveLexicographicOpt:1993}, stabilize packet
switch systems \cite{Keslassy-McKeown:2001,Kumar-Meyn:1995,
Leonardi:INFOCOM:2001, McKeown-M-A-W:1999}, satellite and wireless
systems
\cite{Tassiulas-Ephremides:1993-2,Andrew-Kumaran-Stolyar:2001,Neely-Modiano-Rohrs:2003}.
For example, the concepts of MWM and
differential backlog scheduling are first developed
in \cite{Tassiulas-Ephremides:1992-2} based on the Lyapunov drift
theory. Using the linear programming argument and
the Lyapunov drift theory, it is proved that a MWM
algorithm can achieve a throughput of 100$\%$  for both uniform and
nonuniform arrivals in \cite{McKeown-M-A-W:1999}. Based on the
analytical techniques of the Lyapunov drift, the
bounds on the average delay and queue size averages as well as
variances in input-queued cell-based switches under MWM
are derived in \cite{Leonardi:INFOCOM:2001}. By
the Lyapunov drift theory, the Longest Connected
Queue (LCQ) algorithm is proved in
\cite{Tassiulas-Ephremides:1993-2} to stabilize the system under
certain conditions and minimize the delay for the special case of
symmetric queues (i.e., queues with equal arrival,
service and connectivity statistics). Due to page limitation, we
refer the readers to the above references for the details.

\subsection{Main Results on Lynapnov Optimization}

The {\em Lyapunov drift} theory is
extended to the {\em Lyapunov optimization} theory, through which
we can stabilize queueing networks while
additionally optimize some performance metrics and
satisfy additional
constraints\cite{Georgiadis-Neely-Tassiulas:2006}.

Let $\mathbf x(t)=\big(x_1(t),x_2(t),\cdots, x_K(t)\big)$
represent any associated vector control process
that influences the dynamics of the vector queue length
process $\mathbf Q(t)$. Let $g:\mathbb R^K\to \mathbb R$ be any
scalar valued concave function. Define  $\overline {\mathbf
x}(T)\triangleq\frac{1}{T} \sum_{t=1}^T \mathbf E [\mathbf x (t)]$
and
$\overline g \triangleq\limsup_{T\to \infty} \frac{1}{T} \mathbf E
[g(\mathbf x(t))]$.  Suppose  the goal is to stabilize
the $\mathbf Q(t)$ process while maximizing
$g(\cdot)$ of the time average of the $\mathbf
x(t)$ process, i.e., maximizing
$g(\overline{\mathbf x})$, where $\overline{\mathbf x}=\limsup_{T\to
\infty} \overline {\mathbf x}(T)$. Let $g^*$ represent a desired
``target'' utility value.   The theory of Lyapunov optimization is
summarized as follows:
\begin{Thm}[Lyapunov Optimization] If there are positive constants $V,\epsilon, B$ such that  for all
$t$ and all $\mathbf Q(t)$, the {\em  Lyapunov drift} satisfies:
\begin{eqnarray}
\Delta (\mathbf Q(t)) - V \mathbf{E}[ g(\mathbf x(t)) | \mathbf
Q(t)] \leq B - \epsilon \sum_{n\in \mathcal N, c \in \mathcal C}
Q_n^{(c)}(t) - V g^*, \label{eqn:ch4_L-opt-drfit-bound}
\end{eqnarray}
then, we have:
\begin{eqnarray}
\limsup_{T \rightarrow \infty} \frac{1}{T} \sum_{t = 1}^T \sum_{n\in
\mathcal N, c \in \mathcal C} \mathbf E\left[Q_n^{(c)}(t)\right]  & \leq &
\frac{B +
V (\overline g -g^*)}{\epsilon},\\
\liminf_{T \rightarrow \infty} \frac{1}{T} \sum_{t = 1}^T
g\left(\overline{\mathbf x}(t)\right)& \geq & g^* - \frac{B}{V}.
\end{eqnarray} ~ \hfill\QEDclosed\label{Thm:Lyap_opt}
\end{Thm}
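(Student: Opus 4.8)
The plan is to run the standard telescoping argument for drift inequalities: nothing is needed beyond the definition \eqref{eqn:ch4_onestep_LD} of the one-slot Lyapunov drift, the tower property of conditional expectation, nonnegativity of the quadratic Lyapunov function $L(\cdot)$, and Jensen's inequality for the concave utility $g$. First I would take the \emph{unconditional} expectation of both sides of the hypothesis \eqref{eqn:ch4_L-opt-drfit-bound}. Since $\Delta(\mathbf Q(t)) = \mathbf E[L(\mathbf Q(t+1)) - L(\mathbf Q(t)) \mid \mathbf Q(t)]$, iterated expectation collapses the conditioning and gives, for every $t$,
$$\mathbf E[L(\mathbf Q(t+1))] - \mathbf E[L(\mathbf Q(t))] - V\,\mathbf E[g(\mathbf x(t))] \le B - \epsilon \sum_{n\in\mathcal N, c\in\mathcal C} \mathbf E[Q_n^{(c)}(t)] - Vg^*.$$
Here one tacitly assumes the initial state has finite Lyapunov value, $\mathbf E[L(\mathbf Q(1))] < \infty$, and that all expectations above are finite — both guaranteed by the standing boundedness assumptions on the arrival and service processes.

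Next I would sum this inequality over $t = 1,\dots,T$; the $L(\cdot)$ terms telescope, leaving
$$\mathbf E[L(\mathbf Q(T{+}1))] - \mathbf E[L(\mathbf Q(1))] - V\sum_{t=1}^T \mathbf E[g(\mathbf x(t))] \le BT - \epsilon \sum_{t=1}^T \sum_{n,c} \mathbf E[Q_n^{(c)}(t)] - VTg^*.$$
For the queue-length conclusion, drop $\mathbf E[L(\mathbf Q(T{+}1))] \ge 0$, move $\epsilon\sum_{t}\sum_{n,c}\mathbf E[Q_n^{(c)}(t)]$ to the left-hand side, divide by $\epsilon T$, and let $T\to\infty$: the remainder $\mathbf E[L(\mathbf Q(1))]/(\epsilon T)$ vanishes and $\limsup_T \tfrac1T\sum_{t=1}^T \mathbf E[g(\mathbf x(t))] = \overline g$ by definition of $\overline g$, which yields $\limsup_T \tfrac1T\sum_{t=1}^T\sum_{n,c}\mathbf E[Q_n^{(c)}(t)] \le (B + V(\overline g - g^*))/\epsilon$.

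For the utility conclusion I would instead, in the telescoped inequality written for an arbitrary horizon $t$ in place of $T$, discard both $\mathbf E[L(\mathbf Q(t{+}1))]\ge 0$ and the nonpositive term $-\epsilon\sum_{\tau,n,c}\mathbf E[Q_n^{(c)}(\tau)]$, obtaining $\tfrac1t\sum_{\tau=1}^t \mathbf E[g(\mathbf x(\tau))] \ge g^* - B/V - \mathbf E[L(\mathbf Q(1))]/(Vt)$. This is the one place concavity enters: by Jensen's inequality, $g(\overline{\mathbf x}(t)) = g\!\left(\tfrac1t\sum_{\tau=1}^t \mathbf E[\mathbf x(\tau)]\right) \ge \tfrac1t\sum_{\tau=1}^t g(\mathbf E[\mathbf x(\tau)]) \ge \tfrac1t\sum_{\tau=1}^t \mathbf E[g(\mathbf x(\tau))]$, so $g(\overline{\mathbf x}(t)) \ge g^* - B/V - \mathbf E[L(\mathbf Q(1))]/(Vt)$ for every $t$. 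Averaging this over $t=1,\dots,T$ and taking $\liminf_{T\to\infty}$ — using that $\tfrac1T\sum_{t=1}^T \tfrac1t \to 0$ — delivers $\liminf_T \tfrac1T\sum_{t=1}^T g(\overline{\mathbf x}(t)) \ge g^* - B/V$.

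None of the steps is hard; the only point demanding genuine care is the orientation of Jensen's inequality — it works precisely because $g$ is \emph{concave}, so that $g$ of a time-average dominates the time-average of $\mathbf E[g(\cdot)]$, and reversing it would break the bound. A minor secondary subtlety, worth a parenthetical remark, is that the second conclusion is stated for the Ces\`aro average $\tfrac1T\sum_t g(\overline{\mathbf x}(t))$ rather than for $g(\overline{\mathbf x})$ itself, which is why the per-horizon bound on $g(\overline{\mathbf x}(t))$ must be averaged once more before passing to the limit. The remaining bookkeeping — finiteness of the expectations and the vanishing of the $O(1/T)$ and $O((\log T)/T)$ remainder terms — is routine.
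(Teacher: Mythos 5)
Your proposal is correct: the paper itself states this theorem without proof (it is quoted from \cite{Georgiadis-Neely-Tassiulas:2006}), and your argument — take unconditional expectations, telescope the drift over $t=1,\dots,T$, discard the nonnegative terminal Lyapunov term (and, for the utility bound, the nonpositive queue term), then apply Jensen's inequality twice for the concave $g$ before passing to the limit — is exactly the standard proof from that reference. The two points you flag as needing care (the orientation of Jensen and the extra Ces\`aro averaging of the per-horizon bound on $g(\overline{\mathbf x}(t))$) are indeed the only nontrivial steps, and you handle both correctly.
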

Note that a similar result can be shown for minimizing a convex
function $h:\mathbb R^K\to \mathbb R$ by defining
$g(\cdot)=-h(\cdot)$ and reversing inequalities where appropriate.

Theorem \ref{Thm:Lyap_opt} is most useful when the quantity
$\overline g -g^*$ can be bounded by a constant. Specifically, if
$\overline g -g^*\leq G_{\max}$, then the lower bound on the achieved utility $\overline
g$ can be pushed arbitrarily close to the target utility $g^*$ with
a corresponding increase (linear in $V$) in the upper bound on
$\limsup_{T \rightarrow \infty} \frac{1}{T} \sum_{t = 1}^T\sum_{n\in
\mathcal N, c \in \mathcal C} \mathbf E[Q_n^{(c)}(t)]$. The Lyapunov
optimization theorem in Theorem \ref{Thm:Lyap_opt} suggests that a
good control strategy is to greedily minimize the following drift
metric at every time slot
\[\Delta (\mathbf Q(t)) - V \mathbf{E}\left[ g(\mathbf x(t)) | \mathbf
Q(t)\right], \] i.e.\textcolor{red}{,} the L.H.S. of
\eqref{eqn:ch4_L-opt-drfit-bound}.

Next, we introduce the {\em Energy-Efficient Control Algorithm}
(EECA) \cite{Neely:2006}, which utilizes the {\em Lyapunov
optimization} theory to develop an algorithm that stabilizes the
system and consumes an average power that is arbitrarily close to the
minimum power solution. From
\eqref{eqn:ch4_LD_ineq}, we have
\begin{align}
&\Delta (\mathbf Q(t))+V \mathbf{E}\left[ \sum_{n\in \mathcal N} P_n(t)\Big |
\mathbf Q(t)\right]\leq B+2\sum_{nc}
Q_n^{(c)}(t)\overline{\lambda_n^{(c)}}\nonumber\\
&- \mathbf E\left[2\sum_{nc}Q_n^{(c)}(t)\left(\sum_{l\in
\{l:s(l)=n\}}\mu_l^{(c)}(t)-\sum_{l\in
\{l:d(l)=n\}}\mu_l^{(c)}(t)\right)-V\sum_{n\in \mathcal N}
P_n(t)\Bigg|\mathbf Q(t)\right].\label{eqn:ch4_delay-pwr_ineq}
\end{align}
EECA is designed to minimize the R.H.S. of the
inequality in \eqref{eqn:ch4_delay-pwr_ineq} over all possible power
allocation strategies. For single-hop wireless networks, we use link index $l$ to denote the node index as well as the commodity index. From
\eqref{eqn:ch4_delay-pwr_ineq}, we have that the
single-hop EECA maximizes $\sum_{l\in \mathcal L} \big(2Q_l(t) \mu_l
(t)-V P_l(t)\big)$ over all possible power allocation strategies at
each slot $t$. Based on
Theorem \ref{Thm:Lyap_opt}, it is shown that the EECA is
throughput-optimal and can achieve $[\mathcal O(1/V), \mathcal
O(V)]$ power-delay tradeoff by adjusting the parameter $V$
\cite{Georgiadis-Neely-Tassiulas:2006}.

The Lyapunov optimization theory also has applications in
the transport layer flow control and network
fairness optimization when the exogenous arrival rates are outside
of the network stability region. The Cross Layer
Control (CLC) algorithm is designed to greedily
minimize the R.H.S. of \eqref{eqn:ch4_L-opt-drfit-bound} to achieve
a utility of exogenous arrival rates, which is arbitrarily close to
optimal while maintaining  network
stability\cite{Neely:PhD:2003,Neely-Modiano-Rohrs:2005,Georgiadis-Neely-Tassiulas:2006}.

\begin{Rem}
Note that the average delay bounds developed in
Theorem \ref{Thm:LD} and Theorem \ref{Thm:Lyap_opt} are tight only
when the traffic loading is sufficiently high, while the tightness
of such delay bounds for moderate and light traffic
loading is not known.~\hfill\QEDclosed
\end{Rem}

\subsection{Methodology and Example}
To stabilize  queueing networks
using the Lyapunov drift theorem in Theorem
\ref{Thm:LD} or stabilize  queueing networks while
additionally optimizing some performance metrics
(e.g., maximizing the average weighted sum system
throughput in \eqref{eqn:catI}, minimizing the average weighted sum
power in \eqref{eqn:catIII}, etc) using the Lyapunov optimization
theorem in Theorem \ref{Thm:Lyap_opt}, the procedure can be
summarized as follows:
\begin{enumerate}
\item Choose a Lyapunov function and calculate the Lyapunov drift $\Delta (\mathbf Q(t))$ or $\Delta (\mathbf Q(t)) - V \mathbf{E}[ g(\mathbf x(t)) | \mathbf
Q(t)]$, where $g(\overline{\mathbf x})$ is the utility to be
maximized.
\item Based on the system state observations, minimize the upper bound of $\Delta (\mathbf Q(t))$ or $\Delta (\mathbf Q(t)) - V \mathbf{E}[ g(\mathbf x(t)) | \mathbf
Q(t)]$ over all polices at each time slot.
\item Transform other average performance constraints into queue
stability problems using the technique of virtual cost queues
\cite{Neely-Modiano-Rohrs:2005,Georgiadis-Neely-Tassiulas:2006} if
needed\footnote{Please refer to
\cite{Neely-Modiano-Rohrs:2005,Georgiadis-Neely-Tassiulas:2006} for
the details of the virtual cost queue technique,
which we omit   here due to paper limitation. In all the examples, we use the
Lagrangian techniques to deal with these average
constraints to facilitate the derivation of the closed-form solution
to obtain certain insights and comparisons of the solutions obtained
by different techniques.}.
\end{enumerate}

In the following, we  illustrate how to apply the
Lynapnov drift approach and the Lynapnov optimization approach in
resource allocation for uplink OFDMA systems,
respectively.

\begin{Exam}[Lynapnov Drift Approach for Uplink OFDMA Systems]
In the  uplink OFDMA system example, the dynamic
backpressure algorithm  under the subcarrier allocation constraints in
\eqref{eqn:ch3_subc-constraint} and the average power constraints in
\eqref{eqn:ch3_pwr-constraint}, can be obtained by solving the following optimization problem
\begin{align}
\max_{\Omega = \{ \Omega_{l,s}, \Omega_{l,p} \}}&\  \sum_{l=1}^L
Q_l(t)\sum_{m=1}^{N_F} R_{l,m}(t), \quad \forall t \label{eqn:ch4_LD_exam_obj}\\
s.t. & \ \eqref{eqn:ch3_subc-constraint},
\eqref{eqn:ch3_pwr-constraint} \quad \text{are satisfied}.
\end{align}
Similar to Example \ref{exam:rate},
by applying continuous relaxation
(i.e.\textcolor{red}{,} $s_{l,m} \in [0,1]$) and standard convex
optimization techniques, we can derive the optimal subcarrier and
power allocation as follows:
\begin{align}
s_{l,m} &= \left\{
\begin{array}{ll} 1, &
\text{if} \quad X_{l,m}=\max_{j} \big\{X_{j,m}\big\} >0\\
0, & \textrm{otherwise}
\end{array} \right. \label{eqn:ch_3_joint-subc-allo}\\
p_{l,m}&=s_{l,m}\left(\frac{Q_l(t)}{\gamma_l}-\frac{1}{|H_{l,m}|^2}\right)^{+},\label{eqn:ch_3_joint-pwr-allo}
\end{align}
where $$X_{l,m}=Q_l(t)
\log_2\left(1+|H_{l,m}|^2\left(\frac{Q_l(t)}{\gamma_l}-\frac{1}{|H_{l,m}|^2}\right)^{+}\right)-\gamma_l
\left(\frac{Q_l(t)}{\gamma_l}-\frac{1}{|H_{l,m}|^2}\right)^{+},$$ and
$\gamma_l$ is the Lagrange multiplier corresponding to the average
power constraint in \eqref{eqn:ch3_pwr-constraint} for the $l$-th
MS.~\hfill\QEDclosed\label{Exm:LD}
\end{Exam}

\begin{Exam}[Lynapnov Optimization Approach for Uplink OFDMA Systems]
In the uplink OFDMA system
example, the average sum power minimization in \eqref{eqn:catIII}
(with all weights 1 for illustration simplicity) under network
stability constraint and the subcarrier allocation constraints in
\eqref{eqn:ch3_subc-constraint} based on  Theorem \ref{Thm:Lyap_opt} is given by
\begin{align}
\max_{\Omega = \{ \Omega_{l,s}, \Omega_{l,p} \}}&\  \sum_{l=1}^L
\left(
2 Q_l(t)\sum_{m=1}^{N_F} R_{l,m}(t)-V\sum_{m=1}^{N_F} p_{l,m} \right), \quad \forall t \label{eqn:ch4_L-opt_exam_obj}\\
s.t. & \ \eqref{eqn:ch3_subc-constraint} \quad \text{is satisfied}.
\end{align}
Similar to the previous example, we can derive the optimal
subcarrier and power allocation as follows:
\begin{align}
s_{l,m} &= \left\{
\begin{array}{ll} 1, &
\text{if} \quad X_{l,m}=\max_{j} \big\{X_{j,m}\big\} >0\\
0, & \textrm{otherwise}
\end{array} \right. \label{eqn:ch_3_joint-subc-allo}\\
p_{l,m}&=s_{l,m}\left(\frac{2Q_l(t)}{V
}-\frac{1}{|H_{l,m}|^2}\right)^{+},\label{eqn:ch_3_joint-pwr-allo}
\end{align}
where $$X_{l,m}=2Q_l(t)
\log_2\left(1+|H_{l,m}|^2\left(\frac{2Q_l(t)}{V
}-\frac{1}{|H_{l,m}|^2}\right)^{+}\right)-V \left(\frac{2Q_l(t)}{V
}-\frac{1}{|H_{l,m}|^2}\right)^{+}.$$ Note that
the parameter $V$ is used to
adjust the average power-delay tradeoff.~\hfill\QEDclosed
\label{Exm:L-opt}
\end{Exam}

\begin{Rem}
The {\em Lyapunov stability drift approach} provides a
simple alternative to deal with delay-aware control
problems. The derived {\em cross-layer control policies}
are adaptive to both the CSI and the QSI. The derived policies are also
throughput-optimal (in stability sense). However, as we shall
illustrate, throughput optimality (stability) is only a weak form of
delay performance and the derived policies may not have good delay
performance especially in the small delay regime.
There are many recent studies focusing on delay
reduction in the traditional DBP algorithm in
multi-hop networks and we shall further elaborate this in Section
\ref{sec_routing}.~\hfill\QEDclosed
\end{Rem}

\section{Markov Decision Process and Stochastic Learning Approach}
\label{sec_MDP} In wireless networks, the system state can be
characterized by the aggregation of the  CSI and the QSI.  In fact, under  Assumptions
\ref{ass:channel} and \ref{ass:arrival}, the system state dynamics
evolves as a controlled Markov chain and the
delay-optimal resource control can be modeled as an infinite horizon
average cost MDP \cite{Bertsekas:2007}. MDP is a systematic
approach for delay-optimal control problems, which in general could give
optimal solutions for any operating
regime. However, the main issue associated with the MDP approach is
the {\em curse of dimensionality}. For instance, the cardinality of
the system state space is exponential w.r.t. the number of queues in
the wireless network and hence solving the MDP is quite complicated
in general. In addition, the optimal control actions
are adaptive to the global system QSI and CSI but in some cases,
these CSI and QSI observations are obtained locally at each
node. Hence, a brute-force
centralized solution will lead to enormous complexity as well as
signaling loading to deliver the global CSI and QSI to the
controller. In this section, we briefly summarize the key theories
of MDP and stochastic approximation (SA) and illustrate how we could
utilize the techniques of {\em approximate MDP} and {\em stochastic
learning} to overcome the complexity as well as the
distributed implementation requirement in delay-aware resource control.

\subsection{Why Delay-Optimal Control is an MDP?}

In general, an MDP can be characterized by four
elements, namely the {\em state space}, the {\em action space}, the
{\em state transition probability} and the {\em system cost},
which are defined as follows:
\begin{itemize}
\item $\mathcal X=\{\chi^1,\chi^2,\cdots\}$: the finite state space with $|\mathcal{X}|$ states;

\item $\mathcal{A}=\{a^1,a^2,\cdots \}$: the  action space;

\item $\Pr[\chi'|\chi,a]$: the transition probability from state $\chi$ to state $\chi'$ under action
$a$; and

\item $g(\chi,a)$: the system cost in state $\chi$ under action $a$.
\end{itemize}
Therefore, an MDP is a  4-tuple
$\big(\mathcal{X,A},\Pr[\cdot |\cdot,\cdot], g(\cdot, \cdot)\big)$.
A {\em stationary and deterministic control policy} $\Omega:
\mathcal X\to \mathcal A$ is a mapping from the state space $\mathcal{X}$ to the action space $\mathcal{A}$,
which determines the specific action taken when the system is in
state $\chi$. Given  policy $\Omega$, the corresponding random
process of the system state and the per-stage cost $\big(\chi(t),
g(t)\big)$ evolves as a Markov chain with the
probability measure induced by the transition kernel
$\Pr[\chi'|\chi,\Omega(\chi)]$.
The goal of the infinite horizon average cost problem is to find an optimal policy
$\Omega^*$ such that the long term average cost is minimized
among all feasible policies, i.e.,
\begin{equation}
\min_{\Omega} \limsup_{T \rightarrow \infty} \frac{1}{T}
\sum_{t=1}^{T} \mathbf{E}^{\Omega} \left[
g\left(\chi(t),\Omega\left(\chi(t)\right)\right) \right],\nonumber
\end{equation}
where  $\mathbf{E}^{\Omega}$ denotes the expectation operator taken
w.r.t. the probability measure induced by the control policy
$\Omega$. If the set of feasible policies are unichain policies, then
the optimization problem can be written as
\begin{equation}
\min_{\Omega} \limsup_{T \rightarrow \infty} \frac{1}{T}
\sum_{t=1}^{T} \mathbf{E}^{\Omega} \left[
g\left(\chi(t),\Omega(\chi(t))\right) \right] = \min_{\Omega}
\mathbf{E}^{\pi(\Omega)} \left[ g \left( \chi, \Omega(\chi) \right)
\right],\nonumber
\end{equation}
where $\pi(\Omega)$ is the unique steady state distribution given policy
$\Omega$.

Assume the buffer size is finite and denoted as $N_Q$. The system
queue dynamics $\mathbf{Q}(t)$ evolves according to
\eqref{eqn:sec2:queue} with projection onto $[0,N_Q]$  and the
arrival, departure and the CSI processes are Markovian under
Assumptions \ref{ass:channel} and
\ref{ass:arrival}. Hence, the system state $\chi(t)$ is a finite
state {\em controlled Markov chain} with the following
correspondence:
\begin{itemize}
\item The system state space in the delay-optimal control problem  is defined as the aggregation of
the system CSI and the system QSI, thus $\mathcal X = \mathcal
H\times \mathcal Q$. $\mathcal Q$ and $\mathcal X$ are both finite.

\item The action space is the space of all control actions, including  the resource allocation
actions (e.g., power allocation actions and subcarrier allocation actions) and the routing actions.

\item The transition kernel is given by
\begin{eqnarray}
\Pr\big[\chi(t+1)|\chi(t),\Omega(\chi(t))\big]  =
\Pr\big[\mathbf{Q}(t+1)|\chi(t),\Omega(\chi(t))\big]
\Pr\big[\mathbf{H}(t+1)|\mathbf{H}(t)\big]. \nonumber
\end{eqnarray}

\item By the standard Lagrangian approach, the optimization problem in (\ref{eqn:catII}) can be transformed as follows
\begin{equation} \min_{\Omega} \
L^{\Omega}=\limsup_{T \rightarrow \infty} \frac{1}{T} \sum_{t=1}^{T}
\mathbf{E}^{\Omega}  \left[\sum_{n \in \mathcal{N}} \sum_{c \in
\mathcal{C}} \left( \nu_n^{(c)} Q_n^{(c)}(t)+ \eta_n^{(c)}
\mathbf{I}\left[Q_n^{(c)}(t)=N_Q\right]\right)+\sum_{n \in \mathcal{N}}\gamma_{n}
P_n(t) \right] \nonumber,
\end{equation}
where $\gamma_{n}$ and $\eta_n^{(c)} $ are Lagrange multipliers
corresponding to the average power and average drop rate constraints.
Therefore, the per-stage system cost  function given a system state
$\chi$ is defined as
\begin{equation}
g\big(\chi,\Omega(\chi)\big) = \sum_{n \in \mathcal{N}} \sum_{c \in
\mathcal{C}} \big( \nu_n^{(c)} Q_n^{(c)}+ \eta_n^{(c)}
\mathbf{I}[Q_n^{(c)}=N_Q]\big)+\sum_{n \in \mathcal{N}}\gamma_{n}
P_n \label{eqn:ch5_reward},
\end{equation}
\end{itemize}
As a result, there is a one-one correspondence between the
delay-optimal control problem and the MDP. The average delay
minimization problem under average power constraint
in  (\ref{eqn:catII}) can be modeled as an infinite
horizon MDP to minimize the average cost (delay) per stage as
follows
\begin{Prob}[Delay-Optimal MDP Formulation]\label{Prob:ch5_delay_opt_MDP}
\begin{equation}
\min_{\Omega}L^{\Omega}= \min_{\Omega} \mathop {\limsup}\limits_{T
\to \infty } \frac{1}{T}\sum\limits_{t = 1}^T {\mathbf{E}^{\Omega}
\left[ {g\left( {\chi \left( t \right), \Omega\left( {\chi \left( t
\right)} \right)} \right)} \right]}. \label{eqn:ch5_MDP_opt_prob}
\end{equation}~ \hfill\QEDclosed
\end{Prob}

Note that we restrict our policy space to the unichain policies
where the {\em induced} Markov chains under all feasible unichain
policies are ergodic and share the same state space $\mathcal{X}$.
In addition, we assume the induced Markov chains are irreducible and
hence, the chains are ergodic with steady state distribution
$\pi(\Omega)$.  In this case, the limit of infinite horizon average
cost under policy $\Omega$ (i.e., $L^{\Omega}$)
exists with probability 1 (w.p.1) and is independent of the initial
state.

\subsection{Optimal Solution of the Delay-Optimal MDP}

Under the unichain policy space  assumption, the delay-optimal control policy
of the above MDP is given by the solution of the Bellman
equation\cite{Bertsekas:2007}. This is summarized in the following
Lemma.
\begin{Lem}[Bellman Equation]
If a scalar $\theta$ and a vector
$\mathbf{V}=\left[V(\chi^1),V(\chi^2),\cdots\right]$ satisfy the Bellman
equation for the delay-optimal MDP in Problem
\ref{Prob:ch5_delay_opt_MDP}:
\begin{equation}
\theta  + V ( {\chi ^i } ) = \min_{\Omega} \left\{ g\left(
{\chi^i,\Omega( {\chi ^i })} \right) + \sum\limits_{\chi ^j } {\Pr
\left[ \chi ^j |\chi ^i ,\Omega( \chi ^i  ) \right] V( \chi ^{\rm{j}}
)}  \right\}, \quad \forall \chi^i\in \mathcal
X,\label{eqn:ch_5_bellman}
\end{equation}
then $\theta$ is the optimal average cost per stage
\begin{equation}
\theta  = \mathop {\min }\limits_\Omega  L^\Omega  = L^*.
\end{equation}
Furthermore, if $\Omega^*$ attains the minimum in
\eqref{eqn:ch_5_bellman} for any $\chi^i \in \mathcal X$,
it is the optimal control
policy.~\hfill\QEDclosed\label{Lem:ch_5_optimality_bellman}
\end{Lem}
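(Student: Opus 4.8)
The plan is to prove this as a standard \emph{verification} (policy-comparison) argument, whose only real ingredient is that the right-hand side of \eqref{eqn:ch_5_bellman} is a minimization over feasible decision rules. First I would observe that for \emph{every} feasible (unichain, stationary, deterministic) policy $\Omega$ and every state $\chi^i$, the Bellman equation implies the one-sided inequality
\[
\theta + V(\chi^i) \;\le\; g\big(\chi^i,\Omega(\chi^i)\big) + \sum_{\chi^j}\Pr\big[\chi^j\,|\,\chi^i,\Omega(\chi^i)\big]\, V(\chi^j),
\]
simply because $\Omega(\chi^i)$ is one of the actions over which the minimum is taken. This single inequality, propagated along the trajectory of the Markov chain induced by $\Omega$, drives the whole proof.

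Next I would iterate this inequality in time. Fix $\Omega$, let $\{\chi(t)\}$ be the induced state process, apply $\mathbf{E}^{\Omega}[\cdot\,|\,\chi(1)]$ to the above with $\chi^i=\chi(t)$, and sum over $t=1,\dots,T$. The $V$-terms telescope, leaving
\[
T\theta + V(\chi(1)) \;\le\; \sum_{t=1}^{T}\mathbf{E}^{\Omega}\big[g(\chi(t),\Omega(\chi(t)))\,\big|\,\chi(1)\big] + \mathbf{E}^{\Omega}\big[V(\chi(T+1))\,\big|\,\chi(1)\big].
\]
Dividing by $T$ and using that $\mathcal X$ is finite — so $\mathbf V$ is bounded — the two boundary terms $V(\chi(1))/T$ and $\mathbf{E}^{\Omega}[V(\chi(T+1))\,|\,\chi(1)]/T$ vanish as $T\to\infty$; taking $\limsup$ gives $\theta \le L^{\Omega}$. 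Since $\Omega$ was an arbitrary feasible policy, $\theta \le \min_{\Omega} L^{\Omega} = L^*$.

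For the matching bound I would run the identical computation with the policy $\Omega^*$ attaining the minimum in \eqref{eqn:ch_5_bellman}: now every inequality above becomes an equality, so $\theta = L^{\Omega^*}$, whence $\theta \ge L^*$. Combining the two gives $\theta = L^* = L^{\Omega^*}$, i.e. $\theta$ is the optimal average cost per stage and $\Omega^*$ is an optimal control policy, as claimed.

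The hard part — indeed essentially the only place the hypotheses are used — is making the limiting step rigorous: that the boundary terms die, which is exactly what finiteness of $\mathcal X$ (boundedness of $\mathbf V$) provides, and that $L^{\Omega^*}$ is a genuine, initial-state-independent limit rather than merely a $\limsup$, which is exactly what the unichain/ergodicity assumption on the induced chains buys. A secondary care point is to ensure the action-level minimization in the Bellman equation ranges over the same feasible decision rules that define the policy class in Problem~\ref{Prob:ch5_delay_opt_MDP}, so that the initial inequality is legitimate for each such $\Omega$; one may also restate the standard fact that restricting to stationary deterministic policies incurs no loss in this finite average-cost setting if one wishes to compare against history-dependent or randomized policies.
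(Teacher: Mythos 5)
Your proof is correct and is precisely the standard verification argument for the average-cost Bellman equation that the paper defers to by citing Bertsekas (the paper gives no proof of its own): bound the per-stage inequality using that $\Omega(\chi^i)$ is feasible in the minimization, telescope the $V$-terms along the induced chain, and let the bounded boundary terms vanish. Your care points — boundedness of $\mathbf V$ from finiteness of $\mathcal X$, and the unichain assumption guaranteeing that $L^{\Omega^*}$ is a genuine initial-state-independent limit — are exactly where the lemma's hypotheses are consumed, so nothing is missing.
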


The Bellman equation (\ref{eqn:ch_5_bellman}) can be solved numerically by {\em Offline Relative Value Iteration}\cite{Bertsekas:2007} under
certain conditions. While the general solution of the MDP in
(\ref{eqn:ch5_MDP_opt_prob}) can be expressed as a Bellman equation
in (\ref{eqn:ch_5_bellman}), this is still quite far from getting a
desired solution. There are two major issues, namely the {\em
complexity issue} and the {\em signaling overhead issue}. Although
the relative value iteration approach \cite{Bertsekas:2007} can give optimal solution to
the MDP in (\ref{eqn:ch5_MDP_opt_prob}), the
solution is usually too complicated to compute due to the curse of
dimensionality. For example, consider a wireless network with $N$
queues; the total number of the system QSI states is
$(N_Q + 1)^{N}$ ($N_Q$ is the buffer size of each queue), which grows
exponentially with the number of queues. Thus, it is essentially
impossible to compute the potential function at every possible state
even for wireless networks with a small number of
queues.  Another technical challenge is the distributed
implementation requirement of the control
algorithm. For instance, even if we could obtain the potential
function $\mathbf{V}$, the derived control will be
centralized, requiring knowledge of the global system CSI and QSI  at each time slot. This is highly
undesirable due to the huge signaling overhead. From an
implementation perspective, it is desirable to obtain distributed
solutions where each node computes the control
action based on the local CSI and  QSI only.

In the following sections, we  propose two novel
approaches to address the above complexity and overhead issues using
approximate MDP and stochastic learning. We first briefly
summarize some major preliminary results on stochastic approximation
\cite{Kushner:2003,Borkar:08}. The stochastic approximation
algorithm considered in this paper can be characterized by the
following $d$-dimensional recursion
\begin{equation}
\mathbf{X}_{n+1} = \mathbf{X}_n + \epsilon_n \big[\mathbf
h(\mathbf{X}_n) + \mathbf{Z}_n\big], \label{eqn:x}
\end{equation}
where $\mathbf{X}_n =\left [X_n(1), X_n(2),\cdots, X_n(d)\right]^T$ is a
$d$-dimension vector, and $\{\epsilon_n\}$ is a
sequence of positive step
sizes. If the following conditions are satisfied, we have Theorem \ref{thm:sa} on the convergence property (Theorem 2, \cite{Borkar:08}).
\begin{itemize}
\item The map $h:\mathbb R^d\rightarrow \mathbb R^d$ is Lipschitz: $||\mathbf h(x)-\mathbf h(y)|| \leq L ||x-y||$, for $0<L<\infty$,

\item $\sum_n \epsilon_n = \infty, \quad \sum_n \epsilon_n^2 < \infty$,

\item $\{\mathbf{Z}_n\}$ is a Martingale difference sequence w.r.t. the increasing family of $\sigma$-field:
\begin{equation}
\mathcal{F}_n = \sigma(\mathbf{X}_m,\mathbf{Z}_m,m \leq n)\nonumber.
\end{equation}
Furthermore, $\{\mathbf{Z}_n\}$ are square-integrable with
\begin{equation}
\mathbf{E}\left[||\mathbf{Z}_{n+1}||^2\big|\mathcal{F}_n\right] \leq
C(1+||\mathbf{X}_n||^2) \quad a.s., \quad n\geq 0,\nonumber
\end{equation}
for some constant $C>0$.

\item The iterates in \eqref{eqn:x} remain bounded almost surely, i.e.\textcolor{red}{,} $\sup_n ||\mathbf{X}_n|| < \infty$,
a.s..
\end{itemize}

\begin{Thm}
Sequence $\mathbf{X}_n$ generated by (\ref{eqn:x})
converges almost surely to a (possibly sample path dependent) compact connected
internally chain transitive invariant set of the following ordinary
differential equation (ODE)
\begin{equation}
\dot{\mathbf{X}}(t) = h(\mathbf{X}(t)). \nonumber
\end{equation}~\hfill\QEDclosed\label{thm:sa}
\end{Thm}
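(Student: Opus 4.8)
The plan is to follow the classical \emph{ODE method} for stochastic approximation. First I would pass from the discrete recursion \eqref{eqn:x} to a continuous-time interpolation. Set $t_0=0$ and $t_n=\sum_{k=0}^{n-1}\epsilon_k$; since $\sum_n\epsilon_n=\infty$ we have $t_n\uparrow\infty$, and since $\sum_n\epsilon_n^2<\infty$ we have $\epsilon_n\to 0$, so the mesh $\{t_n\}$ becomes arbitrarily fine. Let $\bar{\mathbf X}(\cdot)$ be the piecewise-linear function with $\bar{\mathbf X}(t_n)=\mathbf X_n$ that interpolates linearly on each $[t_n,t_{n+1}]$. The recursion can then be rewritten as an approximate integral equation: for $t\ge s$,
\begin{equation}
\bar{\mathbf X}(t)=\bar{\mathbf X}(s)+\int_s^t h\big(\bar{\mathbf X}(\tau)\big)\,d\tau+(\text{discretization error})+(\text{accumulated noise}).\nonumber
\end{equation}

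Second, I would control the noise. Set $\zeta_n=\sum_{k=0}^{n-1}\epsilon_k\mathbf Z_k$. Because $\{\mathbf Z_n\}$ is a martingale difference sequence w.r.t.\ $\{\mathcal F_n\}$, $\{\zeta_n\}$ is a martingale, and on the almost sure event $\{\sup_n\|\mathbf X_n\|<\infty\}$ its predictable quadratic variation satisfies
\begin{equation}
\sum_n\epsilon_n^2\,\mathbf E\big[\|\mathbf Z_{n+1}\|^2\,\big|\,\mathcal F_n\big]\le C\Big(1+\sup_m\|\mathbf X_m\|^2\Big)\sum_n\epsilon_n^2<\infty,\nonumber
\end{equation}
so by the martingale convergence theorem $\zeta_n$ converges a.s., and by Doob's maximal inequality the noise increments over any time window vanish as the window recedes to infinity. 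Combining this with $\epsilon_n\to 0$ (which kills the Riemann-sum discretization error) and the boundedness of $h$ on the relevant region (from its Lipschitz property together with boundedness of the iterates), a Gronwall estimate with constant $L$ shows that $\bar{\mathbf X}(\cdot)$ is an \emph{asymptotic pseudotrajectory} of the flow of $\dot{\mathbf X}=h(\mathbf X)$: for every $T>0$, letting $\mathbf x^{(n)}(\cdot)$ be the ODE solution on $[t_n,t_n+T]$ with $\mathbf x^{(n)}(t_n)=\bar{\mathbf X}(t_n)$, one gets $\sup_{\tau\in[0,T]}\big\|\bar{\mathbf X}(t_n+\tau)-\mathbf x^{(n)}(t_n+\tau)\big\|\to 0$ as $n\to\infty$.

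Third, I would invoke the limit-set characterization. Since $\bar{\mathbf X}(\cdot)$ is bounded a.s.\ and is an asymptotic pseudotrajectory of the flow, its $\omega$-limit set is a nonempty, compact, connected, invariant set that is internally chain transitive for the flow (the Bena\"{\i}m characterization of limit sets of asymptotic pseudotrajectories). Translating back to the discrete sequence gives that $\{\mathbf X_n\}$ converges to this set, as claimed. I expect the main obstacle to be the second step: simultaneously handling the accumulated martingale noise, the discretization error, and the Gronwall feedback to obtain the uniform-on-compact-time-windows tracking estimate, and then upgrading ``$\bar{\mathbf X}$ tracks the ODE on every finite window'' to the topological conclusion about internally chain transitive invariant sets, which is where the nontrivial dynamical-systems input enters.
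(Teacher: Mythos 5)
Your proposal is correct and follows essentially the same route as the source the paper cites for this result (Theorem 2 of Borkar's book): the paper itself quotes the theorem without proof, and the canonical argument is exactly your ODE-method chain of interpolated trajectory, martingale-noise convergence via square-summable step sizes and localization on the bounded-iterates event, a Gronwall/asymptotic-pseudotrajectory estimate, and Bena\"{\i}m's characterization of limit sets as compact connected internally chain transitive invariant sets. No gaps beyond the standard technical details you already flag.
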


\subsection{Approach 1: Approximating  Potential Functions}\label{sec_MDP_sub_appro}

In the existing literature on approximate MDP, Bellman and Dreyfus
\cite{Bellman:59} first propose to use polynomials as compact
representations for approximating potential functions. In
\cite{Whitt:78,Reetz:77} the authors discuss different approaches for reducing the size of the state space, which lead to
compact representations of potential functions. On the other hand, in
\cite{Schweitzer:85}, the authors develop several
techniques for approximating potential functions using linear
combinations of fixed sets of basis functions. However, these
approaches are centralized and are only focused on
reducing computational complexity. In this section, we  propose a
novel feature-based method that
addresses both the complexity issue as well as the
distributed requirement.

Similar to Sections
\ref{sec_Rate_Constraint_formulation} and \ref{sec_Lynapnov}, we
consider the uplink OFDMA system example. For notational simplicity, we use the link index $l\in \{1,2,\cdots, L\}$ to denote the
node index as well as the commodity index. Specifically, we have:
\begin{itemize}
\item $\chi_l=(\mathbf{H}_l,Q_l)$ denotes the local state of the $l$-th link. Thus, the global  state $\chi \in \mathcal{X}$ is
the aggregation of the local system states of all
links $\chi = \left\{\chi_l |  l \in \mathcal{L} \right\}$.

\item $\mathcal{X}_l$ denotes the state space of the local states of the $l$-th link. Moreover, the elements in $\mathcal{X}_l$ are enumerated as
$\mathcal{X}_l = \{\chi_l^{\tau}|\tau=1,2,\cdots\}$, where $\tau$
denotes the dummy index enumerating all the local states.

\item Local per-stage cost of the $l$-th link $g_l$ is given by
\begin{equation}
g_l\big(\chi_l,\Omega_l(\chi)\big) = \nu_l Q_l+ \gamma_l
\sum_{m=1}^{N_F} p_{l,m} + \eta_l \mathbf{I} (Q_l=N_Q). \nonumber
\end{equation}
Thus, the overall per-stage cost is given by $g
\big(\chi,\Omega(\chi)\big) = \sum_{l \in \mathcal{L}} g_l
\big(\chi_l,\Omega_l(\chi)\big)$.
\end{itemize}

We consider the linear approximation architecture of the potential
function given below:
\begin{equation}
V(\chi) = \sum_{l=1}^L \sum_{\tau=1}^{|\mathcal{X}_l|}
 \mathbf{I} [\chi_l=\chi^{\tau}_l]\widetilde{V}_l (\chi^{\tau}_l)
\nonumber
\end{equation}
with the vector form given by
\begin{align}
\mathbf{V} = \mathbf{M} \widetilde{\mathbf{V}},
\label{eqn:ch5_vector_aprrox}
\end{align}
where $\{\widetilde{V}_l(\chi_l)\}$
($\forall l \in \mathcal{L}$)  are the {\em per-link
potential functions}, which are defined to be the
solution of the Bellman equation (\ref{eqn:ch_5_bellman}) on some
pre-determined representative system states. We  refer
to these pre-determined subset of system states as
the {\em representative states}. Without loss of generality, we
define the set of the representative states as $\mathcal{X}_R =
\{\chi(l,\tau)|  l \in \mathcal{L},
\tau=2,\cdots,|\mathcal{X}_l|\}$, where $\chi(l,\tau)$ denotes the
joint system state with $\chi_l(l,\tau)=\chi^{\tau}_l$
($2\leq\tau\leq|\mathcal{X}_l|$) and $\chi_{l'}(l,\tau)=\chi_{l'}^1$
( $\forall l' \neq l$). $\mathbf{V} =
[V(\chi^1),\cdots,V(\chi^{|\mathcal{X}|})]^T$ is the vector form of
the original potential function (referred to as
{\em global potential function} in the rest of this paper). The
\textit{parameter vector} $\widetilde{\mathbf{V}}$ and  the
\textit{mapping matrix} $\mathbf{M}$ are given
below:
\begin{align}
\widetilde{\mathbf{V}} &= \bigg[
\widetilde{V}_1(\chi^1_1)\cdots\widetilde{V}_1(\chi_1^{|\mathcal{X}_1|}),\widetilde{V}_2(\chi^1_2)\cdots\widetilde{V}_2(\chi_2^{|\mathcal{X}_2|}),\cdots,
\widetilde{V}_L(\chi^1_L)\cdots\widetilde{V}_L(\chi_L^{|\mathcal{X}_L|})
\bigg]^T \nonumber\\
\mathbf{M} &= \left[ \begin{array}{ccccccc}
    \mathbf{I}[\chi^1_1=\chi^1_1] & \cdots & \mathbf{I}[\chi^1_1=\chi^{|\mathcal{X}_1|}_1], & \cdots & ,\mathbf{I}[\chi^1_L=\chi^1_L] & \cdots & \mathbf{I}[\chi^1_L=\chi^{|\mathcal{X}_1|}_L] \\
    \cdots & \cdots & \cdots, & \cdots &, \cdots & \cdots & \cdots \\
    \mathbf{I}[\chi^{|\mathcal{X}|}_1=\chi^1_1] & \cdots & \mathbf{I}[\chi^{|\mathcal{X}|}_1=\chi^{|\mathcal{X}_1|}_1], & \cdots &, \mathbf{I}[\chi^{|\mathcal{X}|}_L=\chi^1_L] & \cdots & \mathbf{I}[\chi^{|\mathcal{X}|}_L=\chi^{|\mathcal{X}_1|}_L] \nonumber
    \end{array} \right],
\end{align}
where we let
$\widetilde{V}_1(\chi^1_1)=\widetilde{V}_2(\chi^1_2)=\cdots=\widetilde{V}_L(\chi^1_L)=0$
and $\chi^1_l=(\mathbf{H}^1_l,Q^1_l=0)$ ($\forall l \in
\mathcal{L}$). Moreover, we define the inverse mapping matrix
$\mathbf{M}^{-1}$ as
\begin{align}
\mathbf{M}^{-1} = \left[ \begin{array}{ccc}
    \mathbf{I}[\chi^1=\chi(1,1)] \cdots  \mathbf{I}[\chi^1=\chi(1,|\mathcal{X}_1|)], & \cdots, &\mathbf{I}[\chi^1=\chi(L,1)]  \cdots \mathbf{I}[\chi^1=\chi(L,|\mathcal{X}_L|)]\\
    \cdots \quad \quad & \cdots & \cdots \quad \quad\\
    \mathbf{I}[\chi^{|\mathcal{X}|}=\chi(1,1)] \cdots  \mathbf{I}[\chi^{|\mathcal{X}|}=\chi(1,|\mathcal{X}_1|)], & \cdots ,&\mathbf{I}[\chi^{|\mathcal{X}|}=\chi(L,1)]  \cdots  \mathbf{I}[\chi^{|\mathcal{X}|}=\chi(L,|\mathcal{X}_L|)]
    \nonumber
    \end{array} \right]^T.
\end{align}
Thus, we have
\begin{align}
\widetilde{\mathbf{V}} = \mathbf{M}^{-1}
\mathbf{V}.\label{eqn:ch5_vector_approx_inv}
\end{align}

One challenge in utilizing the above approximate MDP is how to
determine the per-link potential function $\widetilde{\mathbf{V}}$.
Instead of solving the Bellman equation on the  representative states, we
estimate $\widetilde{\mathbf{V}}$ using the
stochastic approximation techniques. Specifically, the distributed
online iterative algorithm is given by:
\begin{Alg}[Distributed Online Algorithm for Estimating the Per-Link Potential Functions]$\ $
\begin{itemize}
\item Step 1 (\textbf{Initialization}): Start with a set of initial per-link potential vector
$\widetilde{\mathbf{V}}_{0}$ with $\widetilde{V}_{l,0}(\chi^1_l)=0$
($\forall l \in \mathcal L$).

\item Step 2 (\textbf{Calculate Control Actions}): Based on the realtime observation of the system state $\chi(t)$ at  slot $t$, calculate control actions according to
\begin{equation}
\Omega_t^{*}(\chi(t)) =\arg \min_{\Omega}\left\{\sum_{l \in \mathcal{L}} g_l \left(
\chi_l(t), \Omega_l(\chi(t))
\right)+\sum_{\chi^j}\Pr\left[\chi^j|\chi(t),\Omega(\chi(t))\right]
V_t(\chi^j)\right\} \label{eqn:app_Q_con},
\end{equation}
where $\chi(t)=\{\chi_1(t),\cdots,\chi_L(t)\}$ and $V_t(\chi^j) =
\sum_{l=1}^L \widetilde{V}_{l,t} (\chi^j_l)$.

\item Step 3 (\textbf{Update Per-Link Potential Functions}): After the control policy has been determined, update  all the per-link
potentials functions $\{\widetilde{V}_{l}(\chi^{\tau}_l): 1\leq \tau \leq
|\mathcal X_l|\}$ ($\forall l \in \mathcal L$) based on the
real-time local observations
$\chi_l(t)=\big(\mathbf H_l(t),Q_l(t)\big)$, as follows:
\begin{align}
&\widetilde{V}_{l,t+1}(\chi^{\tau}_l)=\widetilde{V}_{l,t}(\chi^{\tau}_l)+\epsilon_{c(l,\tau,t)}\bigg[\Big(g_l\big(\chi^{\tau}_l,\Omega^{*}_t(\chi(t)\big)+\sum_l
\mathbf E_{\mathbf H_l'}[\widetilde{V}_{l,t}(\mathbf H_l',Q_l(t+1))|\mathbf H_l(t)]\Big)\nonumber\\
&-\Big(g_l\big(\chi^1_l,\Omega^{*}_{\bar
t}(\chi^I)\big)+\sum_l \mathbf E_{\mathbf
H_l'}[\widetilde{V}_{l,t}(\mathbf H_l',Q_l(\bar t+1))|\mathbf
H_l(\bar t)]\Big)-\widetilde{V}_{l,t}(\chi^{\tau}_l)\bigg]\mathbf
1\big[\chi(t)=\chi(l,\tau)\big],\label{eqn:update-v}
\end{align}
where $c(l,\tau,t)=\sum_{t'=0}^t \mathbf I [\chi(t')=\chi(l,\tau)]$
is the number of updates of the representative state $\chi(l,\tau)$
up to slot $t$, $\chi^I\triangleq\{\chi^1_1, \cdots, \chi^1_L\}$
denotes the reference state and $\bar t\triangleq \sup
\{t|\chi(t)=\chi^I\}$\footnote{From \eqref{eqn:update-v}, we can
observe that
$\widetilde{V}_{l,t}(\chi^1_l)=\widetilde{V}_{l,0}(\chi^1_l)=0$
($\forall t>0$).}.

\item Step 4 (\textbf{Termination}): If
$||\widetilde{\mathbf{V}}_{t}-\widetilde{\mathbf{V}}_{t-1}||<\delta_v$,
stop; otherwise, set $t:=t+1$ and go to Step 2.
\end{itemize}~ \hfill\QEDclosed\label{alg:app-v}
\end{Alg}

Using the theory of stochastic approximation on the update equation
in Step 3, the convergence of the above online algorithm is given
below:
\begin{Lem}[Convergence of Algorithm \ref{alg:app-v}] \label{lem:conv}
Denote
\begin{equation}
\mathbf{A}_{t-1} = (1-\epsilon_{t-1}) \mathbf{I} + \mathbf{M}^{-1}
\mathbf{F}(\Omega_t) \mathbf{M} \epsilon_{t-1} \quad \mbox{and}
\quad \mathbf{B}_{t-1} = (1-\epsilon_{t-1}) \mathbf{I} +
\mathbf{M}^{-1} \mathbf{F}(\Omega_{t-1}) \mathbf{M} \epsilon_{t-1},
\label{eqn:ch5-cov-condition}
\end{equation}
where $\Omega_t$ is the unichain control policy at slot $t$,
$\mathbf{F}(\Omega_t)$ is the transition matrix
under the unichain system control policy $\Omega_t$,
and $\mathbf{I}$ is the identity
matrix. If for the entire sequence of control policies
$\{\Omega_t\}$, there exists an $\delta_t>0$ and some positive
integer $\beta$ such that
\begin{equation}
[\mathbf{A}_{\beta-1}\cdots\mathbf{A}_1]_{(i,I)} \geq \delta_t,
\quad [\mathbf{B}_{\beta-1}\cdots\mathbf{B}_1]_{(i,I)} \geq \delta_t
\quad \forall i, \nonumber
\end{equation}
where $[\cdot]_{(i,I)}$ denotes the element in the
$i$-th row and the $I$-th column and
$\delta_t=\mathcal{O}(\epsilon_t)$, then the
following statements are true:
\begin{itemize}
\item The update of the parameter vector will converge almost surely for any given initial parameter vector
$\widetilde{\mathbf{V}}_0$, i.e.\textcolor{red}{,}
$\lim\limits_{t\rightarrow +\infty} \widetilde{\mathbf{V}}_t =
\widetilde{\mathbf{V}}_{\infty}$ a.s..
\item The steady state parameter vector
$\widetilde{\mathbf{V}}_{\infty}$ satisfies: $\theta \mathbf{e} +
\widetilde{\mathbf{V}}_{\infty} = \mathbf{M}^{-1} \mathbf T (
\mathbf{M} \widetilde{\mathbf{V}}_{\infty} )$, where $\theta$ is a
constant.
\end{itemize}~\hfill\QEDclosed
\label{lem:app-v}
\end{Lem}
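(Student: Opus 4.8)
The plan is to recognize that the update equation \eqref{eqn:update-v} in Step 3 of Algorithm \ref{alg:app-v} is a stochastic approximation recursion of the form \eqref{eqn:x}, with the parameter vector $\widetilde{\mathbf{V}}_t$ playing the role of $\mathbf{X}_n$, and then invoke Theorem \ref{thm:sa}. First I would rewrite \eqref{eqn:update-v} in vector form. Observe that the term $g_l(\chi^\tau_l,\Omega^*_t) + \sum_l \mathbf{E}_{\mathbf{H}_l'}[\widetilde{V}_{l,t}(\mathbf{H}_l',Q_l(t+1))\,|\,\mathbf{H}_l(t)]$ is, in expectation over the queue/channel transition conditioned on being in representative state $\chi(l,\tau)$, precisely the $\chi(l,\tau)$-th component of the Bellman operator applied to $\mathbf{M}\widetilde{\mathbf{V}}_t$, i.e.\ of $\mathbf{M}^{-1}\mathbf{T}(\mathbf{M}\widetilde{\mathbf{V}}_t)$, where $\mathbf{T}$ denotes the (policy-dependent) dynamic programming operator $[\mathbf{T}(\mathbf{V})]_{\chi} = g(\chi,\Omega(\chi)) + \sum_{\chi'}\Pr[\chi'|\chi,\Omega(\chi)]V(\chi')$. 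The subtracted reference term $g_l(\chi^1_l,\Omega^*_{\bar t}) + \sum_l \mathbf{E}[\cdots]$ is the corresponding quantity at the reference state $\chi^I$, which after the correspondence becomes the $I$-th component and accounts for the constant $\theta$ (this is the usual relative-value-iteration normalization keeping $\widetilde{V}_{l,t}(\chi^1_l)\equiv 0$). Hence the recursion has the form $\widetilde{\mathbf{V}}_{t+1} = \widetilde{\mathbf{V}}_t + \epsilon_{c(l,\tau,t)}\big[\mathbf{h}(\widetilde{\mathbf{V}}_t) + \mathbf{Z}_t\big]$ with $\mathbf{h}(\widetilde{\mathbf{V}}) = \mathbf{M}^{-1}\mathbf{T}(\mathbf{M}\widetilde{\mathbf{V}}) - \widetilde{\mathbf{V}} - \theta\mathbf{e}$ (up to the per-component scheduling of updates), and the noise $\mathbf{Z}_t$ being the difference between the realized one-step sample and its conditional expectation.

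Second, I would verify the four hypotheses of Theorem \ref{thm:sa}. Lipschitz continuity of $\mathbf{h}$: $\mathbf{T}$ is an affine (for fixed policy) or piecewise-affine-with-min (over policies) map with transition-matrix coefficients, hence nonexpansive up to the matrix norms of $\mathbf{M},\mathbf{M}^{-1}$, so $\mathbf{h}$ is Lipschitz. The step-size conditions $\sum_n\epsilon_n=\infty$, $\sum_n\epsilon_n^2<\infty$ are assumed on $\{\epsilon_n\}$; the only subtlety is that \eqref{eqn:update-v} uses $\epsilon_{c(l,\tau,t)}$ indexed by the local update count, so I would note (as is standard for asynchronous stochastic approximation, cf.\ Borkar) that since the induced chain is ergodic every representative state is visited infinitely often at a positive rate, so each $c(l,\tau,t)\to\infty$ and the per-component step sizes still satisfy the summability conditions; a separation-of-timescales / asynchronous-SA argument lets us treat all components together driven by the same ODE. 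The martingale-difference property of $\mathbf{Z}_t$ w.r.t.\ $\mathcal{F}_t = \sigma(\widetilde{\mathbf{V}}_m,\mathbf{Z}_m,m\le t)$ is immediate from its definition as a ``realization minus conditional mean'', and the square-integrability bound $\mathbf{E}[\|\mathbf{Z}_{t+1}\|^2|\mathcal{F}_t]\le C(1+\|\widetilde{\mathbf{V}}_t\|^2)$ follows because the per-stage cost $g$ is bounded (finite buffer $N_Q$, power constraints) and the value terms appearing are linear in $\widetilde{\mathbf{V}}_t$. Boundedness $\sup_t\|\widetilde{\mathbf{V}}_t\|<\infty$ a.s.\ is where the extra hypothesis of the lemma enters: the products $\mathbf{A}_{\beta-1}\cdots\mathbf{A}_1$ and $\mathbf{B}_{\beta-1}\cdots\mathbf{B}_1$ having a uniformly positive column $I$ with $\delta_t=\mathcal{O}(\epsilon_t)$ is precisely a uniform-contraction / primitivity condition ensuring the linearized recursion is a stable (ergodic) averaging, from which the iterates stay bounded; I would cite this as the analogue of the scaled-ODE stability criterion used to guarantee the boundedness assumption in Theorem \ref{thm:sa}.

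Third, with Theorem \ref{thm:sa} applicable, $\widetilde{\mathbf{V}}_t$ converges a.s.\ to an internally chain transitive invariant set of $\dot{\widetilde{\mathbf{V}}} = \mathbf{h}(\widetilde{\mathbf{V}})$. I would then argue that under the same contraction condition this ODE has a unique globally asymptotically stable equilibrium, so the limit set is a single point $\widetilde{\mathbf{V}}_\infty$, giving the first bullet $\lim_{t\to\infty}\widetilde{\mathbf{V}}_t = \widetilde{\mathbf{V}}_\infty$ a.s. Setting $\mathbf{h}(\widetilde{\mathbf{V}}_\infty)=0$ yields $\widetilde{\mathbf{V}}_\infty + \theta\mathbf{e} = \mathbf{M}^{-1}\mathbf{T}(\mathbf{M}\widetilde{\mathbf{V}}_\infty)$, which is the second bullet (here $\mathbf{T}$ is the Bellman operator including the minimization over $\Omega$, since the policy in Step 2 is the greedy one w.r.t.\ the current $V_t$). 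The main obstacle I anticipate is precisely the boundedness-of-iterates step and the asynchronous/time-varying-policy aspect: because the policy $\Omega_t$ changes from slot to slot as $\widetilde{\mathbf{V}}_t$ changes, the operator driving the recursion is itself varying, so one must show the family $\{\mathbf{M}^{-1}\mathbf{F}(\Omega_t)\mathbf{M}\}$ is uniformly ``well-behaved'' — this is exactly what the $[\mathbf{A}_{\beta-1}\cdots\mathbf{A}_1]_{(i,I)}\ge\delta_t$ hypothesis buys us, and the bulk of the work is translating that algebraic condition into the ODE-stability / a.s.-boundedness conclusion and into uniqueness of the equilibrium. The rest (Lipschitz, martingale, square-integrability) is routine given the finiteness of the state space and boundedness of costs.
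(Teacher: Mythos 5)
Your plan and the paper's proof share the same skeleton (identify the recursion as stochastic approximation, kill the noise by a martingale argument, use the positive-column condition on products of $\mathbf{A}$ and $\mathbf{B}$ to force convergence to the fixed point of $\theta\mathbf{e}+\widetilde{\mathbf{V}}_{\infty}=\mathbf{M}^{-1}\mathbf{T}(\mathbf{M}\widetilde{\mathbf{V}}_{\infty})$), but the execution is genuinely different. The paper does \emph{not} invoke the ODE theorem (Theorem \ref{thm:sa}). After reducing to the synchronous case via the asynchronous/synchronous equivalence of Borkar, it works directly with the Bellman residual $\mathbf{q}_t=\mathbf{M}^{-1}\big[\mathbf{g}(\Omega_t)+\mathbf{F}(\Omega_t)\mathbf{M}\widetilde{\mathbf{V}}_t-\mathbf{M}\widetilde{\mathbf{V}}_t-(T\mathbf{M}\widetilde{\mathbf{V}}_t)(I)\mathbf{e}\big]$: a martingale maximal-inequality step (your noise argument, essentially) shows the iteration is a.s. asymptotically $\widetilde{\mathbf{V}}_{t+1}=\widetilde{\mathbf{V}}_{j}+\sum_{i=j}^{t}\epsilon_i\mathbf{q}_i$, and a second lemma shows $|q^i_{a+b}|\leq C_1\prod_{i}(1-\delta_{a+i\beta})\to 0$. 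The key device in that second lemma is the pair of greedy-policy inequalities $\mathbf{g}(\Omega_t)+\mathbf{F}(\Omega_t)\mathbf{M}\widetilde{\mathbf{V}}_t\leq\mathbf{g}(\Omega_{t-1})+\mathbf{F}(\Omega_{t-1})\mathbf{M}\widetilde{\mathbf{V}}_t$ and its mirror at $\widetilde{\mathbf{V}}_{t-1}$, which sandwich $\mathbf{q}_t$ between $\mathbf{A}_{t-1}\mathbf{q}_{t-1}$ and $\mathbf{B}_{t-1}\mathbf{q}_{t-1}$ (plus a common constant shift); the hypothesis $[\mathbf{A}_{\beta-1}\cdots\mathbf{A}_1]_{(i,I)}\geq\delta_t$ then yields a span-seminorm contraction $\max\mathbf{q}_l-\min\mathbf{q}_l\leq(1-\delta_l)\big(\max\mathbf{q}_{l-\beta}-\min\mathbf{q}_{l-\beta}\big)$. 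This is how the slot-by-slot change of policy is tamed without any ODE machinery.

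The gap in your version is that you route everything through Theorem \ref{thm:sa} and then assert, rather than establish, exactly the two facts that carry all the technical weight: (i) a.s. boundedness of $\{\widetilde{\mathbf{V}}_t\}$ and (ii) that the limiting ODE has a unique globally asymptotically stable equilibrium, so that the chain-transitive invariant set collapses to a single point. In the average-cost / relative-value setting, $\mathbf{h}$ is built from a Bellman operator that is only nonexpansive (not a contraction) in sup norm, composed with the non-square feature maps $\mathbf{M},\mathbf{M}^{-1}$, and driven by a policy that changes every slot; neither (i) nor (ii) is routine, and you explicitly defer both to the positive-column hypothesis without showing how that hypothesis delivers them. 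The paper's sandwich-and-span-contraction lemma is precisely this missing content: it converts the algebraic condition into a quantitative decay of $\mathbf{q}_t$, which simultaneously yields convergence and identifies the limit as the fixed point, with no separate stability or uniqueness argument needed. If you wish to keep the ODE route, you would still have to supply an argument of this type (or an Abounadi--Bertsekas--Borkar style RVI analysis) to verify the boundedness hypothesis of Theorem \ref{thm:sa} and the uniqueness of the equilibrium; as written, the hardest step of your proof is absorbed into an unproven claim.
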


\begin{proof}
Please refer to Appendix A.
\end{proof}

\begin{Rem}[Interpretation of the Conditions in Lemma \ref{lem:app-v}]
Note that $A_t$ and $B_t$ are related to an equivalent transition
matrix of the underlying Markov chain. Eqn.
\eqref{eqn:ch5-cov-condition} simply means that the system state
$\chi^I$ is accessible from all the system states after some finite
number of transition steps. This is a very mild condition and
is satisfied in most of the cases we are
interested.~ \hfill\QEDclosed
\end{Rem}

\begin{Exam}[Approximating Potential Functions for  Uplink OFDMA Systems]
In the example of the uplink OFDMA system example, we consider
packet flows and assume Poisson packet arrival with average arrival
rate $\overline{\lambda}_l$ (packet/s) and exponential packet size
distribution with mean packet size $\overline{N}_l$ (bit/packet) for
the $l$-th MS. Given a stationary policy, define
the conditional mean departure rate of packets of link $l$
(conditioned on the system state $\chi$) as
$\overline{\mu}_l(\chi)=\mu_l(\chi)/\overline{N}_l=\sum_{m=1}^{N_F}
R_{l,m}(\chi)/\overline{N}_l$ (packet/s). Moreover, we assume that
the scheduling slot duration (or frame duration) $\tau$ (s/slot) is
substantially smaller than the average packet inter-arrival time as
well as the average packet service time
($\overline{\lambda}_l\tau \ll 1$ and $\overline{\mu}_l(\chi)\tau
\ll 1$)\footnote{This assumption is reasonable in practical systems.
For instance, in the UL WiMAX (with multiple UL users served
simultaneously), the minimum resource block that could be allocated
to a user in the UL is $8 \times 16$ symbols $-$ 12 pilot
symbols$=$116 symbols. Even with 64QAM and rate $\frac{1}{2}$
coding, the number of payload bits it can carry is
$116\times3$bits$=$348 bits. As a result, when there are a lot of UL
users sharing the WiMAX AP, there could be cases that the MPEG4
packet (around 10K bits) from an UL user cannot be delivered in one
frame. In addition, the delay requirement of MPEG4 is 500ms or more,
while the frame duration of Wimax is 5ms. Hence, it is not necessary
to serve one packet during one scheduling slot so that the scheduler
has more flexibility in allocating resources.
Therefore, in practical systems, an application level packet may
have mean packet length spanning over many time slots (frames) and
this assumption is also adopted in
\cite{Sadiq:2009,Baris:2009,Jennifer:2009,Crabill:1972}.}. There
is a packet departure from the $l$-th queue at the
$(t + 1)$-th slot if the remaining service time of a packet is less
than the current slot duration $\tau$. By the memoryless property of
the exponential distribution, the remaining packet length (also
denoted as $N_l(t)$) at any slot t is also exponentially distributed.
Thus, the conditional probability of a packet departure event at the
$t$-th slot is given by
\begin{align}
&\Pr
\left[\frac{N_l(t)}{\mu_l(t)}<\tau|\chi_l(t),\Omega_l(\chi(t))\right]=\Pr\left[\frac{N_l(t)}{\overline
N_l}<\overline{\mu}_l\big(\chi(t)\big)\tau\right]\nonumber\\
=&1-\exp
\Big(-\overline{\mu}_l\big(\chi(t)\big)\tau\Big)\approx\overline{\mu}_l\big(\chi(t)\big)\tau.
\end{align}
Note that under assumption $\overline{\lambda}_l\tau \ll 1$ and
$\overline{\mu}_l(\chi)\tau \ll 1$, the probability for simultaneous
arrival, departure of two or more packets from the same queue or
different queues and simultaneous arrival as well as departure in a
slot are $\mathcal O\big((\overline \lambda_l \tau)^2\big),\
\mathcal O\big((\overline \mu_l(\chi) \tau)^2\big)$ and $\mathcal
O\big((\overline \lambda_l \tau)\cdot(\overline \mu_l(\chi)
\tau)\big)$ respectively, which are asymptotically negligible.
Hence, the queue dynamics of each link becomes a  controlled
birth-death process with the transition probability of each link
given by
\begin{align}
&\Pr \left[ \chi_l(t+1)=(\mathbf{H}_l(t+1),Q_l(t+1)) \big|
\chi_l(t)=(\mathbf{H}_l(t),Q_l(t)), \Omega_l(\chi(t))
 \right]\nonumber\\
  = &\left\{
\begin{array}{ll}
\Pr [ \mathbf{H}_l(t+1) | \mathbf{H}_l(t) ] \overline{\lambda}_l\tau, & Q_l(t+1)=Q_l(t)+1\\
\Pr [ \mathbf{H}_l(t+1) | \mathbf{H}_l(t) ] \overline{\mu}_l(\chi(t))\tau, & Q_l(t+1)=Q_l(t)-1 \\
\Pr [ \mathbf{H}_l(t+1) | \mathbf{H}_l(t) ]
\big(1-\overline{\mu}_l(\chi(t))\tau-\overline{\lambda}_l\tau\big),
& Q_l(t+1)=Q_l(t)
\end{array}.
\right.
\end{align}
With the above assumptions, the optimization problem in
\eqref{eqn:app_Q_con} can be transformed into
\begin{align}
\min_{\Omega = \{ \Omega_{l,s}, \Omega_{l,p} \}}\ & \sum_{l=1}^L
\left( \sum_{m=1}^{N_F} \gamma_l p_{l,m} -
\frac{\tau}{\overline{N}_l}\Delta \widetilde{V}(Q_l)
\left(\sum_{m=1}^{N_F}s_{l,m}\log(1+p_{l,m}|H_{l,m}|^2)\right)\right),\
\forall \chi=(\mathbf H, \mathbf Q)
\label{eqn:prob4aa}\\
\text{s.t.}\ & \eqref{eqn:ch3_subc-constraint} \quad \text{is satisfied},\nonumber
\end{align}
where $$\Delta
\widetilde{V}(Q_l)=\mathbf{E}_{\mathbf{H}^j_l}\left[\widetilde{V}_l(\mathbf{H}^j_l,Q_l)|\mathbf{H}_l
\right]-\mathbf{E}_{\mathbf{H}^j_l}\left[\widetilde{V}_l(\mathbf{H}^j_l,Q_l-1)|\mathbf{H}_l
\right].$$ Using  standard optimization techniques,
the subcarrier and power allocation is given by
\begin{eqnarray}
p_{l,m}(\mathbf{H}, \mathbf{Q})&=&s_{l,m}(\mathbf{H},
\mathbf{Q})\left(\frac{\frac{\tau}{\overline{N}_l}
\Delta \widetilde{V}(Q_l)}{\gamma_l}-\frac{1}{|H_{l,m}|^2}\right)^{+}\label{eqn:ch_5_joint-pwr-allo}\\
s_{l,m}(\mathbf{H}, \mathbf{Q}) &=& \left\{
\begin{array}{ll} 1, &
\text{if} \quad X_{l,m}=\max_{j} \big\{X_{j,m}\big\} >0\\
0, & \textrm{otherwise}
\end{array} \right. \label{eqn:ch_5_joint-subc-allo}
\end{eqnarray}
where $$X_{l,m}=\frac{\tau}{\overline{N}_l}\Delta \widetilde{V}(Q_l)
\log\Big(1+|H_{l,m}|^2\big(\frac{\frac{\tau}{\overline{N}_l}\Delta
\widetilde{V}(Q_l)}{\gamma_l}-\frac{1}{|H_{l,m}|^2}\big)^{+}\Big)-\gamma_l
\big(\frac{\frac{\tau}{\overline{N}_l}\Delta
\widetilde{V}(Q_l)}{\gamma_l}-\frac{1}{|H_{l,m}|^2}\big)^{+}.$$
Hence, the control action calculation (Step 2 of Algorithm
\ref{alg:app-v}) and per-link potential update (Step 3 of Algorithm
\ref{alg:app-v}) are given below:
\begin{itemize}
\item {\bf Control Action Calculation}: Based on the realtime observation of the system state $\chi(t)=(\mathbf H (t), \mathbf Q(t))$,
perform subcarrier and power allocation according to
\eqref{eqn:ch_5_joint-pwr-allo} and \eqref{eqn:ch_5_joint-subc-allo}
at the $t$-th slot. In distributed implementation, each user
maintains its own per-link potential, i.e.\textcolor{red}{,} the
$l$-th user maintains the potential
$\{\widetilde{V}_l(\chi^{\tau}_l)|1\leq \tau \leq |\mathcal X_l|\}$.
According to (\ref{eqn:ch_5_joint-subc-allo}), the subcarrier
allocation can be determined distributively by an auction mechanism.
For example, each user $l$ submits a bid $X_{l,m}$ on each subcarrier $m$, and the user with the largest bid will get the
subcarrier. When the subcarrier allocation is determined, the power
allocation for each link can be calculated locally at each user
according to \eqref{eqn:ch_5_joint-pwr-allo}.
\item {\bf Per-link Potential Update:} Suppose in the $t$-th time slot the system is in the reference
state $\chi(l,\tau)$, i.e.\textcolor{red}{,} $\chi(t)=\chi(l,\tau)$,
the $l$-th user will update the per-link potential
$\widetilde{V}_l(\chi^{\tau}_l)$ according to
\begin{align}
&\widetilde{V}_{l,t+1}(\chi^{\tau}_l)=\widetilde{V}_{l,t}(\chi^{\tau}_l)+\epsilon_{c(l,\tau,t)}\Big[
\Big(\nu_l Q_l(t)+ \gamma_l \sum_{m=1}^{N_F}
p_{l,m}(t)+\eta_l \mathbf{I} [Q_l(t)=N_Q]\nonumber\\
&+\sum_l \mathbf E_{\mathbf H_l'}[\widetilde{V}_{l,t}(\mathbf H_l',
Q_l(t+1))|\mathbf H_l(t)]\Big)-
\sum_l \mathbf E_{\mathbf H_l'}[\widetilde{V}_{l,t}(\mathbf
H_l',Q_l(\bar t+1))|\mathbf H_l(\bar
t)]-\widetilde{V}_{l,t}(\chi^{\tau}_l)\Big].\label{eqn:update-v-example}
\end{align}
\end{itemize}
\begin{Rem}[Implementation Considerations]
Note that we choose the reference state as
$\chi^I\triangleq\{\chi^1_1,\chi^1_2, \cdots, \chi^1_L\}$ with
$\chi^1_l=(\mathbf H_l^1, Q_l^1)$, where $\mathbf H_l^1$ can be any
fixed local CSI while the local QSI $Q_l^1=0$
(i.e., the buffer is empty). Hence, each source
node of the $l$-th link requires only the local CSI $\mathbf H_l$,
the local QSI $\mathbf Q_l$ as well as some potential functions of
the other links $\left\{\mathbf{E} \left[
\widetilde{V}_{l'}(\mathbf{H}^j_{l'},Q_{l'}) | \mathbf{H}^1_{l'}
\right]\Big| Q_{l'}=0,1\right\}$ in order to compute the update in
\eqref{eqn:update-v-example}. While the computational complexity and
signaling overhead have been substantially reduced compared with the
brute-force centralized solution, the computation and the overhead
of delivering the terms $\left\{\mathbf{E} \left[
\widetilde{V}_{l'}(\mathbf{H}^j_{l'},Q_{l'}) | \mathbf{H}^1_{l'}
\right]\Big| Q_{l'}=0,1\right\}$ to all the nodes are still quite heavy.
In the next section, we  elaborate on a second
approximation approach which could further simplify the complexity
and overhead. ~ \hfill\QEDclosed
\end{Rem}
 \label{exam:app-v}
\end{Exam}

\subsection{Approach 2: Approximating Q-Factors} \label{sec_MDP_sub_appro_q}

In this section, we  propose another approach to address the
complexity issue and signaling overhead issue by approximating Q-factors. Different from the approach of approximating the potential
function, this approach could establish a totally distributed
learning algorithm at each node of the system.  From
the Bellman equation of the delay-optimal MDP in
\eqref{eqn:ch_5_bellman}, the Q-factor is defined as
\begin{equation}
\mathcal{Q}(\chi,a)\triangleq g(\chi,a)+
\sum_{\chi'}\Pr[\chi'|\chi,a]V(\chi')-\theta, \ \forall \chi,
\label{eqn:ch_5_q}
\end{equation}
where $a$ is an arbitrary action in the action space
$\mathcal{A}$. Hence, we have
$$V(\chi) =\min_{a} \mathcal{Q}(\chi,a), \ \forall \chi,$$ and $\mathcal{Q}(\chi,a)$ satisfies the following ``Q-factor form'' of the Bellman equation
\begin{equation}\label{eqn:ch5_bellman_q_factor}
\mathcal{Q}(\chi,a)
=g(\chi,a)+\sum_{\chi'}\Pr[\chi'|\chi,a]\min_b
\mathcal{Q}(\chi',b)-\theta, \ \forall \chi.
\end{equation}
Moreover, the optimal control policy is given by
\begin{equation}
\Omega^{*}(\chi) = \min_{a \in \mathcal{A}} \mathcal{Q} (\chi,a), \
\forall \chi.\label{eqn:app-q-plc}
\end{equation}

As an illustration, we consider the uplink OFDMA
system  example similar to Section \ref{sec_MDP_sub_appro}.
Similar to Section \ref{sec_MDP_sub_appro},
we  approximate the Q-factor in (\ref{eqn:ch_5_q}) by a linear
approximation given by:
\begin{equation}
\mathcal{Q}(\chi,a) \approx \sum_{l=1}^L  q_l(\chi_l,a_l), \ \forall
\chi, \label{eqn:app-q}
\end{equation}
where $a_l$ denotes the local actions (such as the local subcarrier
allocation, local power allocation, precoder design, etc) of the
$l$-th link (thus $a=\{a_l \}$), $q_l(\chi_l,a_l)$ is
referred to as the {\em per-link Q-factor} for the
$l$-th link of local system state $\chi_l$ and action $a_l$.
Moreover, the per-link Q-factor is defined as the solution of the
following per-link fixed-point equation:
\begin{equation}
q_l(\chi_l,a_l) = g_l ( \chi_l,a_l ) + \sum_{\chi_l'\in
\mathcal{X}_l} \Pr [\chi_l'|\chi_l,a_l]
 W_l(\chi_l') - \theta_l, \label{eqn:ch_5_q_l}
\end{equation}
where
\begin{equation}
W_l(\chi_l) = \mathbf{E}_{\mathbf{H}_l} \left[
 \min_{a_l\backslash \mathbf{s}_l} \left[ q_l \left(\chi_l,\{s_{l,m}=\mathbf{I}(|H_{l,m}|\geq H^*_{L-1})\},a_l \right) \right] \right], \nonumber
\end{equation}
\begin{equation}
g_l (\chi_l,a_l) = \nu_l Q_l + \gamma_l \sum_{m=1}^{N_F}
p_{l,m}+\eta_l \mathbf I[Q_l=N_Q] \nonumber,
\end{equation}
$\mathbf{s}_l = \big\{s_{l,m}| \forall m\in \{1,2,\cdots, N_F\}\big\}$,
$a_l\backslash\mathbf{s}_l$ denotes all the control action except
the subcarrier selection $\mathbf{s}_l$, and
$H_{L-1}^*$ denotes
the largest order statistic of the $L-1$ i.i.d. random variables,
each of which has the same distribution as the
channel fading. Therefore, an online Q-Learning algorithm for estimating
per-link Q-factors is given below:
\begin{Alg}[Online Algorithm for Estimating Per-Link Q-factors]
$\quad$
\begin{itemize}
\item Step 1 ({\bf Initialization}): Start with an initial per-link Q-factor $\{q_{l,0}(\chi_l,a_l)\}$ with $q_{l,0}(\chi_l^I,a_l^I)=0$ ($\forall l \in \mathcal L$).

\item Step 2 ({\bf Calculate Control Actions}): Based on the realtime observation of the system state $\chi(t)$ at slot $t$, calculate control
actions according to:
\begin{equation}
a^*_t = \{a_{1,t}^*,a_{2,t}^*,\cdots,a_{L,t}^*\}= \arg\min_{a}
\mathcal{Q}_t(\chi(t),a) = \arg\min_{\{a_{l}\}} \sum_{l =1}^L
q_{l,t} (\chi_l(t),a_l)\nonumber.
\end{equation}

\item Step 3 ({\bf Update  Per-Link Q-Factors}):
After the control action has been determined,
update all the per-link potentials $\{q_{l}(\chi_l,a_l)\}$ ($\forall l \in \mathcal L$) based on the
real-time observations of the local per-link system state
$\chi_l(t)$ ($\forall l \in \mathcal L$), where $\chi_l(t)=(\mathbf
H_l(t),Q_l(t) ) $ as follows:
\begin{align}
q_{l,t+1}(\chi_l^{\tau},a_l) = &q_{l,t}(\chi_l^{\tau},a_l) +
\epsilon_{c_l(\tau,a_l,t)} \Big[  \big(g_l ( \chi_l^{\tau},a_l ) +
W_{l,t}(\chi_l(t+1))\big)
\nonumber\\
&- \big(g_l(\chi_l^I,a^I_{l})+ W_{l,t}(\chi_l(\bar
t_l+1)\big) - q_{l,t}(\chi_l^{\tau},a_l)\Big]\mathbf I[(\chi_l^{\tau},a_l)=(\chi_l(t),a^*_{l,t})],
\label{eqn:update-q}
\end{align}
where $c_l(\tau,a_l,t)=\sum_{t'=0}^t \mathbf
I[(\chi_l^{\tau},a_l)=(\chi_l(t'),a^*_{l,t'})]$ is the number of
updates of the state-action pair $(\chi_l^{\tau},a_l)$ up to slot
$t$, $(\chi^I_l,a^I_l)$ denotes the reference state-action pair\footnote{From
\eqref{eqn:update-q}, we can observe that
$q_{l,t}(\chi_l^I,a_l^I)=q_{l,0}(\chi_l^I,a_l^I)=0$ ($\forall
t>0$).}
of the $l$-th link and $\bar t_l\triangleq \sup
\{t|(\chi_l(t),a_{l,t}^*)=(\chi_l^I,a^I_{l})\}$.

\item Step 4 ({\bf Termination}):
If $\sum_{l \in \mathcal{L}}||\mathbf q_{l,t}-\mathbf
q_{l,t-1}||<\delta_q$, stop; otherwise, set $t:=t+1$ and go to Step
2.
\end{itemize}~ \hfill\QEDclosed\label{alg:app-q}
\end{Alg}

\begin{Rem}[Implementation Considerations]
Using the linear approximation in (\ref{eqn:app-q}), the dimension
of the Q-factor (and hence the computational
complexity) is significantly reduced. Furthermore, the online update
procedure in step 3 can be implemented locally at each node,
requiring only knowledge of  the local CSI $\mathbf
H_l$ and the local QSI $Q_l$.~\hfill\QEDclosed
\end{Rem}

Similarly, using the theory of stochastic approximation on the
update equation (\ref{eqn:update-q}) in Step 3, we summarize the
convergence of the above online learning algorithm as follows:
\begin{Lem}[Convergence of Algorithm \ref{alg:app-q}]
Denote $\mathbf{q}_l = \{q_l(\chi_l,a_l)\}$ and
\begin{equation}
\mathbf{A}_{t-1} = (1-\epsilon_{t-1}) \mathbf{I} +
\mathbf{F}(\Omega_t) \epsilon_{t-1} \quad \quad \mathbf{B}_{t-1} =
(1-\epsilon_{t-1}) \mathbf{I} + \mathbf{F}(\Omega_{t-1})
\epsilon_{t-1} ,\nonumber
\end{equation}
where $\Omega_t$ is the unichain system control policy at the $t$-th
frame, $\mathbf{F}(\Omega_t)$ is the transition matrix  under the unichain system control policy $\Omega_t$,
and $\mathbf{I}$ is the identity
matrix. If for the entire sequence of control policies
$\{\Omega_t\}$, there exists an $\delta_t>0$ and some positive
integer $\beta$ such that
\begin{equation}
[\mathbf{A}_{\beta-1}\cdots\mathbf{A}_1]_{(i,I)} \geq \delta_t,
\quad [\mathbf{B}_{\beta-1}\cdots\mathbf{B}_1]_{(i,I)} \geq \delta_t
\quad \forall a, \nonumber
\end{equation}
where $[\cdot]_{(i,I)}$ denotes the element in the $i$-th row and the $I$-th
column and $\delta_t=\mathcal{O}(\epsilon_t)$, then
the following statements are true:
\begin{itemize}
\item The update of the per-link Q-factor will converge almost surely for any given initial per-link Q-factor $\{\mathbf{q}_{l}^{0}\}$, i.e., $\lim\limits_{t\rightarrow +\infty} \mathbf{q}_{l,t} =
\mathbf{q}_{l,\infty}$ a.s. ($\forall l\in \mathcal{L}$).
\item The steady state per-link Q-factor
$\mathbf{q}_l^{\infty}$ satisfies:
\begin{equation}
q_{l,\infty}(\chi_l,a_l) = g_l ( \chi_l,a_l ) + \sum_{\chi_l'\in
\mathcal{X}_l} \Pr [\chi_l'|\chi_l,a_l]
 W_{l,\infty}(\chi_l') - \theta_l, \nonumber
\end{equation}
where $\theta_l$ is a constant and
\begin{equation}
W_{l,\infty}(\chi_l) = \mathbf{E}_{\mathbf{H}_l} \left\{
 \min_{a_l\backslash\mathbf{s}_l} \left[ q_{l,\infty} \left(\chi_l,\{s_{l,m}=\mathbf{I}(H_{l,m}\geq H^*_{L-1})\},a_l \right) \right] \right\}. \nonumber
\end{equation}
\end{itemize}~\hfill\QEDclosed
 \label{lem:app-q}
\end{Lem}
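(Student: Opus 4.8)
The plan is to cast the per-link Q-factor update \eqref{eqn:update-q} as an instance of the stochastic approximation recursion \eqref{eqn:x} and then invoke Theorem \ref{thm:sa}, exactly paralleling the argument for Lemma \ref{lem:app-v} in Appendix A. First I would stack all the per-link Q-factors into a single vector $\mathbf{q}_t = [\mathbf{q}_{1,t},\dots,\mathbf{q}_{L,t}]^T$ and rewrite \eqref{eqn:update-q} in the form $\mathbf{q}_{t+1} = \mathbf{q}_t + \epsilon_t[\mathbf h(\mathbf q_t) + \mathbf Z_t]$, where $\mathbf h(\cdot)$ is the expected update direction obtained by conditioning on the current state-action visitation and averaging over the random next state, and $\mathbf Z_t$ collects the martingale-difference noise coming from the realized next state $\chi_l(t+1)$ versus its conditional mean. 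Because the update is applied only on the visited state-action pair $(\chi_l(t),a^*_{l,t})$, the effective step sizes are the per-component counts $\epsilon_{c_l(\tau,a_l,t)}$; a standard argument (as in Borkar \cite{Borkar:08}) shows that under the accessibility hypothesis on the $\mathbf A_\beta\cdots\mathbf A_1$ and $\mathbf B_\beta\cdots\mathbf B_1$ products — which guarantees every state-action pair is visited infinitely often with the right asymptotic frequency — these asynchronous step sizes still satisfy $\sum\epsilon = \infty$, $\sum\epsilon^2 < \infty$ componentwise, so the asynchronous recursion tracks the same ODE as its synchronous counterpart.

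Next I would verify the four hypotheses of Theorem \ref{thm:sa} for this recursion. Lipschitz continuity of $\mathbf h$ follows because $W_{l}(\chi_l)$ is a finite expectation of a pointwise minimum of finitely many linear functions of $\mathbf q_l$, hence Lipschitz in $\mathbf q_l$, and $g_l$ does not depend on $\mathbf q$; the subtraction of the reference-state term $g_l(\chi^I_l,a^I_l) + W_{l,t}(\chi_l(\bar t_l+1))$ is likewise Lipschitz. The martingale-difference and square-integrability conditions hold because the per-stage cost $g_l$ and the transition probabilities are bounded (finite state and action spaces, finite buffer $N_Q$, bounded power by the constraint \eqref{eqn:ch3_pwr-constraint}), so the conditional variance of the noise is bounded by an affine function of $\|\mathbf q_t\|^2$. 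Boundedness of the iterates (the fourth, a.s.\ stability condition) is the one that requires real work — I would establish it by exhibiting the associated scaled ODE $\dot{\mathbf q}(t) = \mathbf h_\infty(\mathbf q(t))$ at infinity and showing its origin is globally asymptotically stable, using that the averaged map is a (non-expansive, pseudo-)contraction in a suitable weighted sup-norm by the same $\mathbf M^{-1}\mathbf F(\Omega)\mathbf M$-type structure invoked in Lemma \ref{lem:app-v}; the relative/differential form of the update (subtracting the reference pair) is what kills the otherwise-present invariant direction and pins the ODE down.

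Finally, having obtained convergence, I would identify the limit set. Theorem \ref{thm:sa} gives convergence to an internally chain transitive invariant set of the limiting ODE; because the averaged map $\mathbf q \mapsto \mathbf h(\mathbf q)$ has (up to the constants $\theta_l$ absorbed by the relative update) a unique fixed point characterized by the per-link fixed-point equation \eqref{eqn:ch_5_q_l}, that invariant set is a single point, which yields $\lim_{t\to\infty}\mathbf q_{l,t} = \mathbf q_{l,\infty}$ a.s.\ with $\mathbf q_{l,\infty}$ satisfying the stated fixed-point equation with $W_{l,\infty}$ as displayed. The main obstacle I expect is precisely the a.s.\ boundedness / stability-of-the-ODE step: one must be careful that the linear per-link approximation, combined with the $\min$ over local actions and the order-statistic term $H^*_{L-1}$ hidden inside $W_l$, still yields a well-defined contraction so that the ODE does not blow up — and one must handle the asynchronous, count-dependent step sizes rigorously using the accessibility condition. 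The rest is a routine transcription of the Appendix A argument. Accordingly, I would simply write: \emph{The proof follows along the same lines as the proof of Lemma \ref{lem:app-v}; see Appendix A.}
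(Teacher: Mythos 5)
Your bottom line coincides with the paper's: the published proof of Lemma \ref{lem:app-q} is literally the single sentence that it follows the same approach as Lemma \ref{lem:app-v}, and that is also where you land. However, the detailed route you sketch is genuinely different from what Appendix A actually does. You propose to verify the four hypotheses of the general stochastic-approximation theorem (Theorem \ref{thm:sa}) and then identify the limit as the unique invariant point of the associated ODE. Appendix A never invokes Theorem \ref{thm:sa}; instead it (i) reduces the asynchronous update to a synchronous one by citing Borkar's equivalence result, (ii) shows via a martingale maximal inequality that the accumulated noise vanishes, so the iteration is a.s.\ driven by the Bellman residual $\mathbf{q}_t$ alone, and (iii) sandwiches $\mathbf{q}_t$ between products of the matrices $\mathbf{A}_t$ and $\mathbf{B}_t$, using the accessibility condition $[\mathbf{A}_{\beta-1}\cdots\mathbf{A}_1]_{(i,I)}\geq\delta_t$ to get a span-type contraction $\max \mathbf{q}_l-\min \mathbf{q}_l\leq(1-\delta_l)\left(\max \mathbf{q}_{l-\beta}-\min \mathbf{q}_{l-\beta}\right)$, which forces $\mathbf{q}_t\to 0$ and hence convergence to the fixed point. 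The trade-off is this: your ODE route is the more standard, modular framework, but it leaves as acknowledged open work precisely the two hardest points for average-cost (relative) Q-learning --- a.s.\ boundedness of the iterates and the fact that the internally chain transitive invariant set is a singleton, which is delicate because the averaged map is only non-expansive, not a sup-norm contraction. The paper's residual-contraction argument resolves both at once: the accessibility condition supplies the span contraction that pins down the limit (modulo the additive constant $\theta_l$ absorbed by the reference state-action pair) without analyzing the ODE's invariant sets. Since the adaptation from Lemma \ref{lem:app-v} to the Q-factor setting is mechanical (replace $\mathbf{M}^{-1}\mathbf{F}(\Omega)\mathbf{M}$ by $\mathbf{F}(\Omega)$ and index per-link quantities by state-action pairs rather than states), your closing deferral to Appendix A is exactly what the paper writes, but you should be aware that the appendix argument is not the ODE argument you describe.
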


The proof of Lemma \ref{lem:app-q} follows a similar approach as the
proof of Lemma \ref{lem:app-v}. In the following, we elaborate
Algorithm \ref{alg:app-q} using the uplink OFDMA
system example.

\begin{Exam}[Approximating Q-Factors for Uplink OFDMA Systems]
Consider the uplink OFDMA system example under the same
assumptions as in Example \ref{exam:app-v}. Since the power control
can be determined locally given a subcarrier allocation action, the
per-link Q-factor is defined as $\{q_l(\chi_l,\mathbf{s}_l)|\forall
\chi_l,\mathbf s_l\}$ ($\forall l\in \mathcal L$), satisfying
\begin{align}
q_l(\chi_l,\mathbf{s}_l) =& \min_{\{p_{l,m}\}} \left\{ g_l (
\chi_l,\mathbf{s}_l ) + \sum_{\chi_l' \in \mathcal{X}_l} \Pr
[\chi_l'|\chi_l,\mathbf{s}_l]
 W_l(\chi_l') - \theta_l \right\}\label{eqn:q-d}\\
 =& \min_{\{p_{l,m}\}} \Bigg\{ \nu_l Q_l + \sum_{m=1}^{N_F} \gamma_l p_{l,m}+\eta_l \mathbf I[q_l=N_Q]+\overline W_l(\mathbf H_l, Q_l)+\Delta \overline W_l (\mathbf H_l, Q_l+1)
] \nonumber\\
&\quad \quad - \frac{\tau}{\overline{N}_l} \Delta \overline W_l (\chi_l) \left( \sum_{m=1}^{N_F} s_{l,m} \log(1+p_{l,m}|H_{l,m}|^2 )\right)-\theta_l\Bigg\} \nonumber,
\end{align}
where $\overline W(\mathbf{H}_l,Q_l)=\mathbf{E}_{\mathbf{H}'_l}\big[
W_l(\mathbf{H}'_l,Q_l)|(\mathbf{H}_l,Q_l) \big]$ and $\Delta
\overline W(\mathbf H_l, Q_l)=\overline W(\mathbf{H}_l,Q_l)-\overline
W(\mathbf{H}_l,Q_l-1)$. Due to the symmetry of each subcarrier and
the birth-death queue dynamics, the per-link Q-factor satisfying
\eqref{eqn:q-d} can be written as the summation of the per-link per-subcarrier
Q-factors
\begin{eqnarray}
q_l(\chi_l,\mathbf{s}_l) = \sum_{m=1}^{N_F}
\widetilde{q}_l(H_{l,m},Q_l,s_{l,m}) \nonumber,
\end{eqnarray}
where the per-link per-subcarrier Q-factor $\{\widetilde{q}_l(H,Q,s)\}$
satisfies the following per-link per-subcarrier fixed-point equation:
\begin{align}
&\widetilde{q}_l(H_{l,m},Q_l,s_{l,m})\nonumber\\
=& \min_{\{p_{l,m}\}} \bigg\{ \nu_l \frac{Q_l}{N_F } + \gamma_l p_{l,m}+\frac{\eta_l \mathbf I[Q_l=N_Q]}{N_F}  + \frac{\sum_{m=1}^{N_F}\big( \overline \nu_l (H_{l,m},Q_l)+ \Delta \overline \nu_l (H_{l,m}, Q_l+1)\big)}{N_F} \nonumber\\
 & \quad \quad \quad - \frac{\tau}{\overline{N}_l}\big (\sum_{m=1}^{N_F}\Delta \overline \nu_l(H_{l,m},Q_l)\big) s_{l,m} \log(1+p_{l,m}|H_{l,m}|^2 ) -
 \frac{\theta_l}{N_F}
 \bigg\} \nonumber,
\end{align}
where $\nu_l (H_{l,m},Q_l)=\mathbf E \big[ \widetilde
q_l (H_{l,m},Q_l,s_{l,m}=\mathbf{1}[|H_{l,m}|\geq H^*_{L-1}]) |(
H_{l,m},Q_l) \big]$, $\overline \nu_l (H_{l,m},Q_l)=\mathbf
E_{H_{l,m}'}[\nu_l (H_{l,m}',Q_l)|(H_{l,m},Q_l)]$ and $\Delta
\overline \nu_l (H_{l,m},Q_l)=\overline \nu_l
(H_{l,m},Q_l)-\overline \nu_l (H_{l,m},Q_l-1)$.  According to
\eqref{eqn:app-q-plc}, the subcarrier allocation is given by
\begin{equation}
s_{l,m} = \left\{
\begin{array}{ll} 1, &
\text{if} \quad \widetilde{q}_l(H_{l,m},Q_l, s_{l,m}=1)=\min_{j} \widetilde{q}_j(H_{j,m},Q_l,s_{j,m}=1) \\
0, & \textrm{otherwise}
\end{array} \right..\label{eqn:ch5_per-subc-q-subc-alloc}
\end{equation}
Moreover, given the subcarrier allocation, by the
optimization techniques, the power allocation is given by
\begin{equation}
p_{l,m}(\mathbf{H}, \mathbf{Q}) = s_{l,m}
\left(\frac{\frac{\tau}{\overline{N}_l}\sum_{m=1}^{N_F}\Delta
\overline \nu_l
(H_{l,m},Q_l)}{\gamma_l}-\frac{1}{|H_{l,m}|^2}\right)^{+}.
\label{eqn:ch5_per-subc-q-pwr-alloc}
\end{equation}
Hence, the control action calculation (Step 2 of Algorithm
\ref{alg:app-q}) and Q-factor update (Step 3 of Algorithm
\ref{alg:app-q}) are given below:
\begin{itemize}
\item {\bf Control Action Calculation}:
Based on the realtime observation of the system state
$\chi(t)=(\mathbf H (t), \mathbf Q(t))$ at the $t$-th slot, we
determine the subcarrier allocation distributively by an auction
mechanism according to \eqref{eqn:ch5_per-subc-q-subc-alloc}: each
user $l$ submits one bid
$\widetilde{q}_{l,t}(H_{l,m},Q_l,s_{l,m}=1)$ on each subcarrier. The
user with the minimum bid will get the subcarrier. Given the
subcarrier allocation, the power allocation can be calculated
locally at each user $l$ according to
\eqref{eqn:ch5_per-subc-q-pwr-alloc}.
\item {\bf Q-Factor Update}:
After the control action is determined, update  all the per-link
per-subcarrier potentials $\{\widetilde q_{l}(H,Q,s)\}$
($\forall l \in \mathcal L$) based on the real-time observations of
the local per-link system state $\chi_l(t)=(\mathbf H_l(t),Q_l(t) ) $ as follows:
\begin{align}
\widetilde q_{l,t+1}(H,Q,s) = & \widetilde q_{l,t}(H,Q,s) +
\epsilon_{c_l(Q,H,s,t)} \Big[ \big(\nu_l \frac{Q}{N_F } + \gamma_l
p_{l,m}(t) + \frac{\eta_l \mathbf{I}[Q_l=N_Q]}{N_F} + \overline
w_{l,t}(Q_l(t+1))\big)
\nonumber\\
& -  \overline w_{l,t}(Q_l(\bar t_l+1)- q_{l,t}(H,Q,s)\Big]\mathbf
I[\cup_{m=1}^{N_F}\{(H,Q,s)=\big(H_{l,m}(t),Q_l(t),s_{l,m}(t)\big)\}],
\label{eqn:update-example-q}
\end{align}
where $c_l(H,Q,s,t)=\sum_{t'=0}^t \mathbf
I[\cup_m\{(H,Q,s)=(H_{l,m}(t'),Q_l(t'),s_{l,m}(t'))\}]$ is the
number of updates of the per-subcarrier state-action pair $(H,Q,s)$ up to
slot $t$\textcolor{red}{,}  $(H^I,Q^I,s^I)$ denotes the reference
state-action pair of each link, and $\bar t_l\triangleq \sup
\{t|\cup_m\{(H^I,Q,^I,s^I)=(H_{l,m}(t'),Q_l(t'),s_{l,m}(t'))\}$.

Notice that the above Q-factor update requires only the local
information at each user, and it does not
lead to any signaling overhead.
\end{itemize}~\hfill\QEDclosed
\end{Exam}

\begin{Rem}[Comparison of the Two Approximate MDP Approaches]
Both of the approximate MDP
approaches in Section \ref{sec_MDP_sub_appro} and
Section \ref{sec_MDP_sub_appro_q} can effectively reduce the
complexity and signaling overhead in the MDP
solutions. However, there are pros and cons in the
two approaches:
\begin{itemize}
\item In general, approximating potential functions will lead to
fewer dimensions (and lower memory
requirement) than approximating Q-factors. This is because
a Q-factor depends on both the
system state and the control action.
\item In the distributed implementation of the online learning
algorithm, updates of the per-link
Q-factor can be done locally without any signaling overhead, whereas
updates of the per-link potential function
still require some information exchange
among the nodes.
\item In some cases, computing actions from
potential functions may still be complicated compared with
computing actions from  Q-factors.
\end{itemize}
Although approximate MDP and stochastic learning
can effectively reduce the complexity in the MDP solution, extension
to multi-hop networks is far from trivial due to the complex
interactions of the queue dynamics and the huge state space
involved. More investigations are needed regarding how to
approximate  potential functions or Q-factors as well as the associated
convergence proof in multi-hop networks.~\hfill\QEDclosed
\end{Rem}

\section{Delay-Aware Routing in Multi-hop Wireless Networks}
\label{sec_routing}

In this section, we   focus on delay-aware routing
in wireless multi-hop networks  using the Lyapunov stability drift
approach. Due to the complex coupled queue dynamics in multi-hop
wireless networks, extensions of the  equivalent rate constraint
approach and the   approximate MDP  approach to multi-hop networks
are highly non-trivial. On the other hand, the Lyapunov drift
approach can be easily applied in multi-hop networks to derive
dynamic control algorithms that are adaptive to both the system CSI
and the system QSI. Hence, the Lyapunov drift approach is receiving
more and more attention recently in multi-hop networks. In the
following, we  first review the
traditional DBP routing in wireless multi-hop networks and then
focus on various delay reduction techniques in the
enhanced DBP routing
\cite{Neely-Modiano-Rohrs:2005,BuiSrikant:2009,YingShakkottai08oncombSPDBP,NeelyLMDBP11,StandfordLIFO11,NeelyLIFO11,JavidiORCD1:2010,JavidiORCD2:2010,Neely09DIVBAR}.

\subsection {Traditional DBP Routing}

The traditional DBP routing in wireless multi-hop
networks was originally proposed in the seminal paper
\cite{Tassiulas-Ephremides:1992-2} to maximize the stability region
and then extended in
\cite{Neely-Modiano-Rohrs:2005,Georgiadis-Neely-Tassiulas:2006}.
This traditional DBP routing is illustrated below
\cite{Neely-Modiano-Rohrs:2005,Georgiadis-Neely-Tassiulas:2006}:
\begin{Alg} [Traditional DBP Routing]$\quad$
\begin{itemize}
\item {\bf Resource allocation}: For each commodity $c\in \mathcal C$ and
each link $l\in \mathcal L$, define the {\em backpressure of link
$l$ w.r.t. commodity $c$} as
\begin{align}
\Delta Q_l^{(c)}(t) \triangleq
Q_{s(l)}^{(c)}(t)-Q_{d(l)}^{(c)}(t).\label{eqn:tra-DBP-BP}
\end{align}
For each link $l\in \mathcal L$, define the {\em optimal commodity
of link $l$} as $c^*_l(t)\triangleq \arg \max_{c\in \mathcal C}\Delta
Q_l^{(c)}(t)$ and the {\em optimal backpressure of link $l$} as
$\Delta Q_l^*(t)\triangleq \max_{c\in \mathcal C}\{\Delta Q_l^{(c(t))}(t),0\}$.
Find the transmission rate $\boldsymbol \mu^*(t)$ such that
\begin{align}
\boldsymbol \mu^*(t)=\arg \max_{\boldsymbol \mu \in
\Lambda(t)}\sum_{l\in \mathcal L} \Delta Q_l^*(t)\mu_l.
\end{align}
\item {\bf Routing}: For each link $l$ such that $\Delta Q_l^*(t)>0$,
offer a transmission rate $\mu^*_l(t)$ to the data of commodity
$c^*_l(t)$ through link $l$.
\end{itemize} \label{Alg:tra-DBP-routing}
\end{Alg}

\begin{figure}
\centering
\includegraphics[height=5cm, width=10cm]{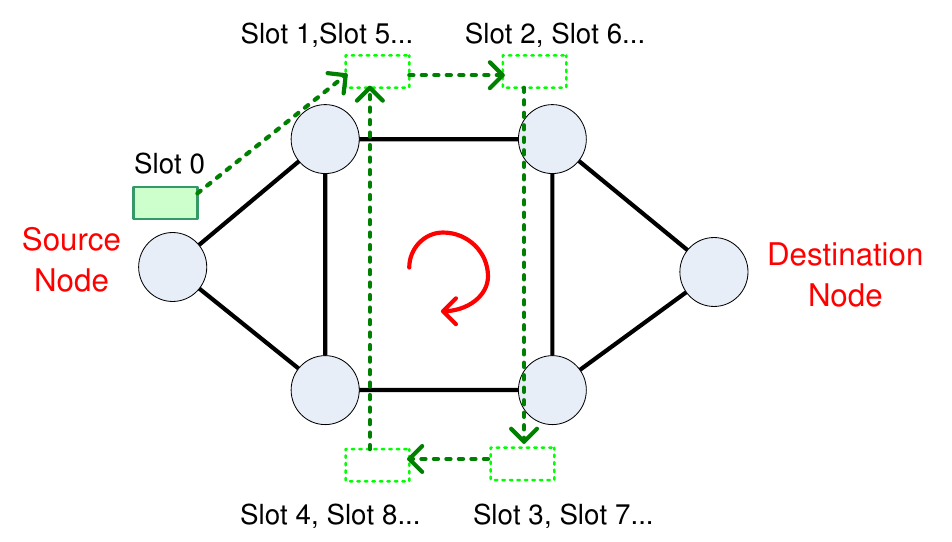}
\caption {Illustration of a single packet taking a periodic walk
under traditional DBP routing in a network.} \label{Fig:circulate}
\end{figure}

\begin{figure}
\centering
\includegraphics[height=3cm, width=10cm]{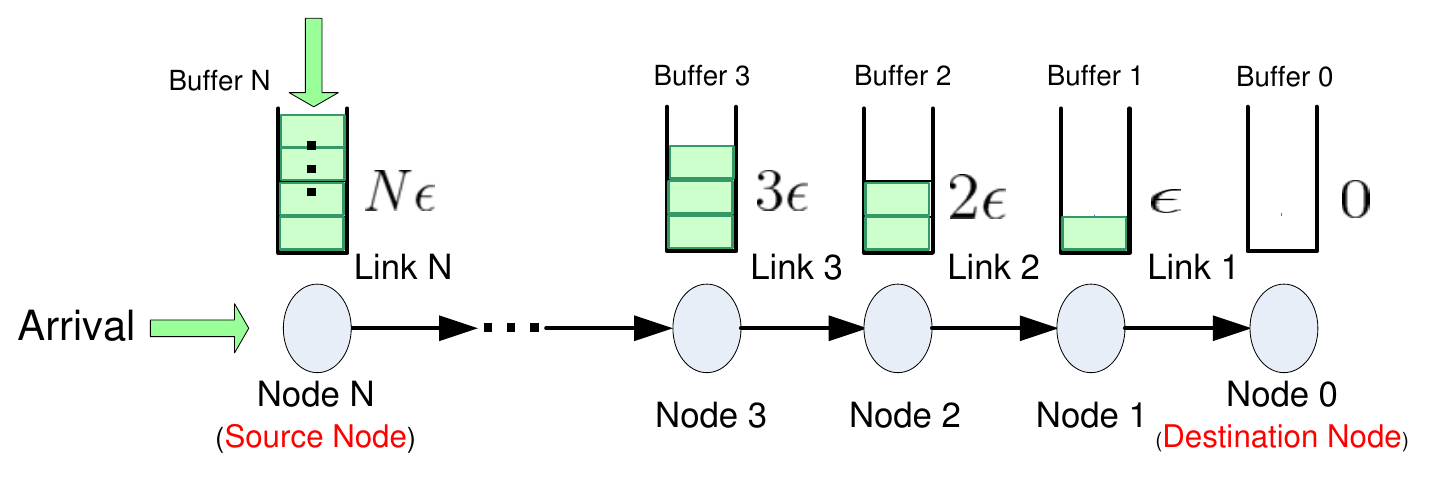}
\caption {Illustration of  differential backlogs under traditional
DBP routing in a tandem queueing network.} \label{Fig:tandem_queue}
\end{figure}

A significant weakness of the above DBP routing
algorithm is that it can suffer from very large delays due to the
following reasons.
\begin{itemize}
\item First, the traditional DBP routing exploits all possible paths
between source-destination pairs (i.e., load
balancing over the entire network) to maximize stability region
without considering the delay performance. This extensive
exploration is essential in order to maintain stability when the
network is heavily loaded. However, under light or moderate loads,
packets may be sent over unnecessarily long routes,
which leads to excessive delays. For example, if a single packet is
injected into an empty network, there is no backpressure to suggest
an appropriate path. Hence, the packet might take a random walk
through the network, or might take a periodic walk that never leads
to the destination, as illustrated in Fig. \ref{Fig:circulate}. In
this case, although the network congestion is quite low (only one
packet, i.e., zero average arrival rate), the
end-to-end delay can be infinity. Similarly, under light load, the
end-to-end delay can be large even though the average queue length
is guaranteed to be bounded by the Lyapunov drift theory
\cite{Neely-Modiano-Rohrs:2005,Georgiadis-Neely-Tassiulas:2006}.
Therefore, it is desirable to design a throughput-optimal routing
which exploits longer paths only when it is necessary.
\item Second, due to large queue sizes that must be maintained to
provide a gradient (backpressure) for each data flow, the DBP
routing can suffer from very large delays and the queues grow in
size with distance from the destination. To obtain some
insights on this, let us consider a flow traveling
through a $N$-hop tandem queueing network with $N+1$ nodes, as
illustrated in Fig. \ref{Fig:tandem_queue}. Let $Q_0$ be the queue
length at destination node 0 and $Q_n$ be the queue length of the
$n$-th upstream node from the destination node $0$, where
$n=1,2,\cdots,N$. Set $Q_0=0$. Under the traditional DBP routing algorithm, for a
link to be scheduled, the differential backlog associated with it
should be positive. Thus, $Q_1-Q_0=Q_1$ will be some positive
number, say $\epsilon$, and $Q_2-Q_1$ will be even larger than
$\epsilon$. For the purpose of obtaining some
insights, let us informally assume
$Q_2-Q_1=\epsilon$, which implies $Q_1=\epsilon$ and
$Q_2=2\epsilon$. Similarly, we can obtain $Q_n=n\epsilon$. Thus, the
total queue length for the flow under the traditional DBP routing
will be $\sum_{n=1}^N Q_n=\epsilon (1+2+\cdots+N)=\mathcal O(N^2)$
\cite{BuiSrikant:2009}. Therefore, it is desirable to design a
routing algorithm which can provide a sufficient gradient for each
data flow without causing too large delay for each packet.
\item Finally, the traditional DBP routing specifies a single next-hop
receiver before transmission, and hence does not exploit the
broadcast advantage of  multi-hop wireless networks when
wireless channels are unreliable (e.g., outage
probability without CSI). Due to multi-receiver diversity in
wireless channels, the probability of successful reception by at
least one node within a subset of potential receivers is much larger
than that of just one receiver. Therefore, it is desirable to design
flexible routing to dynamically adjust routing and scheduling
decisions in response to random outcome of each transmission.
\end{itemize}

Given the drawbacks of the DBP routing discussed
above, most recent studies try to improve the delay performance of
the DBP routing while maintaining its advantage in
throughput optimality. In the following, we  discuss three aspects
of delay reduction in DBP routing by utilizing the
shortest path concept
\cite{Neely-Modiano-Rohrs:2005,Georgiadis-Neely-Tassiulas:2006,BuiSrikant:2009,YingShakkottai08oncombSPDBP},
modifying the queueing disciplines
\cite{NeelyLMDBP11,StandfordLIFO11,NeelyLIFO11} and exploiting
receiver diversity over unreliable channels
\cite{JavidiORCD1:2010,JavidiORCD2:2010,Neely09DIVBAR} in wireless
multi-hop networks.

\subsection {Delay Reduction in DBP Routing by Shortest Path}

One of the major reasons for the poor end-to-end
delay performance of the traditional DBP routing algorithms is the
extensive exploration of routes. However, reducing
delay by restricting the routing constraint sets to some shorter
paths will reduce the stability region. In
\cite{Neely-Modiano-Rohrs:2005,Georgiadis-Neely-Tassiulas:2006,BuiSrikant:2009,YingShakkottai08oncombSPDBP},
the authors try to incorporate the idea of {\em shortest path
routing} into traditional DBP routing algorithms in different ways
while simultaneously maintaining throughput
optimality, which will be illustrated in the following.

The enhanced DBP routing algorithm proposed in
\cite{Neely-Modiano-Rohrs:2005} programs a {\em shortest path bias}
into the backpressure $\Delta Q_l^{(c)}(t)$ defined in
\eqref{eqn:tra-DBP-BP} so that in light or moderate loading
situations, nodes are inclined to route packets in the direction of
their destinations. Therefore, we call this enhanced DBP routing
algorithm  {\em shortest path bias DBP routing}. Specifically, the
{\em backpressure of link $l$ w.r.t. commodity $c$} in the enhanced
DBP routing algorithm is defined as
\begin{align}
\Delta Q_l^{(c)}(t) \triangleq \left(
Q_{s(l)}^{(c)}(t)+Z_{s(l)}^{(c)}\right)-\left(Q_{d(l)}^{(c)}(t)+Z_{d(l)}^{(c)}\right),\label{eqn:enh-DBP-BP-Neely}
\end{align}
where $Z_{n}^{(c)}$ is the shortest path bias at node $n$ for
commodity $c$. $Z_{n}^{(c)}$ can be chosen to be proportional to the
distance (or number of hops) between node $n$ to the destination of
commodity $c$ (where $Z_{n}^{(c)}=0$ if node $n$ is the destination
of commodity $c$). Besides the shortest path bias, the shortest path
bias DBP routing algorithm is the same as the traditional DBP
routing algorithm in Algorithm \ref{Alg:tra-DBP-routing}. It is
shown that the enhanced DBP algorithm through the shortest path bias
is still throughput-optimal. In addition, the
simulation results establish better delay performance of the
shortest path bias
 DBP routing than the traditional DBP routing
 \cite{Georgiadis-Neely-Tassiulas:2006}.

To reduce the end-to-end delay,
\cite{BuiSrikant:2009} introduces a cost function,
i.e., the total link rate in the network. Given a
set of packet arrival rates that lie within the stability region, the
total link rate can be used to measure the efficiency of the system
resource utilization. Thus, the min-resource routing problem is
formulated to find the routes, which minimizes the total link rate:
\begin{align}
\min_{\boldsymbol \mu\in\Lambda}\quad & \sum_{l\in \mathcal L}\sum_{c\in \mathcal C}
\mu_l^{(c)}\label{eqn:enh-DBP-min-res-prob}\\
s.t. \quad & \lambda^{(c)}_{n}+\sum_{l\in
\{l:d(l)=n\}}\mu^{(c)}_{l}\leq \sum_{l\in
\{l:s(l)=n\}}\mu^{(c)}_{l},\ \forall n \in \mathcal N, \ c\in
\mathcal C. \nonumber
\end{align}
Due to the nature of the cost function, shorter paths are preferred
over longer paths. For example, in a network with all links of equal
capacity, we prefer to have as few hops as possible to have good delay performance. The
associated routing algorithm  is called {\em min-resource DBP
routing} algorithm. Instead of \eqref{eqn:tra-DBP-BP}, it uses
\eqref{eqn:tra-DBP-BP} minus a parameter $V$ as the {\em
backpressure of link $l$ w.r.t. commodity $c$},
i.e.,
\begin{align}
\Delta Q_l^{(c)}(t) \triangleq
Q_{s(l)}^{(c)}(t)-Q_{d(l)}^{(c)}(t)-V.\label{eqn:enh-DBP-BP-Srikant}
\end{align}
Except for  parameter $V$, the min-resource DBP routing algorithm is
the same as the traditional DBP routing algorithm in Algorithm
\ref{Alg:tra-DBP-routing}. It is shown in \cite{BuiSrikant:2009}
that the average total link rate under the min-resource DBP routing
algorithm is within $\mathcal O(1/V)$ of the optimal value of the
optimization problem in \eqref{eqn:enh-DBP-min-res-prob}.
A larger $V$ corresponds to a
smaller delay and slower convergence speed to the stationary regime
while a smaller $V$ leads to a
larger delay and faster convergence speed. It is confirmed by
simulation results in \cite{BuiSrikant:2009} that the min-resource
DBP routing algorithm with a proper $V$ has better delay performance
than the traditional DBP routing algorithm.

The {\em joint traffic-splitting and
shortest-path-aided DBP routing} algorithm proposed in
\cite{YingShakkottai08oncombSPDBP} incorporates the shortest path concept
into the DBP routing by minimizing the average number of hops
between sources and destinations. Let $c\in \mathcal C$ denote a
flow (source-destination pair) in the multi-hop network, which is specified by its source and
destination, where $\mathcal C$ denotes the set of all the flows.
Let $A_{c,h}$ denote the fraction of flow $c$ transmitted over paths
with $h$ hops. Therefore, the average path-length minimization
problem is formulated as below\footnote{We omit the constraints of
the optimization problem in \eqref{eqn:enh-DBP-SP-prob} due to page
limit. Please refer to \cite{YingShakkottai08oncombSPDBP} for
details.}:
\begin{align}
\min_{\{A_{c,h}\}} \quad \sum_{c\in \mathcal C}\sum_{0\leq h \leq
N-1}V h A_{c,h}, \label{eqn:enh-DBP-SP-prob}
\end{align}
where $V$ is a positive constant and the optimal solution is the
same for all $V>0$. Note that $N-1$ is a universal upper bound on
the number of hops along loop-free paths. To realize
the shortest-path-aided routing, each node $n$
maintains a separate queue $(n,d(c),h)$ for the packets required to be
delivered to node $d(c)$ within $h\in \mathbb N$ hops and denote its
queue length at slot $t$ as $Q^{(c)}_{n,h}(t)$, where $\mathbb N$ denotes the set
of  natural numbers. Accordingly, define the {\em backpressure of
link $l$ w.r.t. flow $c$ and hop $h$} as follows:
\begin{align}
\Delta Q_{l,h}^{(c)}(t)\triangleq\begin{cases}
Q_{s(l),h}^{(c)}(t)-Q_{d(l),h-1}^{(c)}(t), & h-1\geq
H^{\min}_{d(l)\to c}\\
-\infty, & h-1< H^{\min}_{d(l)\to c}
\end{cases}\label{eqn:enh-DBP-BP-hopq-Shakkottai}
\end{align}
where $H^{\min}_{d(l)\to c}$ is the minimum number of hops, i.e.,
the length of the shortest path, required from node $d(l)$ to the
destination of flow $c$. Based on $\Delta Q_{l,h}^{(c)}(t)$,
define the {\em backpressure of link $l$ w.r.t. flow $c$} as
\begin{align}
\Delta Q_l^{(c)}(t) \triangleq \max_{H^{\min}_{d(l)\to c}\leq h\leq
N-1} \Delta Q_{l,h}^{(c)}(t).\label{eqn:enh-DBP-BP-Shakkottai}
\end{align}
The joint traffic-splitting and shortest-path-aided DBP routing
algorithm proposed in \cite{YingShakkottai08oncombSPDBP} consists of
two parts. For traffic splitting, at time $t$, the
exogenous arrivals of flow $c$ are deposited into queue $
\left(s(c),d(c),h^*_c(t)\right)$, where $ h^*_c(t)\triangleq \arg \min_{0<h\leq N-1}
\left(Vh+Q^{(d(c))}_{s(c),h}(t)\right)$. The shortest-path-aided DBP routing is
the same as the traditional DBP routing with $\Delta Q_l^{(c)}(t) $
defined in \eqref{eqn:enh-DBP-BP-Shakkottai}. It is shown in
\cite{YingShakkottai08oncombSPDBP} that the joint traffic-splitting
and shortest-path-aided DBP routing algorithm is throughput-optimal
and solves the average path length minimization problem in
\eqref{eqn:enh-DBP-SP-prob} when $V\to \infty$. This enhanced DBP
routing algorithm achieves significant delay improvement over the
traditional DBP algorithm.

\subsection {Delay Reduction in DBP Routing by Modified Queueing Discipline}

The traditional DBP routing maintains large queue
lengths at  nodes (especially those far from the destination nodes)
so as to form gradients for data flows. This guarantees throughput
optimality while leading to poor delay
performance. In the following, we  introduce the algorithms proposed
in \cite{NeelyLMDBP11,StandfordLIFO11,NeelyLIFO11}, which try to
maintain the  gradients for data flows in DBP routing while reducing
delay for most of the packets.

In \cite{NeelyLMDBP11}, the proposed {\em fast
quadratic Lyapunov based algorithm} (FQLA) can achieve $[\mathcal O
(1/V)$, $O(\log^2(V))]$ utility-delay tradeoff, which is greatly
improved compared with $[\mathcal O (1/V),O(V)]$ utility-delay
tradeoff of the traditional DBP routing algorithm (also called quadratic Lyapunov based algorithm (QLA)
in \cite{NeelyLMDBP11}). In \cite{NeelyLMDBP11}, the authors show
that under QLA, the backlog vector ``typically" stays close to an
``attractor" and the probability of the backlog vector
deviating from the attractor is exponentially
decreasing in distance. Based on this ``exponential attraction"
result, FQLA subtracts the attractor to form a virtual backlog
process and applies the traditional DBP routing based on the virtual
backlog process with slight modification by allowing packet dropping
under certain conditions. It is shown in \cite{NeelyLMDBP11} that
the FQLA is throughput-optimal and the packet drop fraction is
$\mathcal O(1/V^{\log V})$. With the sacrifice of packet dropping,
the FQLA improves the utility-delay tradeoff.

{\em LIFO DBP routing} is first proposed in the
empirical work \cite{StandfordLIFO11} by simply replacing the FIFO
in the traditional DBP routing with the LIFO service discipline. The
authors in \cite{StandfordLIFO11} show that LIFO DBP routing
drastically improves average delay by simulations. Using the
``exponential attraction" result developed in \cite{NeelyLMDBP11},
Neely shows in \cite{NeelyLIFO11} that the LIFO DBP routing
algorithm can achieve $[\mathcal O (1/V),O(\log^2(V))]$
utility-delay tradeoff for almost all the arrival packets except
$\mathcal O(1/V^{\log V})$ fraction of the arrival packets. The
reason is as follows. The FIFO and LIFO DBP routing result in the
same queue process. By the ``exponential attraction" result in
\cite{NeelyLMDBP11}, the queue size under  DBP routing will mostly
fluctuate within the interval $[Q_{\text{Low}},Q_{\text{High}}]$,
the length of which is shown to be $O(\log^2(V))$. The queue process
deviates this region with probability exponentially decreasing in
distance. Using LIFO, most packets (except $\mathcal O(1/V^{\log
V})$ of the arrivals) enter and leave the queue when the queue
length is in $[Q_{\text{Low}},Q_{\text{High}}]$, i.e., they ``see" a
queue with average queue length about
$Q_{\text{High}}-Q_{\text{Low}}=O(\log^2(V))$. Therefore, the
average delay of these packets is greatly reduced with the penalty
that the packets of fraction $\mathcal O(1/V^{\log V})$ of the
arrivals at the front of the queue suffer from large delay and have
to be dropped.

\subsection {Delay Reduction in DBP Routing by Receiver Diversity}

Under unreliable channel conditions, the
traditional DBP routing, which makes routing decisions before each
transmission, fails to exploit multi-receiver diversity in wireless
networks. In the following, we   discuss the routing algorithms,
which  use  the receiver diversity under unreliable channel
conditions by routing packets to the successful receivers after each
transmission\cite{JavidiORCD1:2010,JavidiORCD2:2010,Neely09DIVBAR}.

The ExOR proposed in
\cite{Biswas05exor:opportunistic} is a shortest path routing
algorithm which uses expected transmission counting
metric (ETX) as the metric of link cost and chooses
the receiver with the minimum ETX after each transmission. Thus, it
can achieve better delay performance than the shortest path routing
algorithm using ETX with routing decision made before transmission
\cite{Couto03ahigh-throughput}. However, ExOR is not throughput
optimal. In \cite{JavidiORCD1:2010,JavidiORCD2:2010}, the authors
propose the {\em opportunistic routing with congestion diversity}
(ORCD) algorithm for multi-hop wireless
networks with multiple sources and a single destination. ORCD is a
shortest path routing algorithm with the queue length based congestion
measure as the path length metric, and routes the packets along the
paths with the minimum congestion after transmission. It is shown
that ORCD is throughput-optimal.

{\em Diversity backpressure routing} (DIVBAR)
algorithm proposed in \cite{Neely09DIVBAR} is a DBP routing
algorithm exploiting receiver diversity in  multi-hop wireless
networks with multiple sources and multiple destinations. In DIVBAR,
the backpressure of each node $n$ w.r.t. commodity $c$, i.e.,
$\Delta Q_n^{(c)}(t)$, is defined  as the success probability
weighted sum of  $\Delta Q_l^{(c)}(t)$ defined in
\eqref{eqn:tra-DBP-BP} over all link $l$ with $s(l)=n$. Then, the
optimal commodity of node $n$ is defined as $c^*_n(t)\triangleq \arg
\max_{c\in\mathcal C}\Delta Q_n^{(c)}(t)$. The resource allocation of DIVBAR
based on $c^*_n(t)$ is similar to that of the traditional DBP
algorithm, while packets are routed to the receiver with the largest
positive $\Delta Q_l^{(c^*_n(t))}(t)$ among all the successful
receivers after each transmission. Like traditional DBP routing,
DIVBAR is throughput-optimal.

\section{Comparisons}
\label{sec_comparison}

In this section, we   compare the three approaches
in dealing with delay sensitive resource allocation
using the uplink OFDMA system example as illustration.

\subsection {Comparison of Solution Structures and Complexity}
In general, the solution obtained by the first approach (equivalent
rate constraint) is adaptive to the CSI only. On the other hand, the
solution obtained by the second approach (Lyapunov stability drift)
and the third approach (MDP) is adaptive to both the CSI and the QSI
but the MDP approach has higher complexity. Using the uplink OFDMA system as an
example, the solution structure of the second and third approaches
are quite similar. For the Lyapunov drift approach, the solution is
obtained by the one-hop dynamic backpressure algorithm (M-LWDF) in
Example \ref{Exm:LD} with the following optimization problem
formulation:
\begin{align}
\max_{\Omega = \{ \Omega_{l,s}, \Omega_{l,p} \}}&\  \sum_{l=1}^L
Q_l\left(\sum_{m=1}^{N_F}s_{l,m}\log(1+p_{l,m}|H_{l,m}|^2)\right), \quad \forall (\mathbf H,\mathbf Q) \in \mathcal H \times \mathcal Q\label{eqn:ch6-LD-obj}\\
s.t. \ &  s_{l,m}\in\{0,1\}, \forall l\in \mathcal L, m\in \{1,2,\cdots, N_F\},\quad \sum_{l=1}^N s_{l,m} =1, \quad  \forall l\in\mathcal L \nonumber \\
&\mathbf{E}^{\Omega}\left[\sum_{m=1}^{N_F}p_{l,m}\right]\leq P_l, \quad
\forall l\in\mathcal L. \nonumber
\end{align}
For the MDP approach, the solution is obtained by solving the
Bellman equation (\ref{eqn:ch_5_bellman}) in Example
\ref{exam:app-v} with the following equivalent optimization problem
formulation:
\begin{align}
\max_{\Omega = \{ \Omega_{l,s}, \Omega_{l,p} \}}&\ \sum_{l=1}^L
\frac{\tau}{\overline{N}_l}\Delta \widetilde{V}(Q_l)
\left(\sum_{m=1}^{N_F}s_{l,m}\log(1+p_{l,m}|H_{l,m}|^2)\right),\quad
\forall (\mathbf H,\mathbf Q) \in \mathcal H \times \mathcal
Q\label{eqn:ch6_MDP-obj}\\
s.t. \ &  s_{l,m}\in\{0,1\}, \forall l\in \mathcal L, m\in \{1,2,\cdots, N_F\},\quad  \sum_{l=1}^N s_{l,m} =1, \quad  \forall l\in \mathcal L\nonumber \\
&\mathbf{E}^{\Omega}\left[\sum_{m=1}^{N_F}p_{l,m}\right]\leq P_l, \quad
\forall l\in \mathcal L. \nonumber
\end{align}
Observe that the M-LWDF problem in \eqref{eqn:ch6-LD-obj} is very
similar to the MDP problem in \eqref{eqn:ch6_MDP-obj} except that
the weight\footnote{Note that the factor
$\frac{\tau}{\overline{N}_l}$ represents the transformation from
packet flow (considered in Example \ref{exam:app-v}) to bit flow
(considered in ) Example \ref{Exm:LD}.} for the $l$-th link
(i.e., throughput of the $l$-th MS) in the later
case \eqref{eqn:ch6_MDP-obj} is given by the potential function
$\Delta \widetilde{V}(Q_l)$ whereas the weight in the former case
\eqref{eqn:ch6-LD-obj} is given by the queue state $Q_l$. The subcarrier allocation allocation in all the three
approaches will select the subcarrier with the highest metric. The
metric in the equivalent rate constraint approach is a function of the
CSI only whereas the metrics in the other two approaches are
functions of the CSI and the QSI.  The power control in all the
three approaches have a similar form of power water-filling w.r.t.
the CSI. However, the {\em water-level} of each link in the
equivalent rate constraint approach only adapts to the Lagrange
multiplier corresponding to the average delay constraint in
\eqref{eqn:ch3_delay-constraint} of each queue. In other words, the
water-levels of different links are different in general (with
different average delay requirements), while the water-level of the
same link remains constant during different
realtime system state $(\mathbf H (t), \mathbf Q(t))$ realizations.
However,  in the other two approaches, the water-level of each link
varies according to realtime system state $(\mathbf
H(t), \mathbf Q(t))$ realizations. Specifically, in the Lyapunov drift approach,
the water level of the $l$-th link is determined by the QSI
$Q_l(t)$. In the MDP approach, the water level of the
$l$-th link is determined by the QSI via the potential function
$\Delta \widetilde{V}(Q_l(t))$, which is obtained by solving the
Bellman equation  in (\ref{eqn:ch_5_bellman}) of Example
\ref{exam:app-v}. As a result, the major difference between the
second approach and the third approach is on the calculation of weight. In
the third  approach, additional processing is involved to compute
the potential functions and this contributes to additional
complexity.

\subsection{Comparison on Distributed Implementation}

In this part, we  discuss the feasibility of the distributed
implementations using different approaches. First of all, using the
first approach, the optimization problem will be transformed into a
standard convex optimization problem (such as in Example
\ref{exam:rate}). As such, traditional primal-dual decomposition
techniques may be used to explore distributed implementations. For
example, in Example \ref{exam:rate}, the subcarrier allocation
can be done by a distributed auction mechanism, and the power
allocation at each user can be calculated locally according to the
auction results and the local CSI. Readers could refer to
\cite{Palomar:07} for a survey on the decomposition method. On the
other hand, using the second approach, the one-hop Dynamic
Backpressure Algorithm (M-LWDF)  only requires the local system
state information at the transmitter and therefore, the M-LWDF
problem can be solved distributively. For example, in Example
\ref{Exm:LD}, the subcarrier allocation can be done by a
distributed auction mechanism where the auction bid is determined by
the local CSI and the local QSI, and power allocation at each user
can be calculated locally according to the auction results and the local
system information. However, in multi-hop networks, the QSI of the
neighboring nodes is required at each node, raising
additional signaling overhead on the
distributed implementation. Finally, using the third approach,
the  obstacle of the distributed implementation comes from the potential
function $V(\chi)$ and the transition kernel term
$\Pr[\chi'|\chi,\Omega(\chi)]$. In general, these terms are not
decomposable and this poses an additional challenge (compared with
the second approach) of getting a distributed solution using the MDP
approach. In Section \ref{sec_MDP_sub_appro_q}, we have illustrated
that by approximating  potential functions or Q-factors as the sum
of per-link potential functions or Q-factors, the
distributed solution can be obtained  via an auction mechanism.

\subsection{Comparison of Performance}
In this section, we   compare the performance of the three
approaches using the uplink OFDMA system example. For simplicity, we assume $N=3$, and the buffer length
$N_Q=5$ (packets). The scheduling slot duration $\tau=1$ ms. All the
users have the same average Poisson packet arrival rate $\lambda=3$
(packets/s), and exponential packet size distribution with mean
packet size $\overline{N}=5000$ (bits/packet). The total bandwidth
is assumed to be 10MHz, with $1024$ subcarriers and $5$ independent
subbands.

Fig. \ref{sec6:fig_delay_cmp} compares the average delay performance
of the three approaches under the same average power constraints. It can
be observed that the delay performance of the MDP approach is better
than those of the equivalent rate constraint approach and the Lyapunov drift
approach in the entire operating regime. Furthermore, Fig.
\ref{sec6:fig_delay_cmp} also illustrates that the performance of
the approximated-MDP approach is very close to the
brute-force MDP solution. As a result, the approximated-MDP approach
is an acceptable way to reduce the complexity and achieve a near
optimal performance. On the other hand, the equivalent rate
constraint approach (CSI-only policy) is the simplest solution but
the gap in the delay performance is small only in the very large
delay regime. The delay performance  (and the complexity) of
the Lyapunov stability drift approach is between those of the CSI-only
approach and the MDP approach.

\begin{figure}
\centering
\includegraphics[width = 10cm]{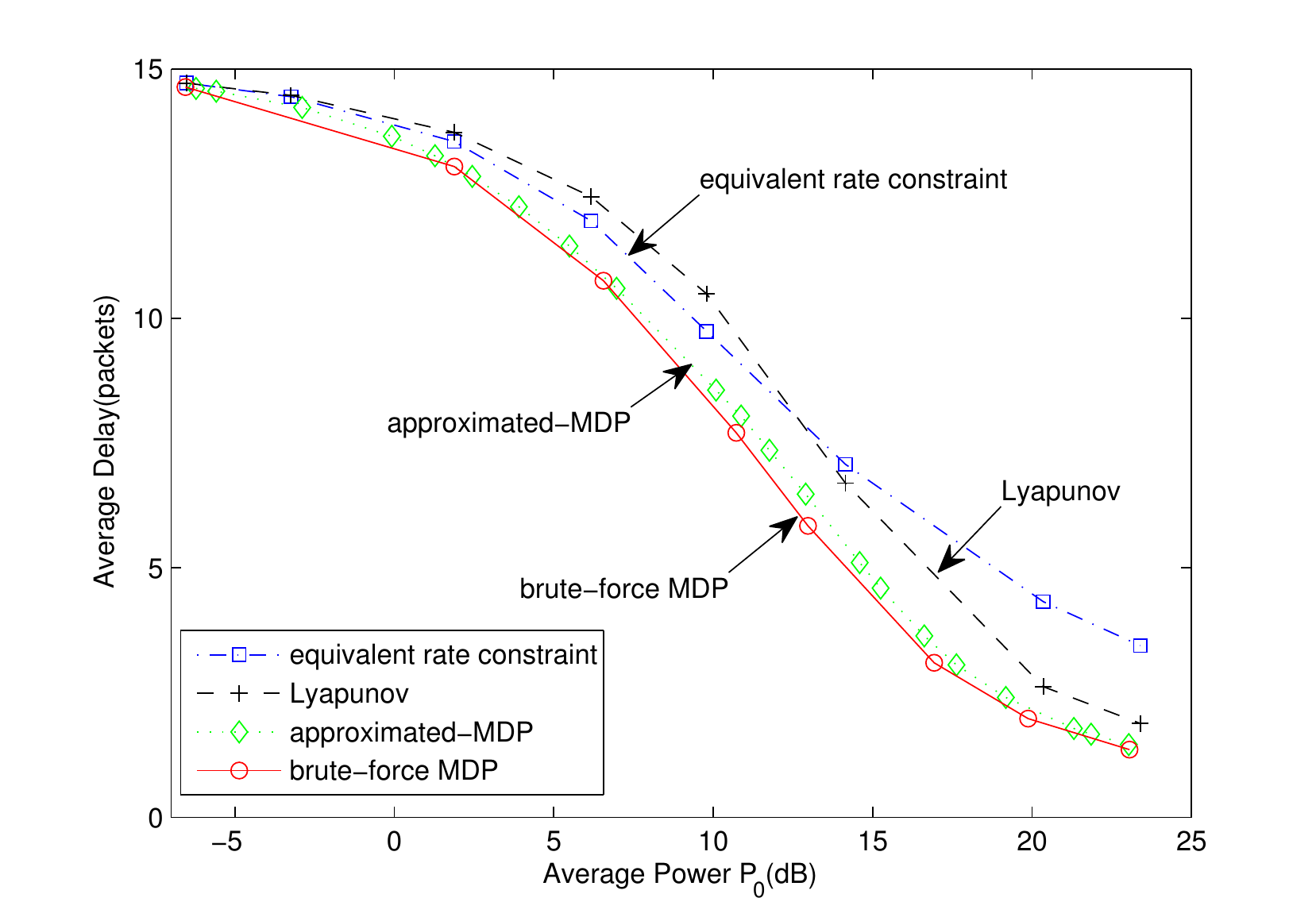}
\caption{Comparison of the delay performance of the equivalent rate constraint,  Lynapunov Stability
Drift and MDP approaches under the same average power
constraints.  The packet arrival rate is $\lambda=3$ (packets/s)
with average packet length $\overline{N}=5000$ (bits). The average
packet drop rates of all schemes are 1\%.}
\label{sec6:fig_delay_cmp}
\end{figure}

Fig. \ref{sec6:fig_num} compares the delay performance of the three
approaches  with different number of users. The average transmit SNR
for each user is   17.75dB. Similar observations about the
performance and the complexity of the three approaches can be made.

\begin{figure}
\centering
\includegraphics[width = 10cm]{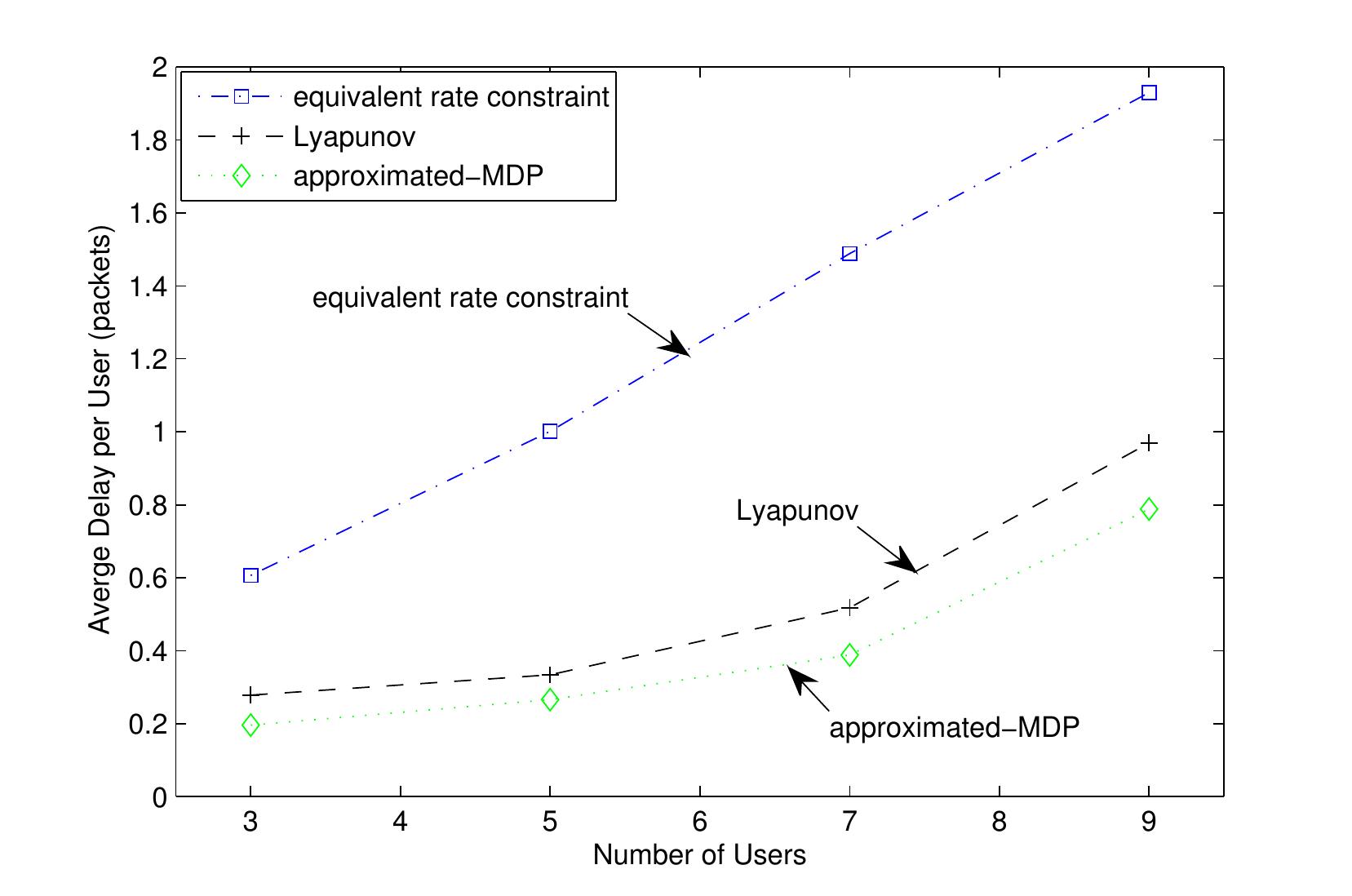}
\caption{The average delay per user versus the number of users. The average transmit power for each user is 17.75dB, and
the average Poisson packet arrival rate is $\lambda=1.5$ (packets/s) with mean packet size $\overline{N}=5000$ (bits). The
packet drop rates for all the schemes are 1\%.} \label{sec6:fig_num}
\end{figure}

Fig. \ref{sec6:fig_converge} illustrates the convergence property of
the approximate MDP approach using distributed stochastic learning. We plot the average per-link potential functions of the 3
users versus the scheduling slot index at a
transmit SNR=10dB. It can be seen that the distributed algorithm
converges quite fast. The average delay corresponding to the average
per-link potential functions at the $500$-th scheduling slot is $5.9$, which
is much smaller than those of the other baselines.

\begin{figure}
\centering
\includegraphics[width = 10cm]{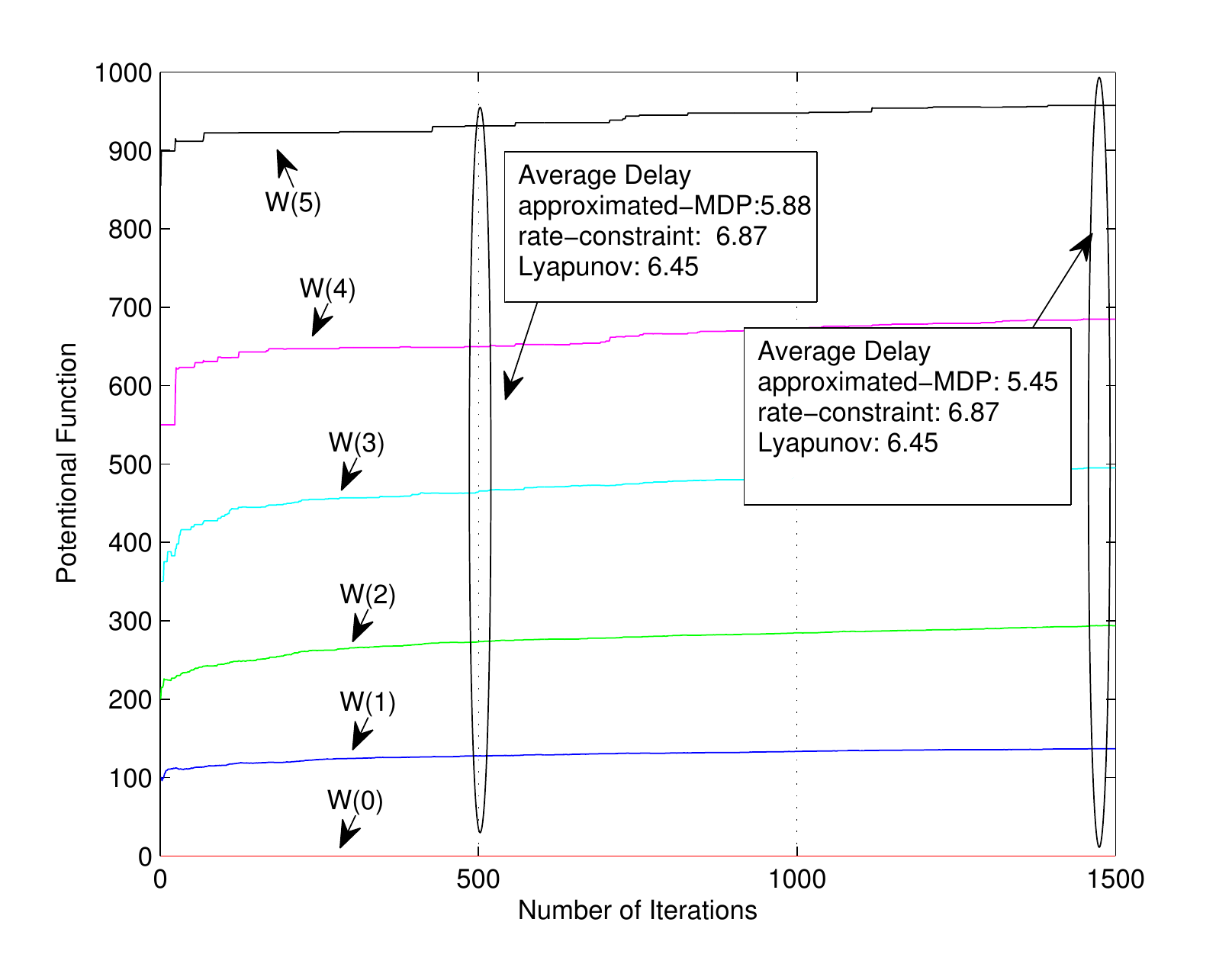}
\caption{Illustration of the convergence property for the distributed online
learning algorithm of the approximated MDP approach. Parameter
vector versus the iteration index with the average transmit SNR
14.3dB. The average Poisson packet
arrival rate is $\lambda=3$ (packets/s) with mean packet size
$\overline{N}=5000$ (bits).} \label{sec6:fig_converge}
\end{figure}

\section{Summary}
\label{sec_summary}

In this paper, we have introduced three major
approaches, namely the equivalent rate constraint approach, the
Lynapnov stability drift approach as well as the MDP approach, to
deal with delay-aware resource allocation for wireless networks. For
the MDP approach, we use the
approximated MDP and stochastic learning to solve the curse of dimensionality and facilitate
distributed online implementation. Moreover, we also elaborate
on how to use these approaches in an uplink OFDMA
system. It is shown by simulations that the equivalent rate
constraint approach performs better than the
Lynapnov stability drift approach in the large delay regime and worse
in the small delay regime, and the MDP approach has much better delay performance than the other
two schemes in all regimes.

\section*{Acknowledgment}

The authors would like to thank the anonymous reviewers, Prof. Edmund M. Yeh, Eddy Chiu, An Liu, Fan Zhang and Farrah Mckay
for their valuable comments.

\section*{Appendix A: Proof of Lemma \ref{lem:conv}}

Since each representative state is updated comparably
often\footnote{Please refer to \cite{Abounadi:98} for the definition
of ``comparably often".} in the asynchronous learning algorithm,
quoting the conclusion in \cite{Borkar:98}, the convergence property
of the asynchronous update and synchronous update is the same.
Therefore, we consider the convergence of the
related synchronous version for simplicity in this proof.

Let $c\in R$ be a constant, we have $(T c \mathbf{M}
\widetilde{\mathbf{V}}_t) (I) = c (T \mathbf{M}
\widetilde{\mathbf{V}}_t) (I)$. Similar to \cite{Borkar:00}, it is easy to see that the parameter vector
$\{\widetilde{\mathbf{V}}_t\}$ is bounded almost surely during the
iterations of the algorithm. In the following, we
first introduce and prove the following lemma on the convergence of
learning noise.

\begin{Lem} Define
\begin{equation}
\mathbf{q}_t = \mathbf{M}^{-1} \bigg[ \mathbf{g}(\Omega_t) + \mathbf{F}(\Omega_t)\mathbf{M} \widetilde{\mathbf{V}}_t -
\mathbf{M} \widetilde{\mathbf{V}}_t - (T \mathbf{M} \widetilde{\mathbf{V}}_t) (I) \mathbf{e}\bigg], \nonumber
\end{equation}
when the number of iterations $t \geq j \rightarrow \infty$, the
update procedure can be written as follows with probability 1:
\begin{equation}
\widetilde{\mathbf{V}}_{t+1} = \widetilde{\mathbf{V}}_{j} +
\sum_{i=j}^t \epsilon_t \mathbf{q}_i \nonumber.
\end{equation}\label{lem:noise}
\end{Lem}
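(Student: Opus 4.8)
The plan is to recognize the update in Step 3 of Algorithm \ref{alg:app-v} as a stochastic approximation recursion of the form \eqref{eqn:x} and to strip off its martingale noise. First I would left-multiply \eqref{eqn:update-v} by the inverse mapping matrix $\mathbf M^{-1}$ and write the recursion compactly as
\begin{equation}
\widetilde{\mathbf V}_{t+1} = \widetilde{\mathbf V}_t + \epsilon_t\big(\mathbf q_t + \delta\mathbf M_{t+1}\big), \nonumber
\end{equation}
where $\mathbf q_t$ is the conditional expectation of the bracketed increment in \eqref{eqn:update-v} given $\mathcal F_t = \sigma(\chi(s), s\le t)$, and $\delta\mathbf M_{t+1}$ is the residual. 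As already noted at the start of this appendix, since every representative state is updated comparably often, by \cite{Borkar:98} it suffices to analyze the synchronous version, so the indicator $\mathbf 1[\chi(t)=\chi(l,\tau)]$ may be dropped and the realized one-step transition replaced by its conditional law under $\Omega_t$. Carrying out this conditional expectation — using $\Pr[\mathbf H_l'|\mathbf H_l(t)]$ together with the induced queue transition so that the ``new estimate'' contribution becomes a row of $\mathbf F(\Omega_t)\mathbf M\widetilde{\mathbf V}_t$, collecting the per-stage costs into $\mathbf g(\Omega_t)$, and folding the reference-state terms $g_l(\chi^1_l,\cdot)$ and the reference-state potentials into the single scalar offset $(T\mathbf M\widetilde{\mathbf V}_t)(I)$ — yields exactly $\mathbf q_t = \mathbf M^{-1}\big[\mathbf g(\Omega_t) + \mathbf F(\Omega_t)\mathbf M\widetilde{\mathbf V}_t - \mathbf M\widetilde{\mathbf V}_t - (T\mathbf M\widetilde{\mathbf V}_t)(I)\mathbf e\big]$, which matches the claimed definition.

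Next I would argue that the accumulated noise is asymptotically negligible. Since $\{\widetilde{\mathbf V}_t\}$ is bounded almost surely along the iterations (the a priori $L^\infty$ bound, obtained as in \cite{Borkar:00} and already invoked above) and the matrices $\mathbf M$, $\mathbf M^{-1}$, $\mathbf F(\cdot)$ and the cost vector $\mathbf g(\cdot)$ are bounded, the increments in \eqref{eqn:update-v} have uniformly bounded conditional second moment; hence $\{\delta\mathbf M_{t+1}\}$ is a square-integrable martingale difference sequence with respect to $\{\mathcal F_t\}$. Combined with $\sum_t\epsilon_t^2<\infty$, the partial sums $\sum_{i=0}^{n-1}\epsilon_i\delta\mathbf M_{i+1}$ form an $L^2$-bounded martingale and therefore converge almost surely; consequently the tail sums $\sum_{i=j}^t\epsilon_i\delta\mathbf M_{i+1}$ tend to $0$ uniformly over all $t\ge j$ as $j\to\infty$, with probability $1$.

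Finally, telescoping the recursion from index $j$ to $t$ gives
\begin{equation}
\widetilde{\mathbf V}_{t+1} = \widetilde{\mathbf V}_j + \sum_{i=j}^t\epsilon_i\mathbf q_i + \sum_{i=j}^t\epsilon_i\delta\mathbf M_{i+1}, \nonumber
\end{equation}
and letting $j\to\infty$ with $t\ge j$ makes the last term vanish almost surely, which is precisely the asserted identity (the $\epsilon_t$ in the statement being a typo for $\epsilon_i$). I expect the main obstacle to be the bookkeeping in the conditional-expectation step rather than the martingale estimate: one must check carefully that the reference-state contributions in \eqref{eqn:update-v} (the $g_l(\chi^1_l,\Omega^*_{\bar t})$ and $\widetilde V_{l,t}(\mathbf H'_l,Q_l(\bar t+1))$ terms) collapse exactly onto the scalar offset $(T\mathbf M\widetilde{\mathbf V}_t)(I)\mathbf e$, and that $\mathbf M^{-1}\mathbf F(\Omega_t)\mathbf M$ is the correct lifted transition operator on the parameter coordinates; once $\mathbf q_t$ is pinned down, the rest follows from the a.s. boundedness of the iterates and feeds directly into the ODE argument of Theorem \ref{thm:sa} used in the remainder of the proof of Lemma \ref{lem:conv}.
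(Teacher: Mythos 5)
Your proposal is correct and follows essentially the same route as the paper's proof: both decompose the realized increment into the mean-field term $\mathbf q_t$ (the conditional expectation, with the observed transition replaced by $\mathbf F(\Omega_t)$ and the reference-state terms collapsing to the scalar offset $(T\mathbf M\widetilde{\mathbf V}_t)(I)$) plus a martingale-difference noise term, invoke the asynchronous-to-synchronous reduction and the a.s.\ boundedness of the iterates, and use $\sum_t \epsilon_t^2 < \infty$ to make the accumulated noise vanish. The only cosmetic difference is that you conclude via the convergence theorem for $L^2$-bounded martingales, whereas the paper bounds the conditional second moment of the noise partial sums and applies the martingale maximal inequality; the two are interchangeable here.
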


\begin{proof}
The update of parameter vector can be written in the following vector form:
\begin{equation}
\widetilde{\mathbf{V}}_{t+1} = \widetilde{\mathbf{V}}_{t} +
\epsilon_t \mathbf{M}^{-1} \bigg[ \mathbf{g}(\Omega_t) + \mathbf{J}_t
\mathbf{M} \widetilde{\mathbf{V}}_t - \mathbf{M}
\widetilde{\mathbf{V}}_t -
\bigg(g(I,\Omega_t)+(\mathbf{M}\widetilde{\mathbf{V}}_{t})(I^+)\bigg) \mathbf{e} \bigg], \nonumber
\end{equation}
where the matrix $\mathbf{J}_t$ (with exactly one element of 1 in
each row) denotes the realtime observed state transition from the
$t$-th frame to the $t+1$-th frame, and $ I^+ $ denotes the observed
next state if the current state is $ I $. Define
\begin{equation}
\mathbf{Y}_t = \mathbf{M}^{-1} \bigg[ \mathbf{g}(\Omega_t) +
\mathbf{F}(\Omega_t) \mathbf{M} \widetilde{\mathbf{V}}_t -
\mathbf{M} \widetilde{\mathbf{V}}_t - (T \mathbf{M}
\widetilde{\mathbf{V}}_t)(I) \mathbf{e} \bigg], \nonumber
\end{equation}
and $\delta \mathbf{Z}_t = \mathbf{Y}_t - \mathbf{q}_t$ and
$\mathbf{Z}_t = \sum\limits_{i=j}^t \epsilon_i \delta \mathbf{Z}_i$.
The online potential estimation can be rewritten as
\begin{eqnarray}
\widetilde{\mathbf{V}}_{t+1} &=& \widetilde{\mathbf{V}}_{t} + \epsilon_t \mathbf{Y}_t \nonumber\\
&=& \widetilde{\mathbf{V}}_{t} + \epsilon_t \mathbf{q}_t - \epsilon_t \delta \mathbf{Z}_t \nonumber \\
&=& \widetilde{\mathbf{V}}_{t} + \sum_{i=j}^t \epsilon_i
\mathbf{q}_i - \mathbf{Z}_t \label{eqn:it-V}.
\end{eqnarray}
 Our proof of Lemma \ref{lem:noise} can be divided into the following steps:
\begin{itemize}
\item[1.] Letting $\mathcal{F}_t=\sigma(\widetilde{\mathbf{V}}_m,m\leq t)$, it is easy to see that
$\mathbf{E}[\delta \mathbf{Z}_t|\mathcal{F}_{t-1}]=0$. Thus,
$\{\delta \mathbf{Z}_t|\forall t\}$ is a Martingale difference
sequence and $\{\mathbf{Z}_t|\forall t\}$ is a Martingale sequence.
Moreover, $\mathbf{Y}_t$ is an unbiased estimation
of $\mathbf{q}_t$ and the estimation noise is uncorrelated.

\item[2.] According to the uncorrelated estimation error from Step 1, we have
\begin{eqnarray}
\mathbf{E}\bigg[|\mathbf{Z}_t|^2 \bigg| \mathcal{F}_{j-1} \bigg] &=& \mathbf{E}\bigg[|\sum\limits_{i=j}^t \epsilon_i \delta \mathbf{Z}_i|^2 \bigg|\mathcal{F}_{j-1} \bigg] \nonumber\\
&=& \sum\limits_{i=j}^t \mathbf{E}\bigg[|\epsilon_i \delta \mathbf{Z}_i|^2\bigg|\mathcal{F}_{j-1}\bigg] \nonumber\\
&=& \widetilde{\mathbf{Z}} \sum\limits_{i=j}^t (\epsilon_i)^2
\rightarrow 0 \quad \mbox{when} \quad j\rightarrow \infty, \nonumber
\end{eqnarray}
where $\widetilde{\mathbf{Z}}\geq \max\limits_{j\leq i \leq
t}\mathbf{E}\bigg[|\delta \mathbf{Z}_i|^2
\bigg|\mathcal{F}_{j-1}\bigg]$ is a bounded constant vector and the
convergence of $\widetilde{\mathbf{Z}} \sum\limits_{i=j}^t
(\epsilon_i)^2$ is from the definition of sequence $\{\epsilon_i\}$.

\item[3.] From Step 1, $\{\mathbf{Z}_t|\forall t\}$ is a Martingale sequence. Hence, according to the inequality of Martingale sequence, we have
\begin{equation}
\Pr\bigg[\sup\limits_{j \leq i \leq t } |\mathbf{Z}_i| \geq \lambda
\bigg| \mathcal{F}_{j-1}\bigg] \leq
\frac{\mathbf{E}\bigg[|\mathbf{Z}_t|^2\bigg|
\mathcal{F}_{j-1}\bigg]}{\lambda^2} \quad \forall \lambda>0.
\nonumber
\end{equation}
From the conclusion of Step 2, we have
\begin{equation}
\lim_{j\rightarrow \infty} \Pr\bigg[\sup\limits_{j \leq i \leq t }
|\mathbf{Z}_i| \geq \lambda \bigg| \mathcal{F}_{j-1} \bigg] =0 \quad
\forall \lambda>0. \nonumber
\end{equation}
Hence, from (\ref{eqn:it-V}) we almost surely have
$\widetilde{\mathbf{V}}_{t+1} = \widetilde{\mathbf{V}}_{j} +
\sum_{i=j}^t \epsilon_i \mathbf{q}_i $ when $j\rightarrow \infty$.
\end{itemize}
\end{proof}

Moreover, the following lemma is about the limit of sequence $\{\mathbf{q}_t\}$.

\begin{Lem} Suppose the following two inequalities are true for $l=a,a+1,...,a+b$
\begin{eqnarray}
\mathbf{g}(\Omega_l) + \mathbf{F}(\Omega_l) \mathbf{M} \widetilde{\mathbf{V}}_l &\leq &
\mathbf{g}(\Omega_{l-1}) + \mathbf{F}(\Omega_{l-1}) \mathbf{M} \widetilde{\mathbf{V}}_l \label{eqn:l}\\
\mathbf{g}(\Omega_{l-1}) + \mathbf{F}(\Omega_{l-1}) \mathbf{M} \widetilde{\mathbf{V}}_{l-1} &\leq &
\mathbf{g}(\Omega_{l}) + \mathbf{F}(\Omega_{l}) \mathbf{M} \widetilde{\mathbf{V}}_{l-1} \label{eqn:l-1},
\end{eqnarray}
then we have
\begin{equation}
|q^i_{a+b}|\leq C_1 \prod_{i=0}^{\lfloor \frac{b}{\beta} \rfloor -
1}(1-\delta_{a+i \beta}) \quad \forall i, \label{eqn:con_cov}
\end{equation}
where $q^i_{a+b}$ denotes the $i$th element of the vector $\mathbf{q}_{a+b}$, $C_1$ is some constant.
\end{Lem}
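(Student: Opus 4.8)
The idea is to turn the two hypotheses \eqref{eqn:l}--\eqref{eqn:l-1} into a one--step linear recursion for the residual vectors $\mathbf{q}_l$ whose coefficient matrices are \emph{exactly} the $\mathbf{A}_l,\mathbf{B}_l$ of \eqref{eqn:ch5-cov-condition}, and then to run a ``reference--state'' contraction over blocks of length $\beta$. Recall from Lemma~\ref{lem:noise} that $\mathbf{q}_l=\mathbf{M}^{-1}[\mathbf{g}(\Omega_l)+\mathbf{F}(\Omega_l)\mathbf{M}\widetilde{\mathbf{V}}_l-\mathbf{M}\widetilde{\mathbf{V}}_l-(T\mathbf{M}\widetilde{\mathbf{V}}_l)(I)\,\mathbf{e}]$ and $\widetilde{\mathbf{V}}_{l+1}=\widetilde{\mathbf{V}}_l+\epsilon_l\mathbf{q}_l$. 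Substituting this increment into \eqref{eqn:l} and \eqref{eqn:l-1} written at index $l+1$, using $\mathbf{M}^{-1}\mathbf{M}=\mathbf{I}$ on the parameter space (which holds because $\widetilde{V}_{l'}(\chi^1_{l'})=0$), and eliminating $\mathbf{g}(\Omega_l)+\mathbf{F}(\Omega_l)\mathbf{M}\widetilde{\mathbf{V}}_l$ via the definition of $\mathbf{q}_l$, one obtains that $\mathbf{q}_{l+1}$ is squeezed, component by component, between $\mathbf{A}_l\mathbf{q}_l+c_l\mathbf{e}$ and $\mathbf{B}_l\mathbf{q}_l+c_l\mathbf{e}$, where the scalar $c_l\triangleq(T\mathbf{M}\widetilde{\mathbf{V}}_l)(I)-(T\mathbf{M}\widetilde{\mathbf{V}}_{l+1})(I)$ is the \emph{same} in both bounds. (The hypotheses merely say that $\Omega_l$ and $\Omega_{l-1}$ are greedy for $\widetilde{\mathbf{V}}_l$ and $\widetilde{\mathbf{V}}_{l-1}$ respectively, so this squeeze holds along the whole trajectory for $l=a,\dots,a+b-1$.)

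Next I would record the two structural facts that drive the contraction. First, $\widetilde{V}_l(\chi^1_l)=0$ for all $l$, so the reference--state coordinate $q^I_l$ vanishes identically; hence $\max_i|q^i_l|\le\mathrm{span}(\mathbf{q}_l):=\max_i q^i_l-\min_i q^i_l$. Second, $\mathbf{A}_l$ and $\mathbf{B}_l$ are nonnegative and inherit from $\mathbf{F}(\Omega)$ the (sub)stochastic--type structure needed for a span estimate: a single step is span--nonexpansive and the additive term $c_l\mathbf{e}$ is annihilated by the span (one uses $\mathbf{B}_l(\cdot+c\mathbf{e})=\mathbf{B}_l(\cdot)+c\mathbf{e}$ to telescope the shifts). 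Iterating the squeeze over the block $a,a+1,\dots,a+\beta-1$ then traps $\mathbf{q}_{a+\beta}$ between $(\mathbf{A}_{a+\beta-1}\cdots\mathbf{A}_a)\mathbf{q}_a+C\mathbf{e}$ and $(\mathbf{B}_{a+\beta-1}\cdots\mathbf{B}_a)\mathbf{q}_a+C\mathbf{e}$ for a common scalar $C$. The accessibility hypothesis says the $(i,I)$ entry of each $\beta$--step product is at least $\delta_a$; combining this with the pinning $q^I_a=0$ (equivalently, a Dobrushin--coefficient estimate between the two rows) gives $\mathrm{span}(\mathbf{q}_{a+\beta})\le(1-\delta_a)\,\mathrm{span}(\mathbf{q}_a)$.

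Applying the block estimate $\lfloor b/\beta\rfloor$ times (the $i$-th block using $\delta_{a+i\beta}$) and absorbing the remaining $b-\beta\lfloor b/\beta\rfloor<\beta$ span--nonexpansive steps yields $\max_i|q^i_{a+b}|\le\mathrm{span}(\mathbf{q}_{a+b})\le\big(\prod_{i=0}^{\lfloor b/\beta\rfloor-1}(1-\delta_{a+i\beta})\big)\mathrm{span}(\mathbf{q}_a)$. Since $\widetilde{\mathbf{V}}_t$ is bounded almost surely (as already noted above), so is $\mathbf{q}_t$, and setting $C_1\triangleq\sup_t\mathrm{span}(\mathbf{q}_t)<\infty$ a.s.\ gives \eqref{eqn:con_cov}. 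The main obstacle is the first step: the bookkeeping with $\mathbf{M}$ and $\mathbf{M}^{-1}$ must be done carefully --- one must verify $\mathbf{M}^{-1}\mathbf{M}=\mathbf{I}$ on the parameter space, check that the residual genuinely obeys the squeeze with the matrices $\mathbf{A}_l,\mathbf{B}_l$ exactly as in \eqref{eqn:ch5-cov-condition}, that the leftover term is a true multiple of the all--ones vector, and that these matrices possess the (sub)stochastic--type structure that makes the reference--state/Dobrushin contraction applicable; once that is in place, everything downstream is the routine span--contraction argument.
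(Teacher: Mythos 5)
Your plan is essentially the paper's own argument: the paper likewise uses the greedy hypotheses \eqref{eqn:l}--\eqref{eqn:l-1} together with $\widetilde{\mathbf{V}}_l=\widetilde{\mathbf{V}}_{l-1}+\epsilon_{l-1}\mathbf{q}_{l-1}$ to squeeze $\mathbf{q}_l$ between $\mathbf{A}_{l-1}\mathbf{q}_{l-1}+(w_{l-1}-w_l)\mathbf{e}$ and $\mathbf{B}_{l-1}\mathbf{q}_{l-1}+(w_{l-1}-w_l)\mathbf{e}$, telescopes the common $\mathbf{e}$-shifts over blocks of length $\beta$, and then extracts the factor $(1-\delta_l)$ from the $(i,I)$-entry condition via a $\max$/$\min$ (span) contraction with the reference coordinate pinned at zero, exactly as you describe. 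The only differences are cosmetic (you phrase the contraction as a Dobrushin-coefficient estimate and explicitly flag the leftover $b-\beta\lfloor b/\beta\rfloor$ steps and the row-sum bookkeeping, which the paper glosses over), so your proposal matches the intended proof.
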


\begin{proof}
From (\ref{eqn:l}) and (\ref{eqn:l-1}), we have
\begin{equation}
\mathbf{q}_l =  \mathbf{M}^{-1} \bigg[ \mathbf{g}(\Omega_l) + \mathbf{F}(\Omega_l)\mathbf{M} \widetilde{\mathbf{V}}_l -
\mathbf{M} \widetilde{\mathbf{V}}_l - w_l \mathbf{e}\bigg] \leq \mathbf{M}^{-1} \bigg[ \mathbf{g}(\Omega_{l-1}) +
\mathbf{F}(\Omega_{l-1}) \mathbf{M} \widetilde{\mathbf{V}}_l - \mathbf{M} \widetilde{\mathbf{V}}_l - w_l
\mathbf{e}\bigg] \nonumber
\end{equation}
\begin{eqnarray}
\mathbf{q}_{l-1} &=&  \mathbf{M}^{-1} \bigg[ \mathbf{g}(\Omega_{l-1}) + \mathbf{F}(\Omega_{l-1}) \mathbf{M}
\widetilde{\mathbf{V}}_{l-1} - \mathbf{M} \widetilde{\mathbf{V}}_{l-1} - w_{l-1} \mathbf{e}\bigg] \nonumber \\
&\leq&  \mathbf{M}^{-1} \bigg[ \mathbf{g}(\Omega_{l}) + \mathbf{F}(\Omega_{l}) \mathbf{M} \widetilde{\mathbf{V}}_{l-1}
- \mathbf{M} \widetilde{\mathbf{V}}_{l-1} - w_{l-1} \mathbf{e}\bigg] \nonumber
\end{eqnarray}
where $w_l = (T \mathbf{M} \widetilde{\mathbf{V}}_l)(I)$. According to Lemma \ref{lem:noise}, we have
\begin{equation}
\widetilde{\mathbf{V}}_l = \widetilde{\mathbf{V}}_{l-1} +
\epsilon_{l-1} \mathbf{q}_{l-1} \Rightarrow \widetilde{\mathbf{V}}_l
= \widetilde{\mathbf{V}}_{l-1} + \epsilon_{l-1}
\mathbf{q}_{l-1}\nonumber,
\end{equation}
therefore,
\begin{eqnarray}
\mathbf{q}_l &\leq& \bigg[(1-\epsilon_{l-1})\mathbf{I} +
\mathbf{M}^{-1} \mathbf{F}(\Omega^{l-1}) \mathbf{M} \epsilon_{l-1}
\bigg] \mathbf{q}_{l-1} + w_{l-1} \mathbf{e} - w_l \mathbf{e}
  = \mathbf{B}_{l-1} \mathbf{q}_{l-1} + w_{l-1} \mathbf{e} - w_l \mathbf{e} \nonumber\\
\mathbf{q}_l &\geq & \bigg[(1-\epsilon_{l-1})\mathbf{I} +
\mathbf{M}^{-1} \mathbf{F}(\Omega^{l}) \mathbf{M} \epsilon_{l-1}
\bigg] \mathbf{q}_{l-1} + w_{l-1} \mathbf{e} - w_l \mathbf{e} =
\mathbf{A}_{l-1} \mathbf{q}_{l-1} + w_{l-1} \mathbf{e} - w_l
\mathbf{e}.\nonumber
\end{eqnarray}
Notice that
\begin{equation}
\mathbf{A}_{l-1}\mathbf{e} = (1-\epsilon_{l-1})\mathbf{I} \mathbf{e}
+  \mathbf{M}^{-1} \mathbf{F}(\Omega^{l}) \mathbf{M} \epsilon_{l-1}
\mathbf{e} = (1-\epsilon_{l-1})\mathbf{e} + L\epsilon_{l-1}
\mathbf{e} \nonumber
\end{equation}
\begin{equation}
\mathbf{B}_{l-1}\mathbf{e} = (1-\epsilon_{l-1})\mathbf{I} \mathbf{e}
+  \mathbf{M}^{-1} \mathbf{F}(\Omega^{l-1}) \mathbf{M}
\epsilon_{l-1} \mathbf{e} = (1-\epsilon_{l-1})\mathbf{e} +
L\epsilon_{l-1} \mathbf{e} \nonumber,
\end{equation}
where $ L $ is the total number of links in the network. Notice that $\mathbf{A}_{l-1}\mathbf{e}=\mathbf{B}_{l-1}\mathbf{e}$,
we have
\begin{eqnarray}
&&\mathbf{A}_{l-1}...\mathbf{A}_{l-\beta} \mathbf{q}_{l-\beta} - C_1
\mathbf{e} \leq \mathbf{q}_l \leq
\mathbf{B}_{l-1}...\mathbf{B}_{l-\beta} \mathbf{q}_{l-\beta} - C_1 \mathbf{e} \nonumber\\
&\Rightarrow& (1-\delta_l) [\min \mathbf{q}_{l-\beta}]\leq
\mathbf{q}_l + C_1 \mathbf{e} \leq (1-\delta_l) [\max
\mathbf{q}_{l-\beta}]
\nonumber\\
&\Rightarrow& \begin{cases}
    \max \mathbf{q}_l + C_1 \leq (1-\delta_l) \max \mathbf{q}_{l-\beta}  \nonumber\\
    \min \mathbf{q}_l + C_1 \geq (1-\delta_l) \min \mathbf{q}_{l-\beta}\end{cases} \nonumber\\
&\Rightarrow& \max \mathbf{q}_l - \min \mathbf{q}_l \leq
(1-\delta_l) \bigg[ \max \mathbf{q}_{l-\beta} - \min
\mathbf{q}_{l-\beta}\bigg]
\nonumber\\
&\Rightarrow & |q^l_i| \leq \max \mathbf{q}_l - \min \mathbf{q}_l
\leq C_2 (1-\delta_l) \quad \forall i \nonumber,
\end{eqnarray}
where the first step is due to conditions on matrix sequence
$\{\mathbf{A}_l\}$ and $\{\mathbf{B}_l\}$, $\max \mathbf{q}_l$ and
$\min \mathbf{q}_l$ denote the maximum and minimum elements in
$\mathbf{q}_l$ respectively, $C_1$ and $C_2$ are all constants, the
first inequality of the last step is because $\min \mathbf{q}_l \leq
0$. Hence, the conclusion is straightforward.
\end{proof}

Therefore, the proof of Lemma \ref{lem:conv} can be divided into the following steps:

\begin{itemize}
\item[1.] From the property of sequence $\{\epsilon_t\}$, we have $\prod_{i=0}^{\lfloor \frac{t}{\beta} \rfloor - 1}(1-\epsilon_{i \beta}) \rightarrow 0$ ($t \rightarrow
\infty$).
\item[2.] According to the first step, note that $\delta_t=\mathcal{O}(\epsilon_t)$, from (\ref{eqn:con_cov}), we have $\mathbf{q}_t \rightarrow 0$ ($t \rightarrow \infty$).
\item[3.] Therefore, the update on $\{\widetilde{\mathbf{V}}_l\}$ will converge to $\widetilde{\mathbf{V}}_{\infty}$, which satisfies the following fixed-point equation
\begin{equation}
\theta \mathbf{e} + \widetilde{\mathbf{V}}_{\infty} = \mathbf{M}^{-1} \mathbf{T} ( \mathbf{M}
\widetilde{\mathbf{V}}_{\infty} ). \nonumber
\end{equation}
\end{itemize}
This completes the proof.

%

 \begin{biographynophoto}{Ying~Cui (S'08)}
received B.Eng degree (first class honor) in Electronic and
Information Engineering, Xi'an Jiaotong University, China in 2007.
She is currently a Ph.D candidate in the Department of ECE, the Hong
Kong University of Science and Technology (HKUST). Her current
research interests include cooperative and cognitive communications,
delay-sensitive cross-layer scheduling as well as stochastic
approximation and Markov Decision Process.
\end{biographynophoto}

\begin{biographynophoto}{Vincent~K.~N.~Lau (SM'01)}
obtained B.Eng (Distinction 1st Hons) from the University of Hong
Kong in 1992 and Ph.D. from Cambridge University in 1997. He was
with PCCW as system engineer from 1992-1995 and Bell Labs - Lucent
Technologies as member of technical staff from 1997-2003. He then
joined the Department of ECE, HKUST and is currently a Professor. His research interests include the robust and delay-sensitive
cross-layer scheduling, cooperative and cognitive communications as
well as stochastic optimization.
\end{biographynophoto}

\begin{biographynophoto}{Rui~Wang (S'04-M'09)}
received B.Eng degree (first class honor) in Computer Science from
the University of Science and Technology of China in 2004 and Ph.D
degree in the Department of ECE from HKUST in 2008. He is currently
a senior research engineer in Huawei Technologies Co. Ltd.
\end{biographynophoto}

\begin{biographynophoto}{Huang~Huang (S'08)}
received the B.Eng. and M.Eng. (Gold medal) from the Harbin
Institute of Technology(HIT) in 2005 and 2007, respectively, and Ph.D from
The Hong Kong University of Science and Technology (HKUST) in 2011. He is
currently a research engineer in Huawei Technologies Co. Ltd.
\end{biographynophoto}

\begin{biographynophoto}{Shunqing~Zhang (S'05-M'09)}
obtained B.Eng from Fudan University and Ph.D. from Hong Kong University
of Science and Technology (HKUST) in 2005 and 2009, respectively. He joined Huawei
Technologies Co., Ltd in 2009, where he is now the system engineer of Green Radio Excellence
in Architecture and Technology (GREAT) team. His current research interests include the
energy consumption modeling of the wireless system, the energy efficient wireless
transmissions as well as the energy efficient network architecture and protocol design.
\end{biographynophoto}

\end{document}